\documentclass{amsart}

\usepackage[latin1]{inputenc}
\usepackage{scrtime}

\usepackage{amsmath,amssymb,amscd,latexsym,color}
\usepackage{verbatim}
\usepackage[dvipdfmx]{graphicx}


\usepackage[english]{babel}

\newcommand{\bor}[1][]{\mathcal{B}(#1)}

\newcommand{\N}{\mathbb N}

\newcommand{\Z}{\mathbb{Z}}


\newcommand{\Zd}{\mathbb{Z}^d}

\newcommand{\R}{\mathbb{R}}
\newcommand{\Rd}{\mathbb{R}^d}

\renewcommand{\P}{\mathbb{P}}
\newcommand{\E}{\mathbb{E}}

\newcommand{\Edo}{\overrightarrow{\mathbb{E}}^d}
\newcommand{\Eddo}{\overrightarrow{\mathbb{E}}^{d+1}_{\text{alt}}}
\newcommand{\Vddo}{{\mathbb{V}}^{d+1}}

\renewcommand{\epsilon}{\varepsilon}
\renewcommand{\phi}{\varphi}
\renewcommand{\limsup}{\overline{\lim}}

\newcommand{\resp}{\emph{resp. }}

\newcommand{\miniop}[3]{%
\renewcommand{\arraystretch}{0.6}
\begin{array}{c}
{\scriptstyle #1}\\
#2\\
{\scriptstyle #3}
\end{array}
\renewcommand{\arraystretch}{1}}

\newcommand{\Card}[1]{\vert #1 \vert}

\newcommand{\1}{1\hspace{-1.3mm}1}

\newcommand{\gestoch}{\succeq}

\newcommand{\pcfleche}{\overrightarrow{p_c}}
\newcommand{\pcdir}{\overrightarrow{p_c}^{\text{alt}}}



\begin{document}

{
\newtheorem{theorem}{Theorem}[section]
\newtheorem{conjecture}[theorem]{Conjecture}

}
\newtheorem{conjecture-dsk}[theorem]{Conjecture}

\newtheorem{lemme}[theorem]{Lemma}
\newtheorem{defi}[theorem]{Definition}
\newtheorem{coro}[theorem]{Corollary}
\newtheorem{rema}[theorem]{Remark}
\newtheorem{propo}[theorem]{Proposition}
\newtheorem{hyp}{Assumptions}

\newcommand{\T}[2]{{#1}.{#2}} 

\title[Growth of bacteria in a dynamical hostile environment]{Growth of a population of bacteria in a dynamical hostile environment}

{
\author{Olivier Garet}
\address{Université de Lorraine, Institut \'Elie Cartan de Lorraine, UMR 7502, Vandoeuvre-l{\`e}s-Nancy, F-54506, France\\
CNRS, Institut \'Elie Cartan de Lorraine, UMR 7502, Vandoeuvre-l{\`e}s-Nancy, F-54506, France\\}
\email{Olivier.Garet@univ-lorraine.fr}
\author{R{\'e}gine Marchand}
\email{Regine.Marchand@univ-lorraine.fr}

}

\def\motsclefs{contact process, directed percolation, renormalization, stochastic domination, random environment, block construction, interacting particle system.}

\subjclass[2000]{60K35, 82B43.}
\keywords{\motsclefs}

\begin{abstract}
We study the growth of a population of bacteria in a dynamical hostile environment corresponding to the immune system of the colonised organism. The immune cells evolve as subcritical open clusters of oriented percolation and are perpetually reinforced by an immigration process, while the bacteria try to grow as a supercritical oriented percolation in the remaining empty space.
We prove that the population of bacteria grows linearly as soon as it survives.  In this perspective, we build general tools to study dependent percolation models issued from renormalization processes.
\end{abstract}

{\maketitle
}
\setcounter{tocdepth}{1}

\section{A growth model in dynamical hostile environment}
We consider the following discrete time interacting particle system: at time $n=0$, a particularly fertile bacterium (represented here by a type 1 particle) is submerged in a population of immune cells (type 2 particles) that are going to impede its development. The immune cells are not very fertile but benefit from a constant immigration process. Our aim is to find conditions that ensure, when the bacteria survive, that their growth is linear.

Our system is described by a discrete time Markov chain taking its values in $\{0,1,2\}^{\Zd}$, depending on 3 parameters $p,q,\alpha\in (0,1)$. The time is indexed by $\N=\{0,1,2,\dots\}$ and we also note $\N^*=\{1,2,3,\dots\}$. The transition between two states is in two steps. First, between time~$n$ and time $n+1/2$, each particle tries to colonize its neighbor sites: it succeeds with probability $p$ if it is a type 1 particle, and with probability $q$ if it is a type 2 particle. All events are independent, and in case of conflict, the type 2 particle wins. Next, between time  $n+1/2$ and time $n+1$, the immigration of type 2 particles occurs: on each site, a type 2 particle appears with probability $\alpha>0$, possibly taking the place of the particle previously occupying the site. Once again, all events are independent.

In the degenerate case where $q=0$ and $\alpha=0$, we recover independent oriented percolation with parameter $p$, which provides a simple model for the spread of an infection. 
By classical arguments, there exists a critical probability $\pcdir(d+1)$ for the possibility for  independent oriented percolation on $\Zd \times \N$ to grow infinitely. Of course, we choose $p>\pcdir(d+1)$ to avoid the almost sure extinction of the bacteria in the absence of immune cells. Hence, if $q=0$ and $\alpha=0$, we know that the bacteria survive with positive probability, and when they survive, their growth is linear. These results have been proved for the supercritical contact process by Bezuidenhout--Grimmett~\cite{MR1071804} and Durrett~\cite{MR1117232}, and can readily be transposed for supercritical independent oriented percolation.

On the contrary,  we take $q<\pcdir(d+1)$, which corresponds to the poor virulence of type 2 particles. However, the constant immigration rate $\alpha$ guarantees that type 2 particles are always present in the organism.  

Let us now describe the model more formally. We work, for $d \ge1$, on the following graph: 
\begin{itemize}
\item The set of sites is $\Vddo=\{(z,n)\in \Zd \times \N\}$.
\item We put an oriented edge from $(z_1,n_1)$ to $(z_2,n_2)$ if and only if $n_2=n_1+1$ and $\|z_2-z_1\|_1\le1$; the set of these edges is denoted by $\Eddo$. 
\end{itemize}
Define $\Edo$ in the following way: in $\Edo$, there is an oriented edge between two points $z_1$ and $z_2$ in $\Zd$ if and only if  $\|z_1-z_2\|_1\le 1$.
The oriented edge in $\Eddo$  from $(z_1,n_1)$ to $(z_2,n_2)$ can be identified with the couple $((z_1,z_2),n_2)\in\Edo\times\N^*$. Thus, we identify $\Eddo$ and $\Edo\times\N^*$.

We set $\tilde{\Omega}=\{0,1\}^{\Edo}\times\{0,1\}^{\Edo}\times\{0,1\}^{\Zd}$ and we endow the set $\Omega=\tilde{\Omega}^{\N^*}$ with its Borel $\sigma$-algebra for the product topology. We consider the probability $\P=\P_{p,q,\alpha}=\nu^{\otimes\N^*}$, where  $$\nu=\nu_{p,q,\alpha}=\mathcal{B}(p)^{\otimes\Edo}\otimes \mathcal{B}(q)^{\otimes\Edo}\otimes\mathcal{B}(\alpha)^{\otimes \Zd}$$ and where  $\mathcal{B}(p)$ stands for the Bernoulli law with parameter $p$.

Starting from the initial configuration $x\in\{0,1,2\}^{\Zd}$, we define the Markov chain
$(\eta^x_n)_{n \ge 0}$ taking its values in $\{0,1,2\}^{\Zd}$
by
$$\eta^x_0=x \text{ and } \eta^x_{n+1}  =  f(\eta^x_n,\omega_{n+1})$$
where $f: \{0,1,2\}^{\Zd} \times \tilde{\Omega} \to \{0,1,2\}^{\Zd}$ is defined as follows:
\begin{eqnarray*}
& &f(x,((\omega_1^e)_{e\in\Edo},(\omega_2^e)_{e\in\Edo},(\omega_3^k)_{k\in\Zd})) \\ 
& = & \left(\max \left\{ 
\begin{array}{c}
2\omega_3^k, \\
2\max(\omega_2^{(i,k)}: \, \|i-k\|_1\le 1, \, x_i=2), \\
\max(\omega_1^{(i,k)}: \,\|i-k\|_1\le 1, \,x_i=1)
\end{array}
\right\}\right)_{k\in\Zd}.
\end{eqnarray*}
Note that type 2 particles do not see type 1 particles in their evolution, which explains why type 2 particles are assimilated to an environment.
Considering two disjoint subsets $E_1,E_2$ of $\Zd$ that represent the initial sets occupied by type $1$ and type $2$ particles, we also use the notation $\eta^{E_1,E_2}_n=\eta^{\1_{E_1}+2\1_{E_2}}_n$.
We denote by $\eta^{E_1,E_2}_{1,n}$ (\resp $\eta^{E_2}_{2,n}$) the set of sites occupied by type 1 particles  (\resp by type~2 particles) at time $n$, and we consider the evolution of the bacteria population $(\eta^{\{0\},\varnothing}_{1,n})_{n \ge 0}$: can this process survive ?
 Does it grow linearly when it survives ? We naturally introduce the following extinction time and hitting times: 
\begin{eqnarray*}
 \tau^{E_1,E_2}_1 & = & \inf\{n\ge 0: \; \eta^{E_1,E_2}_{1,n}=\varnothing\}; \\
\forall y \in \Zd \quad t^{E_1,E_2}_1(y) & = & \inf\{n\ge 0: \; y\in\eta^{E_1,E_2}_{1,n}\}.
\end{eqnarray*}
Note that $\alpha \mapsto \P_{p,q,\alpha}(\tau_1^{E_1,E_2}=+\infty)$ is non-increasing and exhibits a phase transition.
We first prove that  this phase transition does not depend on the initial configuration $E_2$ of the environment: 
\begin{theorem}
\label{sansnom}
For every $p>\pcdir(d+1)$ and every $q<\pcdir(d+1)$, 
$$\forall \alpha\in [0,1]\quad\P_{p,q,\alpha}(\tau_1^{0,\Zd\backslash\{0\}}=+\infty)>0\quad \iff \quad \P_{p,q,\alpha}(\tau_1^{0,\varnothing}=+\infty)>0.$$
\end{theorem}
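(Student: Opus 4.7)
\emph{Plan.} For the easy direction $(\Rightarrow)$, I would use a standard monotonic coupling: driving both processes with the same Bernoulli variables $(\omega_1,\omega_2,\omega_3)$ and inducting on $n$ via the explicit formula for $f$, one checks that $E_2 \subseteq E_2'$ implies $\eta_{2,n}^{E_2}\subseteq \eta_{2,n}^{E_2'}$ and consequently $\eta_{1,n}^{\{0\},E_2'}\subseteq \eta_{1,n}^{\{0\},E_2}$ pointwise. Specializing to $E_2=\varnothing$, $E_2'=\Zd\setminus\{0\}$ gives $\{\tau_1^{\{0\},\Zd\setminus\{0\}}=\infty\}\subseteq \{\tau_1^{\{0\},\varnothing}=\infty\}$, which yields the implication.

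The reverse direction is the substantial one. Assume $p_0:=\P(\tau_1^{\{0\},\varnothing}=\infty)>0$. The plan is based on the idea that the extra type-$2$ particles coming from the hostile initial condition dissipate quickly, since $q$-reproduction is subcritical. Keeping the coupling above, let
$$
D_n := \eta_{2,n}^{\Zd\setminus\{0\}}\setminus\eta_{2,n}^{\varnothing}.
$$
A site $k\in D_n$ requires a $q$-open oriented path of length $n$ from $(\Zd\setminus\{0\})\times\{0\}$ to $(k,n)$: any intermediate immigration event along such a path would force $k\in\eta_{2,n}^{\varnothing}$ as well. Since $q<\pcdir(d+1)$ is subcritical, classical exponential decay of subcritical connectivity gives $\P(k\in D_n)\le C_1 e^{-C_2 n}$, uniformly in $k$. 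Since $\eta_{1,n}^{\{0\},\varnothing}$ is contained in the $\ell^1$-ball of radius $n$, the expected number of sites of $D_n$ meeting a $1$-neighborhood of $\eta_{1,n}^{\{0\},\varnothing}$ is $O(n^d e^{-C_2 n})$, hence summable in $n$. For $T$ large enough I therefore get
$$
\P\bigl(\exists\, n\ge T:\ D_n \text{ meets the 1-neighborhood of } \eta_{1,n}^{\{0\},\varnothing}\bigr)\;<\;p_0/2,
$$
and combining this with positivity of $p_0$, I obtain a positive-probability event $\mathcal{E}$ on which the empty-env bacteria survive \emph{and} $D_n$ stops feeding the loss set $\delta_n := \eta_{1,n}^{\{0\},\varnothing}\setminus\eta_{1,n}^{\{0\},\Zd\setminus\{0\}}$ past time $T$.

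To finish, I would restart at time $T$: by the Markov property and translation invariance, on a subevent of $\mathcal{E}$ I locate a spacetime point $(y,T)$ where the hostile-env bacterium is alive and whose local environment on a large box coincides with $\eta_{2,T}^{\varnothing}$, and then reduce survival after $T$ to the known $p_0>0$ via comparison with the empty-env process started from $\{y\}$. A short-time bound---for instance the self-loop estimate $\P(0\in\eta_{1,n}^{\{0\},\Zd\setminus\{0\}}\text{ for all }n\le T)\ge ((1-\alpha)p(1-q)^{2d})^T>0$---handles the initial hostile period $[0,T]$.

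The main obstacle will be controlling the loss set $\delta_n$: the coupling alone does not stop $\delta_n$ from growing \emph{internally}, since a hostile-env bacterium missing at some step can fail to be replaced by its remaining type-$1$ neighbors even once $D_n$ no longer interferes. Handling this rigorously calls for a block-construction / renormalization argument in the spirit of Bezuidenhout--Grimmett---precisely the type of tool the paper announces developing in its introduction.
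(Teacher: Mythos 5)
Your $(\Rightarrow)$ direction via monotone coupling is correct, and your observation that $\P(k\in D_n)$ decays exponentially by subcriticality of $q$ is the right starting point. But the plan for $(\Leftarrow)$ has a genuine gap, which you yourself flag: the restart at time $T$ does not go through because the loss set $\delta_n=\eta_{1,n}^{0,\varnothing}\setminus\eta_{1,n}^{0,\Zd\setminus\{0\}}$ can keep growing \emph{internally} after $D_n$ stops meeting the bacteria, and at time $T$ you have no direct comparison between the (possibly badly damaged) hostile-environment bacteria and a fresh empty-environment process. Your proposed fix --- invoking block-construction/renormalization --- is far heavier than needed and is not how the paper proves this statement; renormalization enters only in Theorem~\ref{croissancedesuns}.

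The paper avoids the loss-set problem entirely: it builds a positive-probability event on which the two type-2 processes agree inside the light cone of the origin, so the two bacteria processes \emph{coincide} rather than merely being comparable. Let $C_n$ be the event that some $q$-open oriented path from $\Zd\times\{0\}$ reaches $[-n,n]^d\times\{n\}$; by time reversal and subcriticality, $\P(C_n)\le(2n+1)^d\P_q(T>n)$ decays exponentially, so $A_N=\cap_{k\ge N}C_k^c$ has probability tending to $1$, and one can pick $N$ with $\P_{p,q,\alpha}(\tau_1^{0,\varnothing}=\infty,\,A_{N-1}\circ\theta_1)>0$. Let $B$ be the event that every $q$-bond issued from $[-3N,3N]^d\times\{0\}$ is closed; since $\tau_1^{0,\varnothing}$ does not involve $q$-bonds at the first time step (the initial environment is empty) and $A_{N-1}\circ\theta_1$ only involves $q$-bonds from time $1$ onwards, $B$ is independent of the other two events and the triple intersection has positive probability. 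On that event, any extra type-2 particle must trace back, with no intermediate immigration, through an open $q$-path from $(\Zd\setminus\{0\})\times\{0\}$; event $B$ forces such a path to start outside $[-3N,3N]^d$, and $A_{N-1}\circ\theta_1$ together with the speed-one bound rules out its ever entering the light cone, giving $\eta_{2,n}^{\varnothing}\cap[-n,n]^d=\eta_{2,n}^{\Zd\setminus\{0\}}\cap[-n,n]^d$ for all $n$. Hence $\delta_n\equiv\varnothing$ and $\tau_1^{0,\Zd\setminus\{0\}}=\infty$ on the event. The idea you were missing is precisely this: do not try to control a nonzero $\delta_n$ by restart, but instead exhibit a positive-probability event on which $\delta_n$ is identically empty.
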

We thus define
$\alpha_c(p,q)=\sup\{\alpha\ge 0: \; \P_{p,q,\alpha}(\tau_1^{0,\varnothing}=+\infty)>0\}$.

Our main result is the following:
\begin{theorem}
\label{croissancedesuns}
For every $p>\pcdir(d+1)$ and every $q<\pcdir(d+1)$, 
$$0<\alpha_c(p,q)<1.$$
 Moreover, for every $\alpha<\alpha_c(p,q)$, there exist positive constants $A,B,C$ such that for every $E \subset \Zd \backslash \{0\}$, $x\in\Zd$ and $t>0$,
\begin{eqnarray}
\P_{p,q,\alpha}(\tau_1^{0,E}=+\infty) & > & 0,\label{survie} \\
\P_{p,q,\alpha}(\tau_1^{0,E}=+\infty, \; t^{0,E}_1(x)\ge C\|x\|_1+t) & \le & A\exp(-Bt), \label{tpsatteinte}\\
\P_{p,q,\alpha}(t<\tau_1^{0,E}<+\infty)& \le & A\exp(-Bt). \label{grandtemps}
\end{eqnarray}
\end{theorem}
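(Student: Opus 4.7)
The plan is to compare the bacteria process at a large space-time scale with a supercritical independent oriented percolation, following the philosophy of Bezuidenhout--Grimmett~\cite{MR1071804} and Durrett~\cite{MR1117232}. The bound $\alpha_c<1$ is the easy half: after the immigration step each site becomes type~2 independently with probability at least $\alpha$, irrespective of the history, so the set of type~1 particles is stochastically dominated by a branching random walk whose mean offspring number is proportional to $p(1-\alpha)$. For $\alpha$ close to $1$ this branching random walk is subcritical and goes extinct almost surely, giving $\alpha_c(p,q)<1$.

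\textbf{Control of the environment.} Since $q<\pcdir(d+1)$, subcritical $q$-oriented percolation has exponentially small cluster-size tails. A site $k$ is of type~2 at time $n$ iff it belongs to a $q$-open oriented cluster rooted at a previous immigration event $(y,n-\ell)$ with $0\le\ell\le n$; summing over such ancestors shows that the environment $(S_n)_{n\ge0}$ is stochastically dominated, on any bounded space-time window, by a finite-range-dependent Bernoulli field of density tending to $0$ with $\alpha$. Consequently, the probability that a prescribed finite set of space-time sites is entirely free of type~2 particles can be made arbitrarily close to $1$ by choosing $\alpha$ small.

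\textbf{Renormalization and block event.} Fix $L,T$ large and, using the Bezuidenhout--Grimmett block-construction for supercritical $p$-oriented percolation, consider a deterministic family $\mathcal{F}$ of oriented paths inside $[-L,L]^d\times[0,T]$ designed to invade the top face in each of the $2d$ lateral directions, starting from a small seed near the bottom. Let $\mathcal{G}$ be the event that (i) enough paths in $\mathcal{F}$ are $p$-open and (ii) none of the sites they visit hosts a type~2 particle during the relevant time interval. Events (i) and (ii) are independent since they involve disjoint families of Bernoulli variables; for $p>\pcdir(d+1)$ the probability of (i) tends to $1$ as $L,T\to\infty$, and with $L,T$ then fixed, the probability of (ii) is close to $1$ for $\alpha$ small by the previous step. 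Thus $\P(\mathcal{G})$ can be made as close to $1$ as desired. As $\mathcal{G}$ depends only on variables in a bounded space-time region, the block process on the coarse lattice is finite-range dependent and, by a Liggett--Schonmann--Stacey-type comparison, stochastically dominates a supercritical independent oriented percolation on $\Zd\times\N$. Transferring survival, the shape theorem and the exponential decay of the finite cluster from this oriented percolation yields simultaneously $\alpha_c(p,q)>0$ and the three estimates \eqref{survie}--\eqref{grandtemps}, uniformly in $E\subset\Zd\setminus\{0\}$: an initial environment can only slow the bacteria down compared with $E=\varnothing$, and the block argument provides a self-sustaining seed inside each successful block.

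\textbf{Main obstacle.} The delicate point is the robust design of the block event $\mathcal{G}$. The bacterial trajectory cannot be prescribed a priori, because any fixed oriented path can be blocked by an adversarial type~2 configuration; one must exhibit a whole family of candidate trajectories, together with a restart mechanism, so that the success of the block depends only on variables within the block itself and not on the initial environment outside. A related subtlety is that $(S_n)$ has long-range temporal correlations inherited from iterated subcritical clusters and immigration, so the associated block field is not i.i.d.\ but only finite-range dependent; this is precisely the motivation for the ``general tools for dependent percolation models issued from renormalization processes'' announced in the abstract.
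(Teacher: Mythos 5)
Your proposal captures the structure of the small-$\alpha$ regime (which is indeed the paper's Subsection~\ref{tropfass}), and your branching-walk argument for $\alpha_c<1$ is essentially the paper's counting bound. However there are two genuine gaps.

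First, and most seriously, the theorem claims the estimates~\eqref{survie}--\eqref{grandtemps} for \emph{every} $\alpha<\alpha_c(p,q)$, not just for $\alpha$ small. Your block event $\mathcal{G}$ requires ``the probability that a prescribed finite set of space-time sites is entirely free of type~2 particles can be made arbitrarily close to $1$ by choosing $\alpha$ small,'' which confines you to the sparse-environment regime and only proves $\alpha_c>0$ plus the estimates for $\alpha$ small enough. For $\alpha$ just below $\alpha_c$, the type-2 environment may have substantial density and one cannot hope to find large space-time windows free of it. The block event there must instead be extracted from the mere fact that the bacteria survive with positive probability, via the Bezuidenhout--Grimmett truncation and FKG square-root machinery (the paper's Lemmas~\ref{lem1}--\ref{unpasBGbis} in Subsection~\ref{tropdur}): one shows that a surviving process must, with high probability, fill a large set in a prescribed orthant of a space-time box, and one must be careful about the measurability of the block event with respect to background variables in a bounded region, since the type-2 particles present in the window are not assumed to be absent but merely bounded. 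Your ``main obstacle'' paragraph acknowledges that a deterministic path family cannot be prescribed a priori, but does not supply the alternative.

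Second, the sentence ``Transferring survival, the shape theorem and the exponential decay of the finite cluster from this oriented percolation yields simultaneously $\alpha_c(p,q)>0$ and the three estimates'' glosses over the hard part. Liggett--Schonmann--Stacey gives a stochastic domination of the \emph{unconditional} law of the block field by a supercritical Bernoulli product, but all three estimates are statements about the process \emph{conditioned on survival}. Conditioning does not commute with stochastic domination, and the survival event of the dominated process is not the survival event of the dominating one. This is exactly why the paper develops Theorem~\ref{notreLSS}: a coupling with an explicit restart time $T$ and displacement $D$ of exponential tail, such that on $\{\tau^0(W)=+\infty\}$ the dependent field dominates a Bernoulli percolation \emph{conditioned to survive} starting from $(D,T)$. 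Without this or some equivalent restart-plus-coupling argument (carried out in the paper via Lemma~\ref{restartabstrait}, Corollaries~\ref{petitmomentexpo}, \ref{croitlin} and \ref{lineairegamma}), the passage from domination to the conditional estimates \eqref{tpsatteinte} and \eqref{grandtemps} is not justified.
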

We thus prove that if the immigration of type 2 particles is not too important, the bacteria population survives with positive probability, and, when it survives, grows linearly, as it happens in the absence of immune cells. We can also explain this model in terms of dependent oriented percolation: on the oriented graph $\Zd\times\N$, we erase for each site $(z,n) \in\Zd\times\N$,  with probability $\alpha$, the finite cluster of oriented percolation with parameter $q$
starting from $(z,n)$. The remaining random oriented graph is then given to the type 1 particle, which tries to develop as an oriented percolation with parameter $p$. Thus the growth of  type 1 particles can be seen as a dependent oriented percolation model, with an unbounded but exponentially fast decreasing dependence. Our result ensures the linear growth of this oriented percolation when it percolates. 

A natural question is then the existence of an asymptotic shape result:
\begin{conjecture-dsk}
For every $p>\pcdir(d+1)$, for every $q<\pcdir(d+1)$ and every $\alpha\in (0,\alpha_c(p,q))$, there exists a norm  $\mu$ on $\R^d$ such that for any two disjoints subsets $E_1$ and $E_2$ in $\Zd$ with $E_1\ne \varnothing$, we have for $\epsilon>0$:  $\P_{p,q,\alpha}(.|\tau_1^{E_1,E_2}=+\infty)$ almost surely, for every large enough $t$,
$$(1-\epsilon)B_{\mu}(0,1)\subset \frac1t{B_t}\subset (1+\epsilon)B_{\mu}(0,1),$$
where $B_t=\{x\in\Zd: \; t^{E_1,E_2}_1(x)\le t\}+[-1/2,1/2]^d$.
\end{conjecture-dsk}
We think that this result can be proved  with subadditive methods similar to the ones we used in the case of the contact process in a random environment -- see Garet--Marchand~\cite{GM-contact}. 

We can find a certain number of similar competition mechanisms in the literature under the name of hierarchical competition (see Durrett--M{\o}ller~\cite{MR1094080}), of contact process (or oriented percolation) in a dynamical random environment  (see Broman~\cite{MR2353388},  Luo~\cite{MR1176502}, Remenik~\cite{MR2474541}, Steif--Warfheimer~\cite{MR2461788}), or without any specific denomination  (see Durrett--Swindle~\cite{MR1091691}, Durrett--Schinazi~\cite{MR1241034}). The common characteristic of these models is that one type of particles (here type 2 particles) evolves in a Markovian way, and that the second type evolves as a contact process or an oriented percolation in the remaining empty space. 

In our paper, we are going to use renormalization techniques. This is not surprising: the efficiency of such techniques in the study of particle systems has been known for long, see for instance  Bramson--Durrett~\cite{MR968822}, or Durrett~\cite{MR1144730,MR1383122}, and the use of renormalization is usual to prove that survival occurs with positive probability. However, studying the system conditioned to survive can be subtle.
Indeed, the renormalization procedures tend to destroy the independence properties given by the Markovianity and the tried and tested restart arguments described in Durrett~\cite{MR757768} must be adapted with some care. While the general idea remains simple, the implementation is quite technical and, for the moment, there are no ready-made tools for this kind of situation. In the perspective of future works,
we build tools in the spirit of the theorem of Liggett--Schonmann--Stacey~\cite{LSS} but in 
 the context of dependent oriented percolations resulting from renormalization procedures -- see Theorem~\ref{notreLSS}.


\section{Comparison and coupling results}

While the setting of static renormalization can be defined quite formally, there are other types of renormalization that are harder to classify: they all have in common to consider local events that cannot be defined in an absolute way, but rather depend on a local component and also on the past of the renormalization process. This past can be associated to a time line as in Bezuidenhout--Grimmett~\cite{MR1071804} and Durrett~\cite{MR1117232}, or to a sequence of spatial boxes as in Grimmett--Marstrand~\cite{MR1068308}.

After renormalization, we are led to study a dependent oriented percolation process. The fact that this process survives with positive probability can be proved quite directly from the comparison result of Liggett--Schonmann--Stacey~\cite{LSS}.
However, when one wants to study the oriented percolation process conditioned to survive, things are more intricate: our Theorem~\ref{notreLSS} gives thus a general setting to ensure that  ``conditioned on its survival, the oriented percolation process  on $\Zd\times\N$ built from the renormalization process stochastically dominates an independent oriented percolation process with parameter as large as we want''. The aim is of course to transfer the properties of the supercritical independent percolation process to the dependent percolation process.

We work on the graph $\Zd \times \N$, as defined in the introduction. 
We consider \\
$\Omega=\{0,1\}^{\Eddo}$ endowed with its Borel $\sigma$-algebra and the probability 
$$\P_p=\mathcal{B}(p)^{\otimes \Eddo};$$
 the edges such that $\omega_e=1$ are said to be open, the other ones are closed. 
For two sites $v, w$ in $\Zd\times\N$, we denote by $v \to w$ the existence of an open oriented path from $v$ to $w$. The critical probability is denoted by $\pcdir(d+1)$. The time translations $\theta_n$ on $\Omega$ are defined by
$\theta_n((\omega_{(e,k)})_{e\in\Edo,k\ge 1})=(\omega_{(e,k+n)})_{e\in\Edo,k\ge 1}$.
We set, for $n \in \N$ and $x \in \Zd$,
\begin{eqnarray*}
\xi^x_n & = & \{y \in \Zd:  \; (x,0)\to(y,n)\}, \\
\xi_n^{\Zd} & = & \miniop{}{\cup}{x \in \Zd}\xi^x_n, \\
\tau^x & = & \min\{n \in \N:\; \xi^x_n =\varnothing\}, \\
H^x_n & = & \miniop{}{\cup}{0\le k\le n}\xi^x_k, \\
K^x_n & = &  (\xi^x_n \Delta \xi_n^{\Zd})^c= \xi^x_n  \cup  (\Zd \backslash  \xi_n^{\Zd}).
\end{eqnarray*}
As for the contact process, $(H^x_n)_{n\ge 0}$ and  $(K^x_n\cap H^x_n )_{n\ge 0}$ grow  linearly in case of survival:

\begin{lemme}
\label{momtprime}
We consider independent oriented percolation on $\Zd \times \N$. For every $p>\pcdir(d+1)$, there exist strictly positive constants $A,B,C$ such that for every $x\in\Zd$, for every $L,n>0$:
\begin{eqnarray*}
\P_p(\tau^x=+\infty,[-L,L]^d\not\subset K^x_{CL+n})&\le& Ae^{-Bn}\\
\P_p(\tau^x=+\infty,[-L,L]^d\not\subset H^x_{CL+n})&\le& Ae^{-Bn}.
\end{eqnarray*}
\end{lemme}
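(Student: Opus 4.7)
The lemma is the oriented-percolation analogue of classical shape/coupled-region estimates for the supercritical contact process due to Bezuidenhout--Grimmett~\cite{MR1071804} and Durrett~\cite{MR1117232}, and the plan is to follow the same strategy: a block renormalization feeding into a restart argument.

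\textbf{Step 1 (renormalization to nearly-deterministic blocks).} Fix $p>\pcdir(d+1)$. A Bezuidenhout--Grimmett type block construction provides spatial/temporal scales $N,T$ such that, with probability arbitrarily close to~$1$, an open cluster entering a block at the bottom exits the top with enough seeds to continue in each neighbouring block. By a Liggett--Schonmann--Stacey comparison, the $\{0,1\}$-valued indicator process of ``good'' blocks dominates an independent supercritical oriented percolation on a coarser copy of $\Zd\times\N$ whose parameter $p'$ can be made arbitrarily close to~$1$. Survival on $\{\tau^x=+\infty\}$ transfers to survival of the renormalized process.

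\textbf{Step 2 (bound on $H^x_n$).} Inside the dominating independent OP, a direct block argument (as in Durrett~\cite{MR1117232}, Chapter~1) yields a constant $C_1$ such that, on survival of the coarse cluster, the range up to time $C_1L+n$ contains $[-L,L]^d$ except on an event of probability at most $A_1e^{-B_1n}$. Transferring back through Step~1 gives the second displayed inequality.

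\textbf{Step 3 (bound on $K^x_n$).} Since $\xi^x_n\subset\xi_n^{\Zd}$, we have $\Zd\setminus K^x_n=\xi_n^{\Zd}\setminus\xi^x_n$, and by time-reversal duality
$$y\in\xi_n^{\Zd}\setminus\xi^x_n\iff \text{the backward cluster from }(y,n)\text{ reaches height }0\text{ yet avoids }(x,0).$$
On $\{\tau^x=+\infty\}$, Step~2 applied forward from $(x,0)$ shows that the forward range fills $[-L,L]^d$ in time $C_2L$ with exponentially high probability; applied to the reversed percolation from $(y,n)$, it shows that on the event the backward cluster survives down to time~$0$, it also fills a linear-sized box around~$y$ within the top $C_2L$ time-slices. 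When $\|y-x\|_1\le L$ and $n\ge CL+t$ with $C$ large enough, these forward and backward ``filled'' regions necessarily intersect in space-time, and a restart/FKG coalescence argument at the block scale forces the two clusters to merge, so $(x,0)\to(y,n)$. A union bound over $y\in[-L,L]^d$ only costs a polynomial factor, which is absorbed into the exponential in $t$.

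\textbf{Main obstacle.} The delicate point is the coalescence in Step~3: two supercritical OP clusters that come geometrically close must actually connect, with failure probability exponentially small. This is handled by a geometric restart --- one exploits the independence between disjoint blocks provided by Step~1 to make independent bridge attempts, each succeeding with probability bounded away from~$0$, so that the number of failures before a success has exponential tails.
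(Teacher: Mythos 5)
Your sketch reconstructs the Bezuidenhout--Grimmett block construction and Durrett's coupled-region/coalescence argument, which is precisely what the paper invokes: the paper's own proof of Lemma~\ref{momtprime} is a bare citation to~\cite{MR1071804} and~\cite{MR1117232}, noting that the contact-process proofs transfer verbatim to discrete-time oriented percolation. Your three steps correctly fill in the content of those references (the renormalization to a nearly-deterministic block process via LSS, the linear-growth estimate for $H^x_n$, and the forward/backward intersection plus block-scale coalescence for $K^x_n$), so this is the same approach with the delegated details spelled out.
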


\begin{proof}
For the contact process, Durrett~\cite{MR1117232} showed how to deduce an analogous result from the construction of Bezuidenhout--Grimmett~\cite{MR1071804}. As explained in~\cite{MR1071804}, the proofs remain valid for oriented percolation, which is the discrete-time analogous of the contact process.
\end{proof}




We now recall the comparison theorem of Liggett--Schonmann--Stacey~\cite{LSS}.
In the following, for two edges $e$ and $f$ in $\Edo$, we denote by $d(e,f)$ the distance for  $\|.\|_1$ between the middles of the edges $e$ and $f$.
\begin{propo}
\label{Lig-Sch-Sta} Let $d\ge1$ be fixed.
For every $M\ge 1$, there exists a function $g_M$ from $[0,1]$ to $[0,1]$ with $\miniop{}{\lim}{q \to 1}g_M(q)=1$ and such that if $\mu$ is a probability measure on $\Omega=\{0,1\}^{\Edo}$ and $q\in [0,1]$ satisfying: 
for every  ${e} \in \Edo$, $\mu(\omega_{{e}}=1|\omega_f, \; d(e,f)\ge M)\ge q$, 
then $\mu$ stochastically dominates a product of Bernoulli law with parameter $g_M(q)$:
$$\mu\gestoch \mathcal{B}(g_M(q))^{\otimes \Edo}.$$
\end{propo}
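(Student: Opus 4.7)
This proposition is an instance of the classical Liggett--Schonmann--Stacey comparison principle~\cite{LSS}, and I would prove it by adapting their argument to the distance-based dependency structure of the statement. The plan is first to recast the hypothesis as a bound for a $K$-dependent family, and then to extract stochastic domination by an iid Bernoulli field via a one-edge-at-a-time coupling.

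First I would define the dependency neighborhood $N(e) = \{f \in \Edo : d(e,f) < M\}$ of each edge $e \in \Edo$. Since $\Edo$ is a locally finite lattice graph, $|N(e)|$ is uniformly bounded by some constant $K = K(M,d)$, comparable to the volume of an $\ell_1$-ball of radius $M$ in $\Zd$. The hypothesis thus says that $\mu(\omega_e = 1 \mid \omega_f,\, f \in \Edo \setminus N(e)) \geq q$ almost surely, and by the tower property this bound persists when we condition only on a sub-family of far-away edges: for any $J \subset \Edo \setminus N(e)$,
$$\mu(\omega_e = 1 \mid \omega_f,\, f \in J) \geq q \quad \text{a.s.}$$

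Next, I enumerate the edges as $e_1, e_2, \ldots$ and split, for each $k$, the past into near edges $A_k = \{j < k : e_j \in N(e_k)\}$ and far edges $B_k = \{j < k\} \setminus A_k$, with $|A_k| \leq K$. The previous step controls $\mu(\omega_{e_k} = 1 \mid \omega_{B_k})$, but the coupling requires conditioning on $A_k \cup B_k$. To bridge this gap I would use the decomposition
$$q \leq \mu(\omega_{e_k} = 1 \mid \omega_{B_k}) = \sum_{a \in \{0,1\}^{A_k}} \mu(\omega_{e_k} = 1 \mid \omega_{A_k} = a,\, \omega_{B_k}) \, \mu(\omega_{A_k} = a \mid \omega_{B_k})$$
and argue, as in~\cite{LSS}, that the bad configurations of the at most $K$ nearby past edges can lower the conditional probability of $\{\omega_{e_k} = 1\}$ only in a quantitatively controlled way. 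This yields a pointwise bound
$$\mu(\omega_{e_k} = 1 \mid \omega_{e_1}, \ldots, \omega_{e_{k-1}}) \geq g_M(q) \quad \text{a.s.},$$
for an explicit function $g_M$ depending only on $K$ (hence on $M$ and $d$), satisfying $g_M(q) \to 1$ as $q \to 1$.

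Finally, the pointwise lower bound on successive one-step conditional probabilities yields the stochastic domination $\mu \gestoch \mathcal{B}(g_M(q))^{\otimes \Edo}$ through the usual inverse-CDF coupling, and the limit property is read off the explicit formula. The main obstacle is the quantitative step hidden in the third paragraph: one must translate the ``far only'' conditional bound $q$ into a usable bound that also allows conditioning on the bounded but arbitrary near past, and exhibit the resulting loss as a function of $1-q$ that vanishes with $1-q$.
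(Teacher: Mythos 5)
The paper does not reprove this statement: it is recalled as the comparison theorem of Liggett--Schonmann--Stacey and simply cited from~\cite{LSS}, so your attempt has to stand on its own. There it has a genuine gap in the central step.

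Your first two observations are correct: $|N(e)|$ is uniformly bounded in $e$, and by the tower property the far-conditional lower bound $q$ does persist when conditioning on any sub-$\sigma$-algebra generated by edges outside $N(e)$. The flaw is the claimed pointwise inequality
$$\mu(\omega_{e_k}=1\mid\omega_{e_1},\dots,\omega_{e_{k-1}})\ge g_M(q)\quad\text{a.s.}$$
with $g_M(q)\to1$ as $q\to1$. This fails. Take $d=1$, two edges $e,f\in\Edo$ with $d(e,f)<M$, and iid variables $\xi_0,\xi_1,\xi_2\sim\mathcal{B}(\sqrt q)$; set $\omega_e=\xi_0\xi_1$, $\omega_f=\xi_1\xi_2$, and let the remaining coordinates be iid $\mathcal{B}(q)$ and independent of $(\omega_e,\omega_f)$. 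Every far-conditional bound in the hypothesis holds with constant $q$ (each of $e,f$ is excluded from the other's far $\sigma$-algebra, and all other edges are independent of everything), yet
$$\mu(\omega_f=1\mid\omega_e=0)=\frac{q}{1+\sqrt q}\longrightarrow\frac12\quad\text{as }q\to1.$$
A one-edge-at-a-time coupling of the type you describe can at best produce a Bernoulli parameter equal to the essential infimum of the successive conditional probabilities, so whichever of $e,f$ is enumerated second caps $g_M(q)$ near $1/2$, not near~$1$.

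This is precisely why the Liggett--Schonmann--Stacey theorem is not an exercise in naive sequential coupling. The proof in~\cite{LSS} does not bound the conditional probability of $\omega_{e_k}$ given the realized past of the \emph{original} field; it instead keeps track only of the dominated field constructed so far, or reduces to a one-dimensional theorem about $k$-dependent sequences and transfers it to bounded-degree dependency graphs. Either route produces the characteristic loss $1-g_M(q)$ of order $(1-q)^{1/\kappa}$ for some exponent $\kappa$ tied to $|N(e)|$, a behaviour the sequential conditional-probability scheme is structurally unable to reproduce. The ``main obstacle'' you flag at the end is therefore not a missing estimate to be patched within your plan; it is the signal that a genuinely different coupling is required, which is the actual content of the reference.
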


Relying on this theorem, we are going to prove analogous results for a certain class of dependent oriented percolations: 
\begin{defi}
Let $d\ge1$ be fixed. Let $M$ be a positive integer and $q\in (0,1)$.

Let $(\Omega,\mathcal{F},\P)$ be a probability space endowed with a filtration  $(\mathcal{G}_n)_{n\ge 0}$. We assume that, on this probability space, a random field $(W^n_e)_{e\in\Edo,n\ge 1}$ taking its values in $\{0,1\}$ is defined. This field gives the states -- open or closed -- of the edges in $\Eddo$. 
We say that the law of the field
$(W^n_e)_{e\in\Edo,n \ge 1}$ is in $\mathcal{C}_d(M,q)$ if it satisfies the two following conditions. 
\begin{itemize}
\item  $\forall n\ge 1,\forall e \in \Edo\quad W^n_e\in\mathcal{G}_n$;
\item $\forall n \ge 0,\forall e \in \Edo\quad \P[W^{n+1}_e=1|\mathcal{G}_n\vee \sigma(W^{n+1}_f, \; d(e,f)\ge M)]\ge q$,
\end{itemize}
where $\sigma(W^{n+1}_f, \; d(e,f)\ge M)$ is the $\sigma$-field generated by the random variables $W^{n+1}_f$, with $d(e,f)\ge M$.
\end{defi}

First, we give a stochastic comparison between fields in  $\mathcal{C}_d(M,q)$ and Bernoulli product measures:
\begin{lemme}
\label{lemmepastropmal}
Let $d,M\ge1$ be positive integers and $q\in (0,1)$. 

If the distribution of $(W^n_e)_{e\in\Edo,n\ge 1}$ belongs to $\mathcal{C}_d(M,q)$, then for each $n$, the distribution of the field
$(W^{n+k}_e)_{e\in\Edo,k\ge 1}$ conditioned by $\mathcal{G}_n$ stochastically dominates $\mathcal{B}(g_M(q))^{\otimes \Eddo}$, where the function $g_M$ has been defined in Proposition~\ref{Lig-Sch-Sta}.

In other words, for each $n\ge 0$, for each $A\in \mathcal{G}_n$, and each non-decreasing bounded function  $f$, we have
$$\E_W[\1_A(f\circ \theta_n)] \ge \P(A)\int_{\{0,1\}^{\Eddo}}  f \ d\mathcal{B}(g_M(q))^{\otimes \Eddo},$$
where $\theta_n$ is the translation operator on $\Omega$ that has been defined previously. 
\end{lemme}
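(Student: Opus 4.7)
The plan is to apply the Liggett--Schonmann--Stacey Proposition~\ref{Lig-Sch-Sta} slice by slice in time, and then assemble the slicewise dominations through a conditional induction on the number of layers.

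First, observe that for every $k\ge 1$, the second condition in the definition of $\mathcal{C}_d(M,q)$ says exactly that the conditional law of $(W^{n+k}_e)_{e\in\Edo}$ given $\mathcal{G}_{n+k-1}$ satisfies the hypothesis of Proposition~\ref{Lig-Sch-Sta}: for each $e$,
$$\P\bigl[W^{n+k}_e=1\mid \mathcal{G}_{n+k-1}\vee\sigma(W^{n+k}_f,\,d(e,f)\ge M)\bigr]\ge q.$$
Passing to a regular conditional distribution given $\mathcal{G}_{n+k-1}$, this conditional law is a.s.\ a probability measure on $\{0,1\}^{\Edo}$ to which Proposition~\ref{Lig-Sch-Sta} applies, so it stochastically dominates $\mathcal{B}(g_M(q))^{\otimes \Edo}$.

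Second, I prove by induction on $N\ge 1$ that for any bounded nondecreasing function $f$ on $\{0,1\}^{\Edo\times\{1,\dots,N\}}$,
$$\E\bigl[f(W^{n+1},\dots,W^{n+N})\mid\mathcal{G}_n\bigr]\ \ge\ \int f\,d\bigl(\mathcal{B}(g_M(q))^{\otimes \Edo}\bigr)^{\otimes N}.$$
The case $N=1$ is the slicewise domination just obtained. For the inductive step I condition first on $\mathcal{G}_{n+N-1}$: since for each fixed $(w_1,\dots,w_{N-1})$ the partial function $w\mapsto f(w_1,\dots,w_{N-1},w)$ is nondecreasing on $\{0,1\}^{\Edo}$, the $N=1$ bound applied at level $n+N-1$ yields
$$\E[f\mid\mathcal{G}_{n+N-1}]\ \ge\ g(W^{n+1},\dots,W^{n+N-1}),\quad g(w_1,\dots,w_{N-1}):=\int f(w_1,\dots,w_{N-1},v)\,d\mathcal{B}(g_M(q))^{\otimes \Edo}(v).$$
By Fubini, $g$ is measurable and $\mathcal{G}_{n+N-1}$-adapted, and it is nondecreasing coordinatewise because integrating a nondecreasing integrand against a fixed law preserves monotonicity. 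Applying the induction hypothesis to $g$ produces the bound on $N$ layers.

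Finally, multiplying by $\1_A$ with $A\in\mathcal{G}_n$ and taking expectations gives the inequality for every cylindrical nondecreasing $f$; a standard monotone-class or monotone-convergence argument (approximating a general nondecreasing bounded $f$ on $\{0,1\}^{\Eddo}$ by its cylindrical truncations) extends it to the full statement. The only point deserving any attention is the preservation of monotonicity and measurability when integrating out the last layer against the fixed Bernoulli product; once that observation is in place, the induction runs without difficulty and the asserted inequality on $\E[\1_A(f\circ\theta_n)]$ follows immediately.
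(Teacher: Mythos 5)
Your proof is correct and follows essentially the same route as the paper's: an induction on the number of time layers, integrating out the last layer using the Liggett--Schonmann--Stacey domination applied to the regular conditional distribution given the previous $\sigma$-field, and noting that this integration preserves monotonicity so the induction hypothesis applies to the resulting function of the remaining layers. The only cosmetic differences are that you phrase the inductive statement as a conditional-expectation inequality rather than keeping the $\1_A$ factor throughout, and you make explicit the measurability of the integrated-out function and the monotone-class passage from cylindrical to general nondecreasing bounded $f$, both of which the paper leaves implicit.
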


\begin{proof}
Let $E=\{0,1\}^{\Edo}$,  $q'=g_M(q)$ and fix $n\ge 1$.
We will show that for each non-negative integer  $k$, for every  non-decreasing bounded function  $f$ that only depends on the $k$ first time coordinates, we have
$$\E[\1_Af (W^{n+1},W^{n+2},\dots,W^{n+k})] \ge \P(A)\int  f \ d\mathcal{B}(q')^{\otimes \Eddo}.$$ 

When $k=0$, $f$ is constant and the result is obvious.

Suppose the result holds for $k$ and let us prove it for $k+1$.

Let  $h$ be a non-decreasing bounded function on $E^{k+1}$ and consider $A\in\mathcal{G}_n$. 
Since we work on a Polish space, we can disintegrate  $\P$ with respect to the $\sigma$-field $\mathcal{G}_{n+k}$ (see e.g. Stroock~\cite{MR1267569}). Then, we have, with the notation of Stroock~\cite{MR1267569}:
\begin{eqnarray*}
& &\E[\1_Ah (W^{n+1},\dots,W^{n+k+1})]\\ 
& = &\E[\1_A\E[h (W^{n+1},\dots,W^{n+k+1})|\mathcal{G}_{n+k}]\\
& = & \int_{A} \int_{\Omega} h(W^{n+1}(\omega'),\dots,W^{n+k}(\omega'),W^{n+k+1}(\omega'))\ d\P^{\mathcal{G}_{n+k}}_\omega(\omega') \ d\P(\omega)\\
& = & \int_{A} \int_{\Omega} h(W^{n+1}(\omega),\dots,W^{n+k}(\omega),W^{n+k+1}(\omega'))\ d\P^{\mathcal{G}_{n+k}}_\omega(\omega') \ d\P(\omega).
\end{eqnarray*}
Since we supposed that the distribution of $(W^n_e)_{e\in\Edo,n\ge 1}$ belongs to $\mathcal{C}_d(M,q)$, 
 the distribution of $(W^{n+k+1}_e)_{e\in\Edo}$ under
$\P^{\mathcal{G}_{n+k }}_{\omega}$ satisfies,  for every fixed  $\omega$, the assumptions of the Liggett--Schonmann--Stacey comparison Theorem (Theorem~\ref{Lig-Sch-Sta}). Thus, it stochastically dominates $\mathcal{B}(q')^{\otimes \Edo}$, which gives
\begin{eqnarray*} 
& &\int_{\Omega} h(W^{n+1}(\omega),\dots,W^{n+k}(\omega),W^{n+k+1}(\omega'))\ d\P^{\mathcal{G}_{n+k}}_\omega(\omega')\\
&\ge &\int_{E} h(W^{n+1}(\omega),\dots,W^{n+k}(\omega),x)\ d\mathcal{B}(q')^{\otimes \Edo}(x)
=f(W^{n+1}(\omega),\dots,W^{n+k}(\omega)),
\end{eqnarray*}
where $f$ is defined by
\begin{equation}
\label{mamamia}
f(y_1,\dots,y_k)=\int_{E} h(y_1,\dots,y_k,x)\ d\mathcal{B}(q')^{\otimes \Edo}(x).
\end{equation}
 Thus we obtain
$$
\E[\1_Ah (W^{n+1},\dots,W^{n+k})]  \ge \int_A f(W^{n+1},\dots,W^{n+k})\ d\P.
$$
But by the induction assumption,
$$\int_A f(W^{n+1},\dots,W^{n+k})\ d\P\ge\P(A)\int_{E^k}f(y_1,\dots,y_k)\ d(\mathcal{B}(q')^{\otimes \Edo})^{\otimes k},$$
which, from Definition~(\ref{mamamia}), gives the desired result.
\end{proof}

Then, we associate to every $\{0,1\}$-valued random field  $(W^n_e)_{e\in\Edo,n\ge 1}$ an oriented percolation process  $(\xi^0_n(W))_{n\ge 1}=(\xi^0_n)_{n\ge 1}$ starting from $(0_{\Zd},0)$ and defined in the usual way:
$$
\left\{
\begin{array}{l}
 \xi^0_0=\{0\} \\
\xi^0_{n+1}= \{ x \in \Zd: \; \exists y \in \xi^0_n \quad W^{n+1}_{(y,x)}=1\}.
\end{array}
\right.
$$

For simplicity, we will often say ``oriented percolation in $\mathcal{C}_d(M,q)$'' instead of ``oriented percolation associated to a field $\chi\in\mathcal{C}_d(M,q)$''.

We define the extinction time of the oriented percolation associated to  $W$ and starting from $(0_{\Zd},0)$:
$$\tau^0(W)=\tau^0=  \inf\{n\ge 1: \;  \xi^0_n=\varnothing\}. $$
The following result allows a coupling between surviving dependent percolation in $\mathcal{C}_d(M,q)$ and supercritical Bernoulli percolation: 
\begin{theorem}
\label{notreLSS}
Let $d,M\ge 1$ be fixed positive integers and let $q\in (0,1)$ be such that $g_M(q)>\pcdir(d+1)$.

There exist positive  constants $\beta, \gamma$ such that for each
field $\chi\in\mathcal{C}_d(M,q)$,
we can find a probability space where live a field $W=(W^n_e)_{e\in\Edo,n\ge 1}$, a field $(W'^n_e)_{e\in\Edo,n\ge 1}$, taking both their values in $\{0,1\}$, a $\N$-valued random variable $T$ and a $\Zd$-valued random variable $D$  such that
\begin{itemize}
\item $\|D\|_1\le T$ and $\E[\exp(\beta T)]\le \gamma$ ;
\item The field $(W^n_e)_{e\in\Edo,n\ge 1}$ follows the distribution $\chi$ and $\P(\tau^0(W)=\infty)>0$;
\item $T=\tau^0(W)$ on the event $\{\tau^0(W)<+\infty\}$;
\item Conditioning by $\{\tau^0(W)=+\infty\}$, the open cluster issued from $(0_{\Zd},0)$ of the field $(W'^n_e)_{e\in\Edo,n\ge 1}$ has the same distribution as the open cluster issued from $(0_{\Zd},0)$ conditioned on survival in independent oriented percolation with parameter  $g_M(q)$; moreover, on $\{\tau^0(W)=+\infty\}$, we have 
$$\forall n\ge 0\quad \xi^0_{T+n}(W)\supset  D+\xi^0_{n}(W').$$
\end{itemize}
\end{theorem}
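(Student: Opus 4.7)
The plan is a restart argument implemented on an enlarged probability space. Applying Lemma~\ref{lemmepastropmal} at $n=0$ gives that the law of $W$ dominates $\mathcal{B}(g_M(q))^{\otimes\Eddo}$, so $\P(\tau^0(W)=+\infty)\ge\theta$, where $\theta>0$ is the survival probability of supercritical independent oriented percolation at parameter $g_M(q)>\pcdir(d+1)$. This settles positive survival; the rest of the proof constructs $T$, $D$ and $W'$.

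Fix a block length $N\ge 1$ and set $T_k=kN$. On $\{\xi^0_{T_k}(W)\ne\varnothing\}$, choose $x_k\in\xi^0_{T_k}(W)$ canonically (say the lexicographic minimum), and set $x_k=0$ otherwise. Enlarge the probability space by an i.i.d.\ sequence $(U_k)_{k\ge 1}$ of uniform variables independent of $W$, and use Lemma~\ref{lemmepastropmal} together with a measurable version of Strassen's theorem to construct fields $\tilde W^{(k)}$ on $\Eddo$ satisfying: (i) $\tilde W^{(k)}$ has law $\mathcal{B}(g_M(q))^{\otimes\Eddo}$; (ii) $\tilde W^{(k)}\le W^{T_k+\cdot}$ componentwise; (iii) conditionally on $\sigma(\mathcal{G}_{T_k},U_1,\dots,U_{k-1})$, the field $\tilde W^{(k)}$ is still Bernoulli at $g_M(q)$. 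Let $X_k$ be the indicator that the cluster of $x_k$ in $\tilde W^{(k)}$ is infinite, and set $I=\inf\{k\ge 1:X_k=1,\ \tau^0(W)>T_k\}$. On $\{I<+\infty\}$ set $T=T_I$, $D=x_I$, and take $W'$ to be the translate of $\tilde W^{(I)}$ sending $x_I$ to the origin; on $\{I=+\infty\}$ set $T=\tau^0(W)$, $D=0$, and $W'$ an arbitrary fixed configuration.

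For the verification, property (iii) together with the translation invariance of Bernoulli percolation shows that $(X_k)$ is i.i.d.\ Bernoulli($\theta$); property (ii) shows that $X_k=1$ and $\tau^0(W)>T_k$ together force $\tau^0(W)=+\infty$, so $\{I<+\infty\}\subseteq\{\tau^0(W)=+\infty\}$, while the i.i.d.\ Bernoulli property of $(X_k)$ forces $I<+\infty$ almost surely on $\{\tau^0(W)=+\infty\}$. These two events therefore coincide up to a null set, so $T=\tau^0(W)$ on $\{\tau^0(W)<+\infty\}$, and the inclusion $\xi^0_{T+n}(W)\supset D+\xi^0_n(W')$ follows from $x_I\in\xi^0_T(W)$ and (ii) after translation. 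For the tail of $T$: on $\{I<+\infty\}$ one has $T\le NI$ with $I$ dominated by $\mathrm{Geom}(\theta)$; on $\{I=+\infty\}\subseteq\{\tau^0(W)<+\infty\}$ the event $\{\tau^0(W)\ge k\}$ forces $X_1=\dots=X_{\lfloor(k-1)/N\rfloor}=0$, of probability $(1-\theta)^{\lfloor(k-1)/N\rfloor}$. The bound $\|D\|_1\le T$ comes directly from $x_I\in\xi^0_{T_I}(W)$. For the distribution of $W'$ on $\{\tau^0(W)=+\infty\}$: conditionally on $\{I=i,x_i=y\}$, the field $\tilde W^{(i)}$ is Bernoulli at $g_M(q)$ conditioned on survival from $y$, and translating by $-y$ yields Bernoulli conditioned on survival from $(0,0)$, uniformly in $(i,y)$.

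The main obstacle is the joint realization of (i)--(iii): Lemma~\ref{lemmepastropmal} only supplies, for each $k$ separately, a coupling in which $\tilde W^{(k)}$ is Bernoulli given $\mathcal{G}_{T_k}$. Upgrading this to (iii), where the conditioning is enlarged to include the auxiliary past $U_1,\dots,U_{k-1}$, requires a measurable selection of the Strassen couplings, together with the independence of $U_k$ from $(W,U_1,\dots,U_{k-1})$. Once (iii) is secured, the i.i.d.\ structure of $(X_k)$ and the standard first-success identity for conditional distributions close the argument.
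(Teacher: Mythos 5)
Your argument has a genuine gap in the claimed i.i.d.\ structure of $(X_k)$, and this gap cannot be repaired by the measurable-selection fix you flag at the end.

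The indicator $X_j$ (for $j<k$) is the event that the cluster of $x_j$ in $\tilde W^{(j)}$ is \emph{infinite}. This is a tail event of $\tilde W^{(j)}$, and since $\tilde W^{(j)}$ is coupled to $W^{T_j+\cdot}$ and hence depends on $W$ at \emph{all} times greater than $T_j$ — in particular at times greater than $T_k$ — the variable $X_j$ is \emph{not} measurable with respect to $\mathcal{H}_k:=\mathcal{G}_{T_k}\vee\sigma(U_1,\dots,U_{k-1})$. Property (iii) only says that the conditional law of $\tilde W^{(k)}$ given $\mathcal{H}_k$ is Bernoulli; it does not say that $\tilde W^{(k)}$ (and hence $X_k$) is conditionally independent of $X_1,\dots,X_{k-1}$ given $\mathcal{H}_k$, because those earlier indicators carry information about $W$ at times beyond $T_k$ that is \emph{not} in $\mathcal{H}_k$ but \emph{does} constrain $W^{T_k+\cdot}$, to which $\tilde W^{(k)}$ is coupled. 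The $\tilde W^{(j)}$'s for different $j$ all look at the same infinite future of $W$ through overlapping windows, so they are correlated, and the $X_k$'s are not i.i.d. Everything downstream — the almost-sure finiteness of $I$ on $\{\tau^0(W)=+\infty\}$, the geometric tail, and the identification of the conditional law of $W'$ — relies on this independence and thus breaks.

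The paper's proof avoids exactly this obstruction by using \emph{random} restart times rather than a deterministic grid: it introduces a family of mutually independent coupled pairs $({}^E\eta,{}^E\eta')$, one for each possible history $E$, and restarts at the (random, finite) death time of the current \emph{independent} coupled percolation. Because the decision ``does this attempt fail?'' is settled by ${}^{\epsilon_k}\eta'$ alone, which is fresh and independent of everything used to determine $(\epsilon_k,x_k)$, each restart genuinely produces an independent Bernoulli($\theta$)-type trial — precisely the property your deterministic-time construction lacks. A fixed grid $T_k=kN$ forces the ``success'' variables to depend on the entire future, which is shared across trials; letting the restart time adapt to the death of the dominated Bernoulli process is what makes the restart argument close.
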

In fact, this theorem contains two results
\begin{itemize}
\item it ensures the existence of an embedded independent infinite cluster in the dependent infinite cluster, and controls its position. 
\item when the dependent cluster is finite, it also controls its height.
\end{itemize}
\begin{proof}
Define $q'=g_M(q)$.

Let $E_1,\dots, E_n$ be finite subsets of $\Zd$. We define $E=(E_1,\dots ,E_n)$ and $|E|=n$. The event
$$A_E=\miniop{|E|}{\cap}{i=1} \{\xi^0_i=E_i\}$$
is in $\mathcal{G}_n$; on this event, the history of the directed percolation process starting from  $(0_{\Zd},0)$ up to time $n$ is characterized by $E$. We call that $E$ an history.

From now on, we only consider histories satisfying   $\chi(A_E)>0$; for such an history, we define a probability measure  $m_E$ on $\{0,1\}^{\Edo\times\N^*}$ by
$$m_E(B)=\chi((W^{|E|+k})_{k \ge 1} \in B|A_E);$$
we call it the law of the dependent oriented percolation with history $E$.
Thanks to Lemma~\ref{lemmepastropmal}, the probability measure $m_E$ stochastically dominates $\mathcal{B}(q')^{\otimes\Edo\times\N^*}$.

Thus, Strassen's Theorem~(\cite{MR0177430}, see also Lindvall~\cite{MR1711599}) allows to build a law
$\nu_E$ on $(\{0,1\}^{\Edo\times\N^*})^2$ with marginals $m_E$ and $\mathcal{B}(q')^{\otimes\Edo\times\N^*}$ and is concentrated on  $\{x \ge y\}$, with
$$\forall (x,y)\in (\{0,1\}^{\Edo\times\N^*})^2 \quad  x \ge y \Leftrightarrow \forall e \in \Edo\times\N^* \; x_e\ge y_e.$$ 
For every history $E$, the law $\nu_E$ allows to make a coupling between the state of the bonds of dependent and of independent oriented percolations with common history $E$. 
Now, we can construct on the same probability space $(\Omega,\mathcal{F},\P)$ a family of $(\{0,1\}^{\Edo\times\N^*})^2$-valued independent processes $({}^E\eta,{}^E\eta')_E$, that are indexed by the collection of all histories $E$, in such a way that for every history $E$, 
\begin{eqnarray*}
\left({}^E\eta_e^n,{}^E\eta'^n_e\right)_{e \in \Edo, n \ge 1} & \stackrel{\text{law}}{=} & \nu_E.
\end{eqnarray*}

We denote by ${}^E\tau^x$ the time where the independent directed percolation related to  ${}^E\eta'$ and starting from $x$ (and not from the whole history $E$) dies. 
We write $\xi_n({}^E\eta)$ to denote the state
at time $n$ of the dependent percolation process  with history $E$; thus, $\xi_0({}^E\eta)=E_{|E|}$. 
We also denote by ${}^EN^x=( \xi_1({}^E\eta)), \dots, \xi_{{}^E\tau^x}({}^E\eta))$ 
the sequence of the configurations occupied up to time   ${}^E\tau^x$ by the dependent percolation process associated to $\eta$ with history $E$  and denote by ${}^EL^x$ its terminal configuration. 

In words, an history $E$ and a point $x$ being given, we run the coupling between the independent percolation associated to ${}^E\eta$ and the dependent percolation associated to ${}^E\eta'$ up to time ${}^E\tau^x$ when the cluster issued from $x$ in the independent one dies out. We then store the new history of the dependent percolation in ${}^EN^x$ and its final state in ${}^EL^x$.
Note that 
\begin{itemize}
\item the percolation fields both have history $E$;
\item we define the whole percolation fields, and not only the clusters issued from a specific set;
\item we run the coupling until time ${}^E\tau^x$, where open cluster issued from $x$ in the independent percolation dies out;
\item if the terminal configuration ${}^EL^x$ of the dependent percolation is empty, then, by stochastic comparison,  ${}^E\tau^x$ is also the lifetime of the dependent percolation
after history $E$. 
\end{itemize}

Then we build three sequences: a sequence of sites $(x_n)$, a sequence of times $(t_n)$ and a sequence of compatible histories $(\varepsilon_n)$. Denote by $\Delta$ a cemetery point added to $\Zd$.
Then, we put $\epsilon_0=\{0\}$, $t_0=0$, $x_0=0$ and recursively define
\begin{itemize}
\item if $x_i=\Delta$,  then $t_{i+1}=+\infty$, $x_{i+1}=\Delta$ and $\epsilon_{i+1}=\epsilon_i$.
\item if $x_i \neq \Delta$ (and thus $t_i<+\infty$), then $t_{i+1}=t_i+{}^{\epsilon_i}\tau^{x_i}$; if moreover ${}^{\epsilon_i}\tau^{x_i}<+\infty$ and ${}^{\epsilon_i}L^{x_i}\ne \varnothing$, then 
$$x_{i+1}=\min {}^{\epsilon_i}L^{x_i} \text{ and }\epsilon_{i+1}=(\epsilon_i,{}^{\epsilon_i}N^{x_i}),$$
where the $\min$ is for the lexical order on $\Zd$. Otherwise, we set $x_{i+1}=\Delta$ and $\epsilon_{i+1}=\epsilon_i$.
\end{itemize}

Then, we define
$$K=\min\{k\ge1: \; t_{k+1}=+\infty\}, \quad T=t_K, \text{ and } D=x_K.$$
For $i\le K$ and $e \in \Edo$, we put $W_e^n={}^{\epsilon_i}\eta_e^{n-t_i}$ for $n\in [t_i,t_{i+1}[$. \\
Finally, for each $n \ge 1$ and each $e \in \Edo$, we define $W'^n_e={}^{\epsilon_K}\eta'^{n}_{e-x_K}$.

This procedure, close to the classical so-called ``restart argument'' can be described as follows: starting from $0$,  we exhibit with ${}^{\{0\}}\nu$ a coupling between dependent and independent percolations up to time $t_1={}^{\{0\}}\tau^0$ when independent percolation dies. Then, we record the history of the dependent percolation in $\epsilon_1$, and pick some point $x_1$ occupied by the dependent percolation process in the terminal configuration. 
We then construct another coupling ${}^{\epsilon_1}\nu$ between the dependent percolation and some new independent percolation process starting from $x_1$, following this coupling until time $t_2$ when the new independent percolation also dies. We can complement the history of the dependent percolation and get  $\epsilon_2$, then choose $x_2$  occupied by the dependent percolation process in the terminal configuration, and so on.

We will soon see that $K$ is almost surely finite; hence $t_K<+\infty$ and $t_{K+1}=+\infty$: this can occurs for two reasons:
\begin{itemize}
\item either ${}^{\epsilon_K}\tau^{x_K}=+\infty$, which means that the independent oriented percolation starting from $x_K$ at time $t_K$ lives for ever (and so does the dependent oriented percolation by stochastic domination);
\item or  ${}^{\epsilon_K}\tau^{x_K}<+\infty$ and ${}^{\epsilon_K}L^{x_K}= \varnothing$, which means that the dependent oriented percolation died exactly at the same time as the independent oriented percolation starting from $x_K$ at time $t_K$
\end{itemize}
This procedure stops either because we find a time $t_K$ when our $K$th independent percolation process survives, or because the dependent percolation process died together with the independent one.

Let us denote by $\mathcal{T}_n$ the $\sigma$-field generated by the
$({}^E\eta,{}^E\eta')_{|E|\le n}$. 
We have, for $\alpha>0$,
\begin{eqnarray*}
&& \E[\exp(\alpha {}^{\epsilon_n}\tau^{x_n})\1_{\{K>n\}}|\mathcal{T}_{t_n}] \\
& = & \E[\exp(\alpha {}^{\epsilon_n}\tau^{x_n})\1_{\{t_{n+1}<+\infty\}}|\mathcal{T}_{t_n}] \\
& = &  \E[\exp(\alpha {}^{\epsilon_n}\tau^{x_n})\1_{\{t_n<+\infty, {}^{\epsilon_{n}}L^{x_{n}}\ne \varnothing,  {}^{\epsilon_n}\tau^{x_n}<+\infty\}}|\mathcal{T}_{t_n}] \\
& \le & \1_{\{K>n-1\}}\int \1_{\{\tau^0<+\infty\}}\exp(\alpha \tau^0)\ d\mathcal{B}(q')^{\otimes \Edo \times \N^*}.
\end{eqnarray*}
Thus, since $q'>\pcdir(d+1)$, if we put $r=\int \1_{\{\tau^0<+\infty\}} \exp(\alpha\tau^0) \ d \mathcal{B}(q')^{\otimes \Edo \times \N^*}$, we can choose $\alpha>0$ small enough to have $r<1$;
then
\begin{eqnarray*}
 \E[\exp(\alpha t_{n+1})\1_{\{K=n+1\}}] 
& \le & \E[\exp(\alpha ({}^{\epsilon_0}\tau^{x_0}+\dots+{}^{\epsilon_n}\tau^{x_n}))\1_{\{K>n\}}] \\
& \le & r  \E[\exp(\alpha t_{n})\1_{\{K>n-1\}}]\le r^{n+1}, \\
\text{then } \E[\exp(\alpha T)] & \le & \sum_{i=0}^{+\infty} r^{i+1}=\frac{r}{1-r}.
\end{eqnarray*}
Particularly, since $K\le T$, we get the existence of exponential moments for $K$, and the fact that $K$ is almost surely finite.

Stacking the conditional laws up, we can check that the field $W$ has the desired distribution.

Assume that $\tau^0(W)<+\infty$ and $K=k$. Then $t_k<+\infty$ and $t_{k+1}=+\infty$. For each $n \in[t_k, +\infty[$, we have by construction $(W^n_e)_{e \in \Edo}=({}^{\epsilon_k}\eta_e^{n-t_k})_{e \in \Edo}$. If ${}^{\epsilon_k}L^{x_k}\neq \varnothing$, then 
$t_{k+1}=t_k+{}^{\epsilon_k}\tau^{x_k}=+\infty$, which implies that ${}^{\epsilon_k}\tau^{x_k}=+\infty$, which can not happen because $\tau^0(W)<+\infty$. Thus ${}^{\epsilon_k}L^{x_k}= \varnothing$, so $\tau^0(W)\le t_k=T$. The inequality $\tau^0(W)\ge t_k$ directly follows from the inclusion between independent and dependent percolations. Finally, if $\tau^0(W)<+\infty$, then $T=\tau^0(W)$.

On the event  $\{\tau^0(W)=+\infty\}$, we have by construction $D\in \xi^0_T(W)$, so the inclusion property gives $\forall n\ge 0\quad \xi^0_{T+n}(W)\supset \quad D+\xi^0_{n}(W')$.
Let $B$ be any Borel set $B$ in $\{0,1\}^{\Eddo}$ and define, for $x\in\Zd$, $x.B=\{(\eta^n_{e+x})_{e \in \Edo, n \ge 1}: \; \eta \in B\}$.
Noting that  $\{\tau^0(W)=+\infty,K=n,\epsilon_n=E,x_n=x\}\subset\{{}^{E}\tau^{x}=+\infty\}$, we get by independence that
\begin{eqnarray*} 
 & &\P(\tau^0(W)=+\infty,K=n,\epsilon_n=E,x_n=x,W'\in B)\\ & = &\P(\tau^0(W)=+\infty,K=n,\epsilon_n=E,x_n=x,{}^E\eta'\in (-x).B)\\
 & = & \P(\tau^0(W)=+\infty,K\ge n,\epsilon_n=E,x_n=x,{}^E\eta'\in (-x).B,{}^E\tau^x=+\infty)\\
 & = & \P(\tau^0(W)=+\infty,K\ge n,\epsilon_n=E,x_n=x)\P({}^E\tau^x=+\infty,{}^E\eta'\in (-x).B)\\
& = & \P(\tau^0(W)=+\infty,K\ge n,\epsilon_n=E,x_n=x) \P_{q'}(\tau^0=+\infty,B)
\end{eqnarray*}
Summing on all possible values for $E,n,x$, we obtain the existence of $c$ such that 
$$\forall B\in\mathbb{B}(\{0,1\}^{\Eddo})\quad \P(\tau^0(W)=+\infty,W'\in B)=c\P_{q'}(\tau^0=+\infty,B).$$
The constant $c$ is identified by taking $B=\Omega$, so we get
 $\P(W'\in B|\tau^0(W)=+\infty)=\P_{q'}(B|\tau^0=+\infty)$.
\end{proof}

\section{Some properties of dependent oriented percolation}

The coupling Theorem~\ref{notreLSS} permits to transfer some properties from supercritical  independent oriented percolations to dependent oriented percolations in  $\mathcal{C}_d(M,q)$ for $q$ close to $1$. Practically, those processes often arise after the use of a dynamical renormalization scheme.

As a by-product of the proof of Theorem~\ref{notreLSS}, we can get  information on the exponential moments for extinction times. For oriented Bernoulli percolation, a Peierls-like argument shows that
\begin{equation}
\label{pbeta}
 \lim_{p \to 1} \inf_{\beta>0} \int \1_{\{\tau^0<+\infty\}} \exp(\beta \tau^0) \ d \P_p=0,
\end{equation}
which can be transposed to the dependent fields of  $\mathcal{C}_d(M,q)$ as follows:
\begin{coro}
\label{petitmomentexpo}
Let $\epsilon>0$ and $M>1$. There exist $\beta>0$ and $q<1$ such that for each $\chi\in\mathcal{C}_d(M,q)$, 
$$\E_{\chi}[\1_{\{\tau^0<+\infty\}}\exp(\beta\tau^0)]\le \epsilon.$$
\end{coro}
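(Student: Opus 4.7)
The plan is to extract the corollary directly from the coupling constructed in the proof of Theorem~\ref{notreLSS}, together with the Peierls-type estimate~(\ref{pbeta}) applied to the independent oriented percolation to which the dependent field is coupled.

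First, given $\epsilon>0$ and $M$, I would use~(\ref{pbeta}) to choose $q'<1$ close enough to $1$ and $\beta>0$ small enough so that
\[
r:=\int \1_{\{\tau^0<+\infty\}} \exp(\beta\tau^0)\, d\P_{q'} \le \frac{\epsilon}{1+\epsilon},
\]
which in particular gives $r<1$ and $\frac{r}{1-r}\le \epsilon$. Since $g_M(q)\to 1$ as $q\to 1$, I can then pick $q<1$ with $g_M(q)\ge q'$ (and in particular $g_M(q)>\pcdir(d+1)$), so that the hypotheses of Theorem~\ref{notreLSS} apply to every $\chi\in\mathcal{C}_d(M,q)$.

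Next, for any such $\chi$, I invoke the coupling from Theorem~\ref{notreLSS}: on the associated probability space there is a field $W$ of distribution $\chi$ and a random time $T$ with $T=\tau^0(W)$ on $\{\tau^0(W)<+\infty\}$. Reusing the estimate established in the proof of Theorem~\ref{notreLSS} (only replacing ``$\alpha$'' there by $\beta$), the stochastic domination of $m_E$ by $\mathcal{B}(g_M(q))^{\otimes \Edo\times\N^*}$ (hence by $\mathcal{B}(q')^{\otimes \Edo\times\N^*}$, by monotonicity) yields recursively
\[
\E\bigl[\exp(\beta t_{n+1})\1_{\{K=n+1\}}\bigr] \le r\,\E\bigl[\exp(\beta t_n)\1_{\{K>n-1\}}\bigr] \le r^{n+1},
\]
so summing over $n\ge 0$ gives $\E[\exp(\beta T)]\le \frac{r}{1-r}\le \epsilon$.

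Finally, since $T=\tau^0(W)$ on $\{\tau^0(W)<+\infty\}$, we obtain
\[
\E_{\chi}\bigl[\1_{\{\tau^0<+\infty\}}\exp(\beta\tau^0)\bigr] = \E\bigl[\1_{\{\tau^0(W)<+\infty\}}\exp(\beta T)\bigr] \le \E[\exp(\beta T)] \le \epsilon,
\]
which is the desired bound. There is no real obstacle here: the corollary is essentially a rereading of the computation already performed inside the proof of Theorem~\ref{notreLSS}, the only new ingredient being the quantitative use of~(\ref{pbeta}) to make $r$ (and hence $r/(1-r)$) as small as we want, uniformly over all $\chi\in\mathcal{C}_d(M,q)$.
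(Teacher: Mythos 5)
Your proof is correct and follows essentially the same route as the paper's: both extract $T=\tau^0$ on $\{\tau^0<+\infty\}$ from the coupling in Theorem~\ref{notreLSS}, bound $\E[e^{\beta T}]$ by $r/(1-r)$, and then invoke~(\ref{pbeta}) to make $r$ (hence $r/(1-r)$) as small as desired. The only cosmetic difference is that you fix $q'$ and $\beta$ first and then choose $q$ with $g_M(q)\ge q'$, replacing the $\mathcal{B}(g_M(q))$ coupling by a $\mathcal{B}(q')$ coupling; this is a sound simplification that sidesteps any question about the behaviour in $p$ of $\int\1_{\{\tau^0<\infty\}}e^{\beta\tau^0}\,d\P_p$, whereas the paper simply reads $r$ as the $g_M(q)$-integral and applies~(\ref{pbeta}) directly as $q\to1$.
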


\begin{proof}
We observed in the proof of Theorem~\ref{notreLSS}  that $T={\tau}^0$ when $\{{\tau}^0<+\infty\}$. We also have the bound 
$$\E_{\chi}[\1_{ \{ \tau^0<+\infty \} }\exp(\beta{\tau^0})]\le\E_{\chi}[e^{\beta T}]\le\frac{r}{1-r},$$
with $r=\int \1_{\{ \tau^0<+\infty \}} \exp(\beta\tau^0) \ d \P_{g_M(q)}$; the result then follows from (\ref{pbeta}).
\end{proof}

As a direct application  of the coupling Theorem~\ref{notreLSS}, 
 the linear growth of the set  $H_n$ of points  reached before time~$n$, given in Lemma~\ref{momtprime} for independent directed percolation, can be transposed to any dependent percolation in $\mathcal{C}_d(M,q)$:
\begin{coro}
\label{croitlin}
Let  $d,M\ge 1$ be fixed positive integers, and let $q \in (0,1)$ be such that $g_M(q)>\pcdir(d+1)$.
There exist positive constants $\beta,D_1,D_2$ 
and random variables $(S^y)_{y \in \Zd}$ such that 
$$\forall y \in \Zd \quad \E[e^{\beta S^y}]\le D_2,$$
and such that for each field $\chi \in \mathcal C_d(M,q)$, the directed percolation associated to  $\chi$ satisfies: on the event $\{\tau^y=+\infty\}$,
$$\forall n \in \N \quad y+[-D_1n,D_1n]^d\subset H^y_{S^y+n}.$$
\end{coro}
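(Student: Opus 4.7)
The plan is to transfer the linear growth of Lemma~\ref{momtprime}, valid for supercritical independent oriented percolation, to the dependent field $\chi$, using the coupling of Theorem~\ref{notreLSS} as a bridge. First, I apply Theorem~\ref{notreLSS} to $\chi$ with starting point $y$ in place of $0$: the construction in the proof of Theorem~\ref{notreLSS} depends on the starting point only through the initial history $\xi_0 = \{y\}$, so it goes through verbatim and produces, on a common probability space, a dependent field $W \sim \chi$, an independent percolation field $W'$ of parameter $p' := g_M(q) > \pcdir(d+1)$, a delay $T$ with $\E[e^{\beta_0 T}] \le \gamma$, and a shift $D$ with $\|D\|_1 \le T$, such that on $\{\tau^y(W) = +\infty\}$,
$$\xi^y_{T+n}(W) \supset D + \xi^y_n(W') \qquad \text{for all } n \ge 0,$$
the field $W'$ being distributed as independent percolation at $y$ conditioned on survival.

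Second, I feed $W'$ into Lemma~\ref{momtprime}: choosing $D_1 < 1/C$ (with $C$ from that lemma) and applying the bound with $L = D_1 k$ and $n = k - CL \asymp k$ yields
$$\P(\tau^y(W') = +\infty,\ y + [-D_1 k, D_1 k]^d \not\subset H^y_k(W')) \le A e^{-B' k}.$$
A standard Borel--Cantelli / summation argument then shows that
$$R^y := \inf\bigl\{N \ge 0 : \forall k \ge N,\ y + [-D_1 k, D_1 k]^d \subset H^y_k(W')\bigr\}$$
is almost surely finite on $\{\tau^y(W') = +\infty\}$ and admits an exponential moment uniform in $y \in \Zd$ and in $\chi \in \mathcal{C}_d(M,q)$.

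Third, I set $S^y := T + R^y + \lceil T/D_1 \rceil$ on $\{\tau^y(W) = +\infty\}$ and $S^y := 0$ elsewhere. A short geometric verification then combines the three ingredients: on survival, $\tau^y(W') = +\infty$ as well, so for every $m \ge 0$ one has $S^y + m - T \ge R^y$, hence
$$H^y_{S^y+m}(W) \supset D + H^y_{S^y+m-T}(W') \supset D + y + [-D_1(S^y+m-T),\ D_1(S^y+m-T)]^d,$$
and the bounds $\|D\|_\infty \le \|D\|_1 \le T$ together with $D_1(S^y - T) \ge T$ imply that this set contains $y + [-D_1 m, D_1 m]^d$, which is the required inclusion.

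The main obstacle is the uniform exponential moment for $S^y$, since $T$ and $R^y$ are produced by the same coupling construction and are a priori dependent. Unpacking the proof of Theorem~\ref{notreLSS} resolves this: on the event $\{K = k,\, \epsilon_k = E,\, x_k = x\}$, which completely determines the pair $(T, D) = (t_k, x)$, the field $W'$ is by construction a freshly drawn copy of independent percolation started at $x$ and conditioned on its own survival, so $R^y$ is conditionally independent of $(T, D)$ given this data. A Cauchy--Schwarz step (or simply picking $\beta$ strictly smaller than both the exponential rate of $T$ and that of $R^y$) then yields $\E[e^{\beta S^y}] \le D_2$ with constants depending only on $d, M, q$, as required.
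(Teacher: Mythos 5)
The paper itself gives no written proof of Corollary~\ref{croitlin}; it simply states that it is a ``direct application'' of the coupling Theorem~\ref{notreLSS} together with Lemma~\ref{momtprime}. Your proposal supplies exactly the intended chain of reasoning: run the restart/coupling construction from the starting point $y$, extract from Lemma~\ref{momtprime} an exponential tail for the variable $R^y$ controlling the linear growth of the freshly coupled independent percolation, set $S^y=T+R^y+\lceil T/D_1\rceil$, and verify the geometric inclusion using $\|D\|_\infty\le T$; the uniform exponential moment for $S^y$ follows because, given $\{\tau^y(W)=+\infty\}$, the field $W'$ (hence $R^y$) is independent of $(T,D)$ by the very factorization established at the end of the proof of Theorem~\ref{notreLSS}, and $\P(\tau^y(W)=+\infty)$ is bounded below uniformly in $\chi$ by stochastic domination. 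This is correct and matches the paper's approach.
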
 


Having in mind an accurate study of certain particle systems, it could be interesting to have estimates on the density of bi-infinite points in the dependent oriented percolation.
Thus, we define
\begin{eqnarray*}
G(x,y)&=& \{k\in\N\quad (x,0)\to(y,k)\to\infty\} \\
\gamma(\theta,x,y) & = & \inf\{n\in\N: \quad \forall k\ge n \quad \Card{\{0,\dots,k\}\cap G(x,y)}\ge\theta k\}
\end{eqnarray*}

\begin{coro}
\label{lineairegamma}
Let 
$M>1$. There exist $q_0<1$ and positive constants $A,B,\theta,\beta$  such that for each  $\chi\in\mathcal{C}_d(M,q_0)$, we have
$$\forall x,y\in\Zd\quad\forall n\ge 0\quad \P(+\infty>\gamma(\theta,x,y)> \beta \|x-y\|_1+n)\le Ae^{-Bn}.$$
\end{coro}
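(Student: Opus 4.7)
The plan is to first establish the analogous exponential bound for independent supercritical Bernoulli oriented percolation, then transfer it to the dependent setting via the coupling of Theorem~\ref{notreLSS}.

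\textbf{Step 1 (independent case).} For $q' > \pcdir(d+1)$, I would first prove that there exist $\theta_0, \beta_0, A_0, B_0 > 0$ such that
$$\P_{q'}\bigl(\tau^0=+\infty,\ \gamma(\theta_0, 0, z) > \beta_0 \|z\|_1 + n\bigr) \le A_0 e^{-B_0 n}$$
for every $z \in \Zd$ and $n \ge 0$. Two ingredients enter. First, Lemma~\ref{momtprime} gives exponential control on the first time at which $z$ is reached from $(0,0)$: this first hit occurs by time $C\|z\|_1 + O(n)$ with probability at least $1 - Ae^{-Bn}$ on survival. Second, once $z$ has been reached with an infinite forward cluster, the set $G(0,z)$ should have positive asymptotic density with exponential control on its fluctuations. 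The key observation here is that in the Bernoulli case $\{z \in \xi^0_k\}$ depends only on bonds with time coordinate $\le k$, while $\{(z,k) \to \infty\}$ depends only on bonds with time coordinate $> k$, so the two events are independent. Replacing ``infinite path from $(z,k)$'' by ``open path from $(z,k)$ exiting a space-time box of width $L$'' (the two agree up to probability $O(e^{-cL})$ by supercriticality) produces events $\tilde X_k \le \1_{k \in G(0,z)}$ that are genuinely independent for $k$'s separated by at least $2L$ and each of probability bounded below by some $\rho > 0$ for $k$ in a window starting at $\beta_0\|z\|_1$. A Chernoff bound on $\sum \tilde X_k$ then gives the desired exponential concentration.

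\textbf{Step 2 (transfer via the coupling).} Choose $q_0 < 1$ with $g_M(q_0) > \pcdir(d+1)$, so that Theorem~\ref{notreLSS} applies. For any $\chi \in \mathcal C_d(M, q_0)$ the coupling supplies $T$ and $D$ with $\|D\|_1 \le T$ and $\E[\exp(\beta_1 T)] \le \gamma_1$, together with a supercritical independent field $W'$ of parameter $g_M(q_0)$ (conditioned on survival) such that $\xi^0_{T+n}(W) \supset D + \xi^0_n(W')$ on $\{\tau^0(W) = +\infty\}$. Because the underlying Strassen coupling dominates bond by bond after the restart, every open path of $W'$ lifts to an open path of $W$, so the inclusion upgrades to bi-infinite paths, yielding
$$G^W(0, y) \supset T + G^{W'}(0, y - D).$$
An elementary manipulation of the definition of $\gamma$ then gives $\gamma^W(\theta_0/2, 0, y) \le 2T + \gamma^{W'}(\theta_0, 0, y - D)$, the factor $1/2$ absorbing the time shift by $T$. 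Splitting the event $\{\gamma^W(\theta_0/2, 0, y) < +\infty,\ \gamma^W(\theta_0/2, 0, y) > 3\beta_0 \|y\|_1 + n\}$ according as $T > n/(4\beta_0)$ or $\gamma^{W'}(\theta_0, 0, y-D) > \beta_0 \|y-D\|_1 + n/4$ (using $\|y - D\|_1 \le \|y\|_1 + T$), applying Markov to the first alternative with the exponential moments of $T$, and Step~1 to the second conditionally on $(T, D)$ (the conditional law of $W'$ given $(T, D)$ being the Bernoulli survival law by the proof of Theorem~\ref{notreLSS}), yields the desired bound for $x = 0$. The general $x \in \Zd$ follows by applying the argument to the spatially translated field, which still belongs to $\mathcal C_d(M, q_0)$ since the defining conditions are spatially translation invariant.

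\textbf{Main obstacle.} The real difficulty is Step~1: uniformly in $z$, bounding below by a positive constant the probability of a ``bi-infinite visit'' $\{k \in G(0, z)\}$ for $k$ in an appropriate window, and arranging enough independence between distinct $k$'s to allow a Chernoff-type concentration. The past/future decoupling of Bernoulli percolation and the box-exit approximation together address this, but one must combine the linear-growth estimates of Lemma~\ref{momtprime} with a restart at the first hit of $z$ with infinite continuation to obtain the density $\rho > 0$ uniformly in $z$.
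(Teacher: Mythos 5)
Your two-step architecture (independent case first, then transfer via Theorem~\ref{notreLSS}) matches the paper's, and your Step~2 transfer is essentially correct: the Strassen coupling inside the proof of Theorem~\ref{notreLSS} does dominate bond-by-bond, so open paths of $W'$ lift to open paths of $W$ and the inclusion upgrades to bi-infinite visits, and the conditional law of $W'$ given $(T,D)$ and survival is indeed the Bernoulli survival law. The real problem is Step~1, which as written does not go through.

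Two concrete gaps. First, the box-exit approximation is in the wrong direction: ``open path from $(z,k)$ exiting a box of width $L$'' is a \emph{super}-event of $\{(z,k)\to\infty\}$, not a sub-event, so the $\tilde X_k$ you build from it satisfies $\tilde X_k \ge \1_{k\in G(0,z)}$, the reverse of what a lower bound on $|G(0,z)\cap[0,k]|$ requires. Second, even if that were fixed, the factor $\{z\in\xi^0_k\}$ depends on the entire past from the origin, and for two distinct $k$'s these events share all bonds up to $\min(k_1,k_2)$; no spatial separation will make them independent. You cannot simultaneously have the $\tilde X_k$'s be (i) sub-events of $\{k\in G(0,z)\}$ and (ii) independent, because membership in $G(0,z)$ is inherently nonlocal in both time directions. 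A plain Chernoff bound therefore does not apply.

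The paper circumvents both problems with a cleaner decomposition. It introduces $\tilde I_\infty=\{(x,n): \Zd\times\{0\}\to(x,n)\to\infty\}$ and uses the coupled region $K_k$ of Lemma~\ref{momtprime}: once $K_k$ contains a box of side $\sim C'\|z\|_\infty$ (an event of exponentially high probability on survival, by Lemma~\ref{momtprime}), the origin-anchored condition $\{(0,0)\to(z,k)\to\infty\}$ reduces to the spatially homogeneous condition $\{(z,k)\in\tilde I_\infty\}$. The density of $\tilde I_\infty$ along a vertical line is then handled by the contour estimate of Lemma~\ref{lemme-biinfini-horizontal}, which bounds $\P_p(A\cap\tilde I_\infty=\varnothing)\le 16\rho^{|A|-2}$ for finite $A\subset\{0\}\times\N$; a union over sparse subsets $A$ of the target window yields the exponential bound without needing any literal independence between distinct heights. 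You would do well to replace your Chernoff plan by this $K_k$ plus contour-argument route, or at least to replace the box-exit approximation by an argument that produces genuine sub-events of $G(0,z)$ (e.g., vertical ``bridge'' events inside disjoint slabs that can be chained and then anchored to the origin and to infinity), which is a more delicate construction than what you sketched.
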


For instance, those estimates allow to study the large deviations of the asymptotic shape of the contact process~\cite{GM-contact-gd}.
Considering Theorem~\ref{notreLSS}, Lemma~\ref{lineairegamma} will easily follow from the independent case. We define
$$\tilde{I}_\infty = \{(x,n) \in \Zd\times \N: \; \Zd \times\{0\} \to (x,n) \to \infty\}.$$                     
One notes that if  $x \in K_k^0$ and $(x,k) \in \tilde{I}_\infty$, then by the definition of the coupled region $K_k^0$, $(0_{\Zd},0) \to(x,k)\to \infty$. 

\begin{lemme}
\label{lemme-biinfini-horizontal}
Consider independent directed percolation on $\Zd \times \N$.
For each $\rho \in (0,1)$, there exists  $p_0(\rho)<1$  such that for each $p>p_0(\rho)$,
\begin{equation*}
\forall \text{ finite }A \subset \{0\}\times \N  \quad \P_p(A \cap \tilde{I}_\infty=\varnothing) \le 16 \rho^{|A|-2}.
\end{equation*}
\end{lemme}

\begin{proof}
Note first that by inclusion, it is sufficient to prove the lemma for $d=1$; when $d=1$, we can use contour arguments. The oriented graph we defined is not the classical graph for oriented percolation in dimension 2: our graph has more edges. But once again, by inclusion, it is sufficient to prove the lemma for the classical oriented percolation model in dimension 2 (see for example Durrett~\cite{MR757768}), for which the dual graph is particularly simple. So we consider i.i.d. percolation with parameter $p$ on the following oriented graph $\mathcal L_+$:
\begin{itemize}
\item The set of sites is $\mathcal{V}=\{(z,n)\in \Z \times \Z: \; |z|+n \text{ is even}\}$.
\item There is an oriented edge from $(z_1,n_1)$ to $(z_2,n_2)$ if and only if $n_2=n_1+1$ and  $|z_2-z_1|=1$.
\end{itemize}
The critical probability for this model is denoted by $\overrightarrow{p_c}$.

For the need of the proof, we define  $\mathcal L_-$ by simply reversing the oriented edges of $\mathcal L_+$. The state -- open or closed -- of an edge is the same in the two graphs. We denote by $\to_+$, resp. $\to_-$, the event of being linked by an open oriented path in $\mathcal L_+$, resp. in  $\mathcal L_-$. As before, we define for $\epsilon \in \{ +,- \}$:
\begin{eqnarray*}
\xi^{x}_{\epsilon,n} & = & \{y \in \Z: \; (y, n)\in \mathcal L_\epsilon(1), \; (x,0)\to_\epsilon(y,n)\}, \\
\tau^x_\epsilon & = & \max\{\epsilon n \in \N:\; \xi^x_{\epsilon,n} \neq \varnothing\},\\
I_{\infty}^\epsilon & = & \{(x,n) \in \mathcal L_\epsilon; \: \tau^x_\epsilon \circ \theta_{\epsilon n}=+\infty\}, \\
I_{\infty} & = & I_{\infty}^+ \cap I_{\infty}^-.
\end{eqnarray*}
As $\tilde{I}_\infty\supset I_{\infty}$, it is sufficient to prove the lemma when we replace $\tilde{I}_\infty$ by $I_{\infty}$: Let $A$ be a fixed finite subset of $\{0\}\times 2\N$ and let $n$ be the smallest integer larger than  $|A|/2$:
\begin{eqnarray*}
\P_p(A\cap I_{\infty}=\varnothing )& \le &
\P_p(\exists B\subset A; |B|=n;B \cap I_{\infty}^+=\varnothing)
+\P_p(\exists B\subset A; |B|=n;B\cap I_{\infty}^-=\varnothing)\\ 
&  \le & 2 \P_p(\exists B\subset A; |B|=n;B\cap I_{\infty}^+=\varnothing)\\
& \le & 2\sum_{B\subset A; |B|=n}\P_p(B \cap I_{\infty}^+=\varnothing)\\
\end{eqnarray*}
We work from now on with the graph $\mathcal L_+$. 
We fix a finite set $B\subset A$. 
For $v\in\mathcal V$, denote by $C(v)$ the open cluster starting from $v$:
$$C(v)=\{ w \in  \mathcal V: \; v \to_+ w \}.$$
We set $C^f(v)=C(v)$ if $C(v)$ is finite and $C^f(v)=\varnothing$ otherwise.

We set
$$C^f(B)=\miniop{}{\cup}{v\in B} C^f(v).$$
If $C \subset \mathcal V$ is a finite set of vertices, 
we denote by $\partial_e C$ the set of edges entering in or exiting from $C$ 
and by $\partial_e^*C$ the union of the segment lines corresponding to the dual edges of $\partial_e C$: it is a union of circuits.  Note that
$$|\partial_e C| \ge 2 |C \cap (\{0\}\times 2\Z)|.$$
Thus, as $B \subset A  \subset \{0\}\times 2\Z$,
\begin{align*}
 \{B\cap I^+_{\infty}=\varnothing\} & \subset \{B \subset C^f(B) \} \subset
\{ |\partial_e C^f(B)| \ge 2 |B|\} \\
\text{and so } \P_p(B\cap I^+_{\infty}=\varnothing) & \le \sum_{i \ge |B|/2} \P(|\partial_e C^f(B))|=4i)= \sum_{i \ge |A|/4} \P(|\partial_e C^f(B)|=4i). 
\end{align*}
Let $i$ be a fixed integer 
and assume that $|\partial_e C^f(B)|=4i$. Note first that all edges exiting from $C^f(B)$ must be closed. 
Looking on a "diagonal line", we see that there are at least as many edges exiting from $C^f(B)$ than edges entering in $C^f(B)$ (here, we count an edge which is both entering in $C^f(B)$ and exiting from $C^f(B)$ as an exiting edge), and thus at least half edges in $C^f(B)$ must be closed.
Next, $\partial_e^* C^f(B)$ is composed of at most $i$ circuits. In $C^f(B)$, consider the set of minima for the order relation $\to$: all edges entering $B$ in these points are necessarily in $\partial_e C^f(B)$, which allows to root the circuits of $\partial_e^* C^f(B)$ to some points in $B$. So, 
\begin{eqnarray*}
\P_p(\partial_e C^f(B))=4i) & \le & {|B| \choose i}4^{4i} \P\left(\sum_{k=1}^{4i} X_k \le 2i\right)\le 2^n 4^{4i} \P\left(\sum_{k=1}^{4i} X_k \le 2i\right),
\end{eqnarray*}
where $(X_k)_{k \ge 1}$ are i.i.d random variable with Bernoulli law of parameter $p$.  Now, large deviations inequalities imply that for every $r \in (0,1)$, there exists $p(r) \in (0,1)$ such that $\forall p \ge p(r)$, 
$$\P\left(\sum_{k=1}^{4i} X_k \le 2i\right) \le r^{4i}.$$
Let then $\rho \in (0,1)$ be fixed, and apply the previous estimate for  $r=\frac{\rho}{4\sqrt 2} \in (0,1)$. This gives, for every $p \ge p(r)$, 
\begin{align*}
\P_p(A\cap I_{\infty}=\varnothing )& \le  2\sum_{B\subset A; |B|=n}\P_p(B \cap I_{\infty}^+=\varnothing) 
 \le  2 \sum_{i \ge |A|/4} \P(\partial_e C^f(B))=4i) \\
& \le  4 \times 2^{|A|/2} \sum_{i \ge |A|/4}  (4r)^{4i} \le \frac{4}{1-4r} (4\sqrt 2 r)^{|A|} \le 16 \rho^{|A|}.
\end{align*}
\end{proof}

\begin{lemme}
\label{lemmegammaPOd}
We consider independent directed percolation on  $\Zd \times \N$. There exist positive constants $A,B,\theta,\beta$ and $p<1$ such that for every $x,y \in \Zd$,
\begin{equation}
\label{gammaPOd}
\forall n \in \N \quad \P_p(\tau^x=+\infty, \; \gamma(\theta,x,y) \ge \beta \|y-x\|_\infty+n)\le Ae^{-Bn}.
\end{equation}
\end{lemme}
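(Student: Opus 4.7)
The plan is to reduce to a single-$k$ tail bound by a union bound, then prove that bound via a one-point estimate combined with an exponential concentration argument. Since $\{\gamma(\theta,x,y)\ge N\}$ amounts to the existence of some $k\ge N$ with $N_k(x,y):=|\{0,\dots,k\}\cap G(x,y)|<\theta k$, it suffices to establish, for $p$ close enough to $1$ and every $k\ge\beta\|y-x\|_\infty+n$,
\[
\P_p(\tau^x=+\infty,\ N_k(x,y)<\theta k)\le A' e^{-B'(k-\beta\|y-x\|_\infty)},
\]
and a geometric sum in $k$ yields the target $Ae^{-Bn}$.

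For the one-point estimate I would decompose $\{j\notin G(x,y)\}\subseteq\{y\notin\xi^x_j\}\cup\{(y,j)\not\to\infty\}$. The second event depends only on bonds after time $j$ and has probability $1-\rho_p$, where $\rho_p:=\P_p((0,0)\to\infty)$ tends to $1$ as $p\to 1$. The first is controlled, on $\{\tau^x=+\infty\}$, by Lemma~\ref{momtprime} (applied after a translation) with $L=\|y-x\|_\infty$: off an exceptional set of probability $Ae^{-B(j-CL)}$ one has $y\in K^x_j$, and on this event $y\in\xi^x_j$ as soon as $y\in\xi^{\Zd}_j$; by time-reversal duality of oriented percolation, $\P_p(y\notin\xi^{\Zd}_j)\le 1-\rho_p$. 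For $p$ close enough to $1$ and $\beta$ large enough, this gives an expected density of \emph{bad} times $j\in[\beta\|y-x\|_\infty,k]$ strictly below $1-\theta$.

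The main obstacle is upgrading this mean estimate to an exponential concentration for $N_k$: the indicators $\1_{\{j\notin G(x,y)\}}$ are complements of increasing events and are therefore positively correlated in $j$, so FKG pushes the wrong way. I would handle this by truncation plus interlacing. Fix a large integer $T$ and replace $\{(y,j)\to\infty\}$ by the local event that the cluster from $(y,j)$ is still alive at time $j+T$; the truncated indicators at times $j$ and $j'$ with $|j-j'|>T$ then depend on disjoint sets of bonds, and conditional on the history up to time $j$ each is a Bernoulli of parameter $\rho_p^{(T)}:=\P_p(\tau^0>T)\to\rho_p$ as $T\to\infty$. Splitting $\{0,\dots,k\}$ into $T$ arithmetic sub-progressions of step $T$ reduces the count along each sub-progression to a sequence of conditionally independent Bernoulli variables, to which a standard martingale Chernoff bound yields the required exponential concentration. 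The forward component $\{y\in\xi^x_j\}$ is absorbed separately by summing the tail from Lemma~\ref{momtprime} over $j\ge\beta\|y-x\|_\infty+n$, which is itself exponentially small in $n$. Choosing $p$ and $T$ so that $\rho_p^{(T)}>\theta$ and combining these ingredients completes the argument.
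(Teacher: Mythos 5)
Your reduction to a single-$k$ tail bound by a union bound, followed by a split into a ``coupled region'' part (Lemma~\ref{momtprime}) and a ``bi-infinite density'' part, is exactly the paper's decomposition; the paper writes the bi-infinite event as $(y,j)\in\tilde I_\infty=\{(z,n):\Zd\times\{0\}\to(z,n)\to\infty\}$ and you write it as $\{y\in\xi^{\Zd}_j\}\cap\{(y,j)\to\infty\}$, which is the same thing on the coupled region. Where you genuinely diverge is the concentration step, and there your argument has a directional gap.

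You propose to replace $\{(y,j)\to\infty\}$ by the local event $G_j^T:=\{(y,j)\text{ alive at }j+T\}$. But $\{(y,j)\to\infty\}\subset G_j^T$: the truncated event is \emph{weaker}, so $|\{j\le k: B_j\cap G_j^T\}|$ \emph{over}-counts $|\{j\le k: j\in G(x,y)\}|$, and a lower bound on the truncated count gives no lower bound on the true count. To recover, you would additionally need exponential concentration for the correction term $|\{j\le k: G_j^T\setminus\{(y,j)\to\infty\}\}|$, whose summands are again correlated increasing/decreasing events — the very obstacle you were trying to circumvent. Moreover the backward factor $B_j=\{y\in\xi^{\Zd}_j\}$ depends on the whole slab $\Zd\times[0,j]$ and is therefore \emph{not} localized by the truncation: the indicators $\1_{B_j}$ for different $j$ share the same bonds near time $0$, and your interlacing only decouples the forward factors. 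Conditioning on $\mathcal{F}_j$ makes $G_j^T$ a fresh Bernoulli, but it also freezes $\1_{B_j}$, so the martingale Chernoff bound yields at best $\sum_j\1_{B_j\cap G_j^T}\approx\rho_p^{(T)}\sum_j\1_{B_j}$, and you still have no concentration for $\sum_j\1_{B_j}$.

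The paper sidesteps both problems at once with a Peierls-type contour estimate (its Lemma~\ref{lemme-biinfini-horizontal}): for any finite $A\subset\{0\}\times\N$, $\P_p(A\cap\tilde I_\infty=\varnothing)\le 16\rho^{|A|-2}$. Because a single dual contour can simultaneously block an entire time interval from both $\Zd\times\{0\}$ and $\infty$, the contour argument naturally trades a long run of bad times against a long closed dual path, giving the exponential-in-$|A|$ decay in one shot. Combined with a union bound over subsets of a length-$n/2$ interval (cost $2^{n/2+1}$), this yields the single-$k$ bound directly. If you want to keep a more probabilistic flavour, a workable alternative is a renewal/regeneration argument along the time axis (stop when the forward cluster from $(y,j)$ dies, restart at the next $j$), but the naive truncation-and-interlacing as you have written it does not close.
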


\begin{proof}
We actually prove the following simpler result: there exists  $p$ close to $1$, positive constants $A,B, C',\theta$ such that $\forall x \in \Zd \quad \forall n \in \N$
\begin{equation}
\label{pointssourcesPOd}
\P_p\left(
\begin{array}{c}
\tau^0=+\infty\\
 \Card{k \in \{C'\|x\|_\infty,\dots, C'\|x\|_\infty+n: \; (0_{\Zd},0) \to(x,k)\to \infty\}}\le \theta n
\end{array}
\right) \le Ae^{-Bn}.
\end{equation}

Let us show that (\ref{pointssourcesPOd}) implies (\ref{gammaPOd}). 
We note that $\gamma(\theta,x,y)$ has the same distribution as $\gamma(\theta,0,y-x)$, that $\theta<1$ and use~(\ref{pointssourcesPOd}):
\begin{eqnarray*}
&& \P_p \left(\tau^0=+\infty, \; \gamma(\theta,0,x) \ge \frac{C'}{\theta} \|x\|_\infty+n \right) \\
& = & \P_p \left(
\begin{array}{c}
\tau^0=+\infty, \; \exists k \ge \frac{C'}{\theta} \|x\|_\infty+n \\
\Card{\{l \in \{0..k\}: \; (0_{\Zd},0)\to(x,l)\to+\infty\}}\le \theta k
\end{array}
\right) \\
& \le & \P_p \left(
\begin{array}{c}
\tau^0=+\infty, \; \exists k \ge n \\
\Card{\{l \in \{C' \|x\|_\infty, \dots,C' \|x\|_\infty+k\}: \; (0_{\Zd},0)\to(x,l)\to+\infty\}}\le \theta k
\end{array}
\right) \\
& \le & \sum_{k \ge n} \P_p \left(
\begin{array}{c}
\tau^0=+\infty, \\ 
\Card{\{l \in \{C' \|x\|_\infty, \dots, C' \|x\|_\infty+k\}: \; (0_{\Zd},0)\to(x,l)\to+\infty\}}\le \theta k
\end{array}
\right) \\
& \le & \sum_{k \ge n} A\exp(-Bk).
\end{eqnarray*}
Taking $\beta=C' / \theta$, this proves~(\ref{gammaPOd}). 

Let us now prove (\ref{pointssourcesPOd}). We define
$$\tilde{I}_\infty = \{(x,n) \in \Vddo: \; \Zd \times\{0\} \to (x,n) \to \infty\}.$$                     
One notes that if  $x \in K_k^0$ and $(x,k) \in \tilde{I}_\infty$, then by the definition of the coupled region, $(0_{\Zd},0) \to(x,k)\to \infty$. We take $C'=\lceil 1/C \rceil$, where $C$ comes from Lemma~\ref{momtprime}: we choose any $\theta$ with  $0<\theta<1/4$ then
\begin{eqnarray*}
&& \P_p(\tau^0=+\infty, \; \Card{ \{k \in \{C'\|x\|_\infty, \dots,C'\|x\|_\infty+n\}: \; (0_{\Zd},0) \to(x,k)\to \infty\}}\le \theta n) \\
& \le & \P_p(\tau^0=+\infty, \; \exists k \ge C'\|x\|_\infty+\frac{n}2, \; K_k^0 \not\supset [-CC'\|x\|_\infty,CC'\|x\|_\infty]^d) \\
&& + \P_p(\Card{ \{k \in \{C'\|x\|_\infty+n/2, \dots,C'\|x\|_\infty+n\}: \; (x,k) \in \tilde{I}_\infty\} } \le \theta n)
\end{eqnarray*}
For the first term, we use Lemma~\ref{momtprime}:
\begin{eqnarray*}
&&\P_p(\tau^0=+\infty, \; \exists k \ge C'\|x\|_\infty+\frac{n}2,  \; K_k^0 \not\supset [-CC'\|x\|_\infty,CC'\|x\|_\infty]^d) \\
& \le & \sum_{k \ge n/2}\P_p(\tau^0=+\infty, \ \; K_{C'\|x\|_\infty+k} \not\supset [-CC'\|x\|_\infty,CC'\|x\|_\infty]^d) \\
 & \le &  \sum_{k \ge n/2} A \exp(-Bk).
\end{eqnarray*}
To control the second term, we use Lemma~\ref{lemme-biinfini-horizontal}.
With its notation, we  choose $0<\rho<1$ such that $2\rho^{1/2}<1$ and obtain, for $p\ge p_0(\rho)$,
\begin{eqnarray*}
&&  \P_p \left(\Card{k \in \{C'\|x\|_\infty+\frac{n}2,\dots, C'\|x\|_\infty+n\}: \; (x,k) \in  \tilde{I}_\infty} \le \theta n\right) \\
& \le & 2^{n/2+1}16\rho^{n/2-\theta n-3}.
\end{eqnarray*}
This concludes the proof of~(\ref{pointssourcesPOd}), and therefore of the Lemma.
\end{proof}
\section{An abstract restart procedure}

We formalize here the restart procedure for Markov chains.

Let $E$ be the state space where our Markov chains $(X^x_n)_{n\ge 0}$ evolve, where $x \in E$ denotes the starting point of the chain.
We  suppose that we have on our disposal a set  $\tilde{\Omega}$, an update function $f:E\times \tilde{\Omega}\to E$, and a probability measure $\nu$ on $\tilde{\Omega}$ in such a way that on the probability space 
$(\Omega, \mathcal{F}, \P)=(\tilde{\Omega}^{\N^*},\bor[\tilde{\Omega}^{\N^*}],\nu^{\otimes\N^*})$, endowed with the natural filtering $(\mathcal{F}_n)_{n\ge 0}$ given by $\mathcal{F}_n=\sigma(\omega\mapsto \omega_k: \;k\le n)$, the chains $(X^x_n)_{n\ge 0}$ starting from the different states enjoy the following representation: 
\begin{eqnarray*}
\begin{cases}
X^x_0(\omega)=x \\
X^x_{n+1}(\omega)=f(X^x_n(\omega),\omega_{n+1}).	
	\end{cases}
\end{eqnarray*}
As usual, we define $\theta:\Omega\to\Omega$ which maps $\omega=(\omega_n)_{n\ge 1}$ to $\theta\omega=(\omega_{n+1})_{n\ge 1}$.
We assume that for each $x\in E$, we have defined a $(\mathcal{F}_n)_{n\ge 0}$-adapted stopping time $T^x$, a $\mathcal{F}_{T^x}$-measurable function $G^x$ and a $\mathcal{F}$-measurable  function $F^x$.
Now, we are interested in the following quantities:
\begin{eqnarray*}
T_0^x=0 \text{ and } T^x_{k+1} & = & 
	\begin{cases}
	+\infty & \text{if }T^x_{k}=+\infty\\
	T_k^x+T^{x_k}(\theta_{T_k^{x}}) & \text{with $x_k=X^x_{\theta_{T_k^x}}$ otherwise;}
	\end{cases} \\
K^x & = & \inf\{k\ge 0:\;T_{k+1}^x=+\infty\}; \\
M^x & = & \sum_{k=0}^{K^x-1}  G^{x_k}(\theta_{T_k^x})+F^{X^{x_K}}(\theta_{T^x_{K}}).
\end{eqnarray*}
We wish to control the exponential moments of the $M^x$'s with the help of
exponential bounds for  $G^x$ and $F^x$.
In numerous applications to directed percolation or to the contact process, $T^x$ is the extinction time of the process (or of some embedded process) starting from the smallest point (in lexicographic order) in the configuration $x$.

\begin{lemme}
\label{restartabstrait}
We suppose that there exist real numbers $A>0$, $c<1$, $p>0$, $\beta>0$,  and that the real-valued functions $(G^x)_{x\in E},(F^x)_{x\in E}$  defined above are such that $$\forall x\in E\quad  \left\lbrace
\begin{array}{l}
\mathbf{G}(x)=\E [\exp(\beta G^x)\1_{\{T^x<+\infty\}}]\le  c;\\
\mathbf{F}(x)=\E [\1_{\{T^x=+\infty\}} \exp(\beta F^x)]\le A;\\
\mathbf{T}(x)=\P (T^x=+\infty)\ge p.
\end{array}
\right.
$$
Then, for each $x \in E$, $K^x$ is $\P$-almost surely finite and 
$$ \E[ \exp(\beta M^x)]\le \frac{A}{1-c} <+\infty.$$
\end{lemme}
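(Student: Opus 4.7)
The plan is to decompose $\E[\exp(\beta M^x)]$ according to the value of $K^x$ and then integrate out the successive factors using the strong Markov property at the stopping times $T_k^x$. Since the underlying probability space is the canonical one $(\tilde{\Omega}^{\N^*},\bor[\tilde{\Omega}^{\N^*}],\nu^{\otimes\N^*})$, for any $\P$-a.s.\ finite stopping time $S$ the shifted sequence $\theta_S\omega$ is independent of $\mathcal{F}_S$ and has distribution $\P$. Consequently, on $\{T_k^x<\infty\}$, the conditional law given $\mathcal{F}_{T_k^x}$ of the triple $(T^{x_k}(\theta_{T_k^x}),G^{x_k}(\theta_{T_k^x}),F^{x_k}(\theta_{T_k^x}))$ coincides with the unconditional law under $\P$ of $(T^y,G^y,F^y)$ evaluated at $y=x_k$ (recall that $x_k=X^x_{T_k^x}$ is $\mathcal{F}_{T_k^x}$-measurable).

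First I would verify that $K^x$ is $\P$-a.s.\ finite. Conditioning on $\mathcal{F}_{T_k^x}$ and using the hypothesis $\mathbf{T}(x_k)\ge p$ one gets $\P(T_{k+1}^x<\infty\mid\mathcal{F}_{T_k^x})\1_{\{T_k^x<\infty\}}\le(1-p)\1_{\{T_k^x<\infty\}}$, and by induction $\P(K^x\ge k)=\P(T_k^x<\infty)\le(1-p)^k$.

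Next, I would decompose $\{K^x=k\}=\{T_k^x<\infty\}\cap\{T^{x_k}(\theta_{T_k^x})=\infty\}$ and observe that each $G^{x_i}(\theta_{T_i^x})$ is $\mathcal{F}_{T_{i+1}^x}$-measurable, so that the partial sum $\sum_{i=0}^{k-1}G^{x_i}(\theta_{T_i^x})$ is $\mathcal{F}_{T_k^x}$-measurable. Conditioning on $\mathcal{F}_{T_k^x}$ and using the hypothesis $\mathbf{F}(x_k)\le A$ yields
$$\E\bigl[\exp(\beta M^x)\1_{\{K^x=k\}}\bigr]=\E\Bigl[\1_{\{T_k^x<\infty\}}\exp\Bigl(\beta\sum_{i=0}^{k-1}G^{x_i}(\theta_{T_i^x})\Bigr)\,\mathbf{F}(x_k)\Bigr]\le A\cdot a_k,$$
where
$$a_k=\E\Bigl[\1_{\{T_k^x<\infty\}}\exp\Bigl(\beta\sum_{i=0}^{k-1}G^{x_i}(\theta_{T_i^x})\Bigr)\Bigr].$$
The core step is to prove by induction on $k$ the bound $a_k\le c^k$. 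We have $a_0=1$ trivially; for the inductive step, conditioning at $T_k^x$ and applying the strong Markov property together with $\mathbf{G}(x_k)\le c$ gives $a_{k+1}\le c\,a_k$. Summing over $k$ one then obtains
$$\E[\exp(\beta M^x)]=\sum_{k\ge 0}\E\bigl[\exp(\beta M^x)\1_{\{K^x=k\}}\bigr]\le A\sum_{k\ge 0}c^k=\frac{A}{1-c}.$$

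The argument is essentially a bookkeeping exercise, and the only thing that truly requires care is the iterated invocation of the strong Markov property at the successive stopping times $T_k^x$; this is made transparent by the canonical i.i.d.\ representation of the chain by the sequence $(\omega_n)_{n\ge 1}$ and by the fact that each $T_k^x$ is a stopping time for the natural filtration $(\mathcal{F}_n)$, so that $\theta_{T_k^x}\omega$ is, conditionally on $\mathcal{F}_{T_k^x}$, independent of the past with distribution $\P$.
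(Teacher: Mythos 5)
Your proof is correct and follows essentially the same route as the paper: decompose according to the value of $K^x$, use the strong Markov property at the stopping times $T_k^x$ to peel off one factor at a time, and run a geometric-series induction. The only cosmetic difference is that you track the scalars $a_k$ directly, while the paper introduces the auxiliary random variables $S_k^x=\exp\bigl(\sum_{i=0}^k G^{x_i}(\theta_{T_i^x})\bigr)\1_{\{T_{k+1}^x<\infty\}}$ and computes $\E[S_{k+1}^x\mid\mathcal{F}_{T_{k+1}^x}]$; since $a_k=\E[S_{k-1}^x]$, these are the same computation.
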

Before the proof, we note that we could give a statement about Markov chains avoiding the use of an update function, by working directly with the trajectory space of the Markov chain rather than with the generic underlying space: in that way,
 $\P (T^x=+\infty)$ would be replaced by $\P^x (T=+\infty)$ and a lot of formulas would be simpler.
However, the processes  we plan to apply this lemma to are often built from a graphical construction (here, the $\Omega$ where the growth model lives)
and the functions  $G^.$, $H^.$  we plan to apply the lemma to are defined from the graphical representation, and not from the Markov chain.
\begin{proof}
We can assume without loss of generality that $\beta=1$.

Let  $x \in E$ be fixed.
At first, we have for each $n \ge 0$
\begin{eqnarray*}
\P[K^x>n|\mathcal{F}_{T^x_n}] & =& \P(T_{n+1}^x<+\infty|\mathcal{F}_{T^x_n}) =\P(T_n^x<+\infty, \; T^{x_n}(\theta_{T_n^x})<+\infty|\mathcal{F}_{T^x_n}) \\
& = &  \1_{\{T_n^x<+\infty\}}(1-\mathbf{T}({x_n}))\\
&\le & (1-p)\1_{\{T_n^x<+\infty\}}=\1_{\{K^x>n-1\}}(1-p),
\end{eqnarray*}
then $\P(K^x>n)\le (1-p)\P(K^x>n-1)$, which ensures that $K^x$ is $\P$-almost surely finite.

Let $S_{-1}^x=1$ and, for $k\ge 0$, put
$$S_k^x=\exp \left( \sum_{i=0}^{k}G^{x_i}(\theta_{T^x_i}) \right)\1_{\{T^x_{k+1}<+\infty\}}.$$
We note that $S_k^x$ is $\mathcal{F}_{T^x_{k+1}}$-measurable. For $k\ge 0$, one has 
$$\exp( M^x)\1_{\{K^x=k\}}=S^x_{k-1}\1_{\{T^{x_k}\circ \theta^{T^x_k}=+\infty\}} \exp(F^{x_k}),$$
hence by the strong Markov property $\E[\exp( M^x)\1_{\{K^x=k\}}|\mathcal{F}_{T^x_{k}}]=S^x_{k-1} \mathbf{F}(x_k)$,
then
$$\E[\exp( M^x)\1_{\{K^x=k\}}]\le A \E [S^x_{k-1}].$$ 
For $k \ge 1$, the strong Markov property gives again $$\E[S^x_{k+1}|\mathcal{F}_{T^x_{k+1}}]=S^x_k \times \mathbf{G}(x_{k+1}),$$ then 
$\E[S^x_{k+1}]\le c \E[S^x_{k}]$, and  $\E[\exp( M^x)\1_{\{K^x=k\}}]\le A c^{k}$. 
We conclude the proof by summing on $k$.
\end{proof}
\section{Application to the Model}
\subsection{Dependence to initial conditions} 

We first prove that the positivity of the probability of survival for the bacteria does not depend on the initial condition of the environment.
We can note that Steif and Warfheimer~\cite{MR2461788} have proved a similar result for the model introduced by Broman~\cite{MR2353388}.

\begin{proof}[Proof of Theorem~\ref{sansnom}]
Let $p>\pcfleche$, $q<\pcfleche$, $\alpha>0$ such that $\P_{p,q,\alpha}(\tau_1^{0,\varnothing}=+\infty)>0$. We want to show that $\P_{p,q,\alpha}(\tau_1^{0,\Zd\backslash\{0\}}=+\infty)>0$.
Let us denote by $C_n$ the event : ``there exists $x\in [-n,n]^d$ such that $\Zd\times\{0\}$ is linked to $(x,n)$ by open bonds of directed oriented percolation with parameter  $q$'': by a time reversal argument, we get 
$$\P_{p,q,\alpha}(C_n)\le
(2n+1)^d\P_q(T>n)\le A\exp(-Bn),$$
where $T$ is the extinction time of some subcritical oriented percolation process with parameter $q$.
We conclude that, if $A_N=\miniop{}{\cap}{k\ge N} C_k^c$,  $$\lim_{N\to +\infty} \P_{p,q,\alpha}(A_{N-1}\circ \theta_1)= \lim_{N\to +\infty} \P_{p,q,\alpha}(A_{N-1})=1,$$ 
$$\text{whence }\lim_{N\to +\infty} \P_{p,q,\alpha}(\tau_1^{0,\varnothing}=+\infty,A_{N-1}\circ \theta_1)=\P_{p,q,\alpha}(\tau_1^{0,\varnothing}=+\infty).$$
In particular, there exists $N$ such that $\P_{p,q,\alpha}(\tau_1^{0,\varnothing}=+\infty,A_{N-1}\circ \theta_1)>0$.
Let us denote by  $B$ the event: ``all the oriented edges issued from $[-3N,3N]^d\times\{0\}$ are closed for the percolation with parameter $q$''. By independence, one has
\begin{equation*}\P_{p,q,\alpha}(\tau_1^{0,\varnothing}=+\infty,A_{N-1}\circ \theta_1,B) = \P_{p,q,\alpha}(\tau_1^{0,\varnothing}=+\infty,A_{N-1}\circ \theta_1)\P_{p,q,\alpha}(B)>0.
\end{equation*}
It remains to prove that $\tau_1^{0,\Zd\backslash\{0\}}=+\infty$ holds on this event. It is sufficient to prove that the processes $(\eta^{0,\Zd\backslash\{0\}}_{1,n})_{n\ge 0}$ and  $(\eta^{0,\varnothing}_{1,n})_{n\ge 0}$ coincide on this event; but because of the definition of the dynamics, it is sufficient to note that on the event  $(A_{N-1}\circ \theta_1)\cap B$, we have
$$\forall n\ge 1\quad \eta^{\varnothing}_{2,n}\cap [-n,n]^d=\eta^{\Zd\backslash\{0\}}_{2,n}\cap [-n,n]^d,$$
which ends the proof.
\end{proof} 

\subsection{Outline of the proof of  Theorem~\ref{croissancedesuns}}

The idea of the proof is to define a local block event with probability close to~1, that expresses the fact that if the bacterium occupies a sufficiently large area at a given place, it will presumably extend itself a bit further.
If the associated block process percolates, then the linear growth is ensured 
by Theorem~\ref{notreLSS}.
With a restart argument, we will find a point of the space-time, not too far from the origin, where the bacterium occupies a sufficiently large area and where the associated block process percolates, which will give the desired result.  


\begin{figure}[h!]
\includegraphics{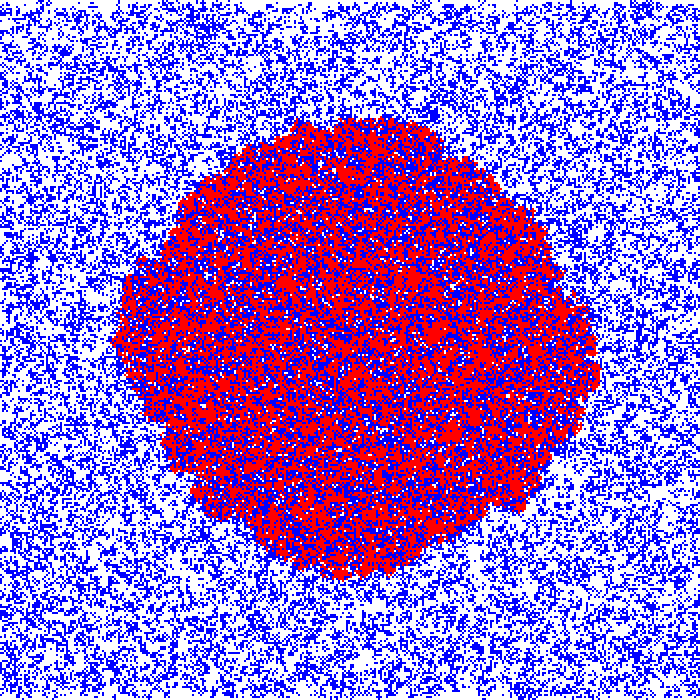}
\caption{A simulation with $p=0.7$, $q=0.25$, and $\alpha=10^{-3}$.}
\label{uneautrefigure} 
\end{figure}

The statement in Theorem~\ref{croissancedesuns} actually contains two results that must be proved separately. On one side, there is the fact that $\alpha_c>0$, on the other side the fact that for $\alpha<\alpha_c$, the process, when surviving, linearly grows.
We can find in the literature many examples of block events  similar to the ones  we will use. Most of these papers take inspiration from the Bezuidenhout--Grimmett article~\cite{MR1071804}: \emph{The critical contact process dies out}.
We think that for this kind of dynamical renormalization schemes, the existence of a coupling between the dependent oriented percolation of blocks and a Bernoulli oriented percolation conditioned to survive is barely explained in the literature. This led us to write Theorem~\ref{notreLSS}. 
In Subsection~\ref{tropfass}, we focus  on the case where $\alpha$ is small and the renormalization event simpler. The  construction for $\alpha<\alpha_c$, technically more subtle, is explained in Subsection~\ref{tropdur}. 
\subsection{Positivity of $\alpha_c$ (the case of small $\alpha$)}
\label{tropfass}
We prove here that when $\alpha$ is small enough, $\P_{p,q,\alpha}(\tau_1^{0,\Zd\backslash\{0\}}=+\infty)>0$ and  the growth is linear on the event $\{\tau_1^{0,\Zd\backslash\{0\}}=+\infty\}$.
\subsubsection{The block event}$ $

Let $I,L \in \N^*$ with $I < L$.  Recall that the constant $C$ is given in Lemma~\ref{momtprime}. We let
$$T=6CL \text{ and } J=2(L+T).$$
For $\overline{k}\in\Zd,x \in [-L,L[^d$ and  $u \in \Zd$ such  that $\|u\|_1\le1$, we define the following event: 
\begin{eqnarray*}
A(\overline{k},x,u) &  = & \left\{
	\begin{array}{c}
	\exists s \in [-L,L[^d,\\ 
	2L(\overline{k}+u) +s+[-I,I]^d \subset \eta^{2L\overline{k}+x+[-I,I]^d,\Zd\backslash (2L\overline{k}+[-J,J]^d)}_{1,T}, \\
        \eta^{2L\overline{k}+x+[-I,I]^d,\Zd\backslash (2L\overline{k}+[-J,J]^d)}_{2,T}\cap (2L(\overline{k}+u) +[-J,J]^d)=\varnothing,\\
  	2L\overline{k} +[-L,L]^d\subset \miniop{}{\cup}{0\le t\le T} \eta^{2L\overline{k}+x+[-I,I]^d,\Zd\backslash (2L\overline{k}+[-J,J]^d)}_{1,t}.
         \end{array}
\right\}.
\end{eqnarray*}
If $A(\overline{k},x,u)$ holds, we denote by $s(\overline{k},x,u)$ an element $s$  satisfying the condition above.

Let us briefly explain the signification of the event $A(\overline{k},x,u)$: obviously,
$\eta^{A,B}_{1,T}$ is non-decreasing with respect to  $A$,  non-increasing with respect to $B$, whereas  $\eta^{B}_{2,T}$ is non-decreasing with respect to $B$.
Thus, if $A(\overline{k},x,u)$ holds and if one knows that at time $0$, the block $2L \overline{k}+x +[-I,I]^d$ is full of ``$1$'' and the block
$2L \overline{k}+[-J,J]^d$ contains no ``$2$'', then one knows that analogous conditions will be fulfilled around $2L(\overline{k}+u)$ at time $T$. Of course, the idea is to follow a chain of such events in an oriented percolation and to draw a path ensuring the development of the bacteria.

\begin{lemme}
\label{bonev1}
For each  $p>\pcdir(d+1)$, each $q<\pcdir(d+1)$, and each $\epsilon>0$, we can find integers  $I< L$ large enough 
and   $\alpha\in (0,1)$ small enough such that for every $\overline{k} \in \Zd$, $x\in[-L,L[^d$ and $u \in \Zd$ such that $\|u\|_1\le 1$,
\begin{equation}  
\label{labellecondition}
\P_{p,q,\alpha}(A(\overline{k},x,u))\ge 1-\epsilon.
\end{equation}
Moreover, as soon as $\|\overline{k}-\overline{l}\|_1 > 4+18C$, for every $x,y \in [-L,L]^d$, every $u,v \in \Zd$ such that $\|u\|_1\le 1$ and $\|v\|_1\le 1$, the events $ A(\overline{k},x,u)$ and $A(\overline{l},y,v)$ are independent.
\end{lemme}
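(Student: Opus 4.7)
By spatial translation invariance of $\P_{p,q,\alpha}$, it suffices to prove~(\ref{labellecondition}) for $\overline{k}=0$. My plan is to realise $A(0,x,u)$ as the intersection of a \emph{clean environment} event $\mathcal{N}$---on which no type-2 particle ever populates the type-1 lightcone during $[0,T]$ nor the forbidden box $2Lu+[-J,J]^d$ at time~$T$---and a \emph{good supercritical growth} event for the pure oriented percolation $\tilde\xi$ with parameter $p$ started from $x+[-I,I]^d$, which on $\mathcal{N}$ coincides with the true type-1 evolution and thus inherits conditions~(i) and~(iii) of $A(0,x,u)$.

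For the clean environment, any type-2 particle observed at a space-time point $(z,t)$ with $t\le T$ must be connected, via an open $q$-oriented path, either to an initial type-2 particle at some $y\in\Zd\setminus[-J,J]^d$ or to an immigration event. Because $q<\pcdir(d+1)$, the subcritical two-point function decays exponentially in the length of this path. A union bound controls the probability of initial type-2 invasion into the type-1 lightcone $[-(L+I+T),L+I+T]^d\times[0,T]$ by a polynomial in $L$ times $e^{-c(J-L-I-T)}=e^{-c(L+T-I)}$, which tends to $0$ as $L\to\infty$ since $J=2(L+T)$ and we may impose $I\le L$; the probability that condition~(ii) fails because of initial type-2 particles is bounded similarly by a polynomial in $L$ times $e^{-c'T}$, using the exponential decay of the subcritical persistence probability; the contributions of immigration events to both types of invasion are bounded by $\alpha$ times constants depending on $L$ and $T$. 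Fixing $I\le L$, then taking $L$ large enough and $\alpha$ small enough yields $\P(\mathcal{N})\ge 1-\epsilon/2$.

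On $\mathcal{N}$ the type-1 process coincides with $\tilde\xi$. For any $x_0\in x+[-I,I]^d\subset[-2L,2L]^d$, the spatial translation invariance of Lemma~\ref{momtprime} together with $-x_0+[-L,L]^d\subset[-3L,3L]^d$ and $T=6CL=C\cdot 3L+3CL$ gives $\P(\tau^{x_0}=+\infty,\,[-L,L]^d\not\subset H^{x_0}_T)\le Ae^{-3BCL}$, so condition~(iii) is satisfied on the survival event; similarly $2Lu-x_0+[-L,L]^d\subset[-5L,5L]^d$ combined with $T=C\cdot 5L+CL$ gives $\P(\tau^{x_0}=+\infty,\,K^{x_0}_T\not\supset 2Lu+[-L,L]^d)\le Ae^{-BCL}$. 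Inside this coupled window, $\tilde\xi^{x_0}_T$ coincides with the stationary process $\xi^{\Zd}_T$, which has positive density and mixes exponentially fast, so a standard Bezuidenhout--Grimmett density-and-mixing argument produces a fully occupied copy $2Lu+s+[-I,I]^d$ of the initial block for some $s\in[-L,L[^d$, with probability tending to~$1$ as $I$ and then $L/I$ are made large. Since the survival probability of the pure oriented percolation from the block $x+[-I,I]^d$ also tends to $1$ as $I\to\infty$, the good-growth event holds with probability $\ge 1-\epsilon/2$, which combined with the estimate on $\mathcal{N}$ yields~(\ref{labellecondition}).

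The independence claim follows from a direct inspection of the randomness that enters $A(\overline{k},x,u)$: only the type-1 and type-2 bond variables and the immigration variables located in the space-time cylinder $2L\overline{k}+[-(2L+J+T),2L+J+T]^d\times[0,T]$ matter, since this box contains the type-1 lightcone, the forbidden box of condition~(ii) and the $T$-backward lightcone of the type-2 dynamics. The spatial radius equals $(4+18C)L$, so the product structure of $\P_{p,q,\alpha}=\nu^{\otimes\N^*}$ ensures independence whenever $\|\overline{k}-\overline{l}\|_1>4+18C$. The main obstacle in the whole argument is the block-finding step of condition~(i): Lemma~\ref{momtprime} only provides the coupled region $K^{x_0}_T$, and extracting from $\xi^{\Zd}_T$ a fully occupied block of side $2I+1$ inside a window of side $2L$ requires a density and mixing argument for supercritical oriented percolation in the spirit of Bezuidenhout--Grimmett; the environmental subcritical bounds and the independence claim are routine by comparison.
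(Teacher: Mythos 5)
Your proposal follows the paper's proof closely in structure: both reduce to $\overline k=\overline 0$ by translation invariance, both realise $A(\overline 0,x,u)$ as the intersection of a clean-environment event with a growth event for the pure $p$-oriented percolation $\xi(\omega_1)$ (the paper calls this auxiliary event $B(x,u)$), both invoke Lemma~\ref{momtprime} for the coupled region $K$ and the hitting set $H$, and your dependence radius $2L+J+T=(4+18C)L$ is exactly the right computation behind the independence claim. Two small remarks on the environment part: the paper handles the initial type-2 particles with a deterministic argument (at time $0$ the distance between the type-1 and type-2 supports is $2L+2T-(L+I)>2T$, and they can approach at speed at most $2$, so in the absence of immigration they cannot touch by time $T$), which is cleaner than your subcritical two-point-function bound $e^{-c(L+T-I)}$ — the probability you are estimating is actually $0$ for $L>I$. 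Also, the coupled window you ask for, $K^{x_0}_T\supset 2Lu+[-L,L]^d$, is slightly too small to contain all candidate targets $2Lu+s+[-I,I]^d$ with $s\in[-L,L[^d$; the paper uses $Lu+[-2L,2L]^d$, which has room to spare.

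The one genuine gap is exactly the step you flag as the main obstacle, and invoking a ``standard Bezuidenhout--Grimmett density-and-mixing argument'' does not close it: the density of $\xi^{\Zd}_T$ is a single-site density, whereas you need a strictly positive density for the block event $\{nu+[-I,I]^d\subset\xi^{\Zd}_T\}$. The paper's resolution is a short and crucial trick. By time reversal of oriented percolation,
$$\P\bigl(nu+[-I,I]^d\subset\xi^{\Zd}_t\bigr)=\P\bigl(\forall y\in nu+[-I,I]^d:\ \tau^y\ge t\bigr)\ge\P\bigl(\forall y\in[-I,I]^d:\ \tau^y=+\infty\bigr),$$
and the right-hand side is strictly positive for each fixed $I$ by FKG. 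Translation invariance and ergodicity along the direction $u$ then give $\P\bigl(\exists n\in[0,L]:\ \forall y\in nu+[-I,I]^d:\ \tau^y=+\infty\bigr)\to 1$ as $L\to\infty$; combining with the estimate on $K^y_T$ from Lemma~\ref{momtprime} produces the block inside $\xi^{x+[-I,I]^d}_T$. This time-reversal step is what turns ``positive one-point density'' into ``positive block density,'' and without it your argument does not go through as written.
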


\begin{proof}
First note that $\P_{p,q,\alpha}(A(\overline{k},x,u))=\P_{p,q,\alpha}(A(\overline{0},x,u))$, which allows to consider only the case  $\overline{n}=\overline{0}$. 

Under $\P_{p,q,\alpha}$, the collection of random variables $\omega_1=(\omega^e_{1,n})_{e \in \Edo,n \in \N^*}$ -- \resp $\omega_2=(\omega^e_{2,n})_{e \in \Edo,n \in \N^*}$ -- has the law of the bonds of an independent directed percolation with parameter $p$ -- \resp $q$. We realize these percolation structures on $\Omega$,
keeping the notation of the introduction: thus, $(\xi^A_n(\omega_1))_{n \ge 0}$ is under  $\P_{p,q,\alpha}$ a directed Bernoulli percolation process with parameter  $p$ starting from the set~$A$, and $(\tau_1^x(\omega_2))_{n \ge 0}$ is under $\P_{p,q,\alpha}$ the extinction time for a directed Bernoulli percolation process with parameter  $q$ starting from  $x$. Under $\P_{p,q,\alpha}$, the collection of random variables $\omega_3=(\omega^e_{3,n})_{e \in \Edo,n \in \N^*}$ are independent Bernoulli with parameter $\alpha$. They represent the immigration of immune cells. 

Let $\epsilon>0$. 
We choose two integers  $I,L$ with $I< L$ -- their values will be fixed later.
Define
\begin{eqnarray*}
B(x,u) &  = & \left\{
	\begin{array}{c}
	\exists s \in [-L,L[^d \quad 2Lu +s+[-I,I]^d \subset \xi^{x+[-I,I]^d}_{T}(\omega_1), \\
	\forall (y,n) \in [-(4L+2T),(4L+2T)]^d\times\{1,\dots,T\} \\
	\omega^{y,n}_{3}=0 \text{ and } \tau_1^y\circ \theta_n(\omega_2)  \le T/2,\\ \ [-L,L]^d \subset \miniop{}{\cup}{0\le t\le T} \xi^{x+[-I,I]^d}_{t}(\omega_1).
	\end{array}
\right\}.
\end{eqnarray*}
We will show that $B(x,u)\subset A(\overline{0},x,u)$ and also that one can choose $I$ and $L$ in such a way that  $\P_{p,q,\alpha}(B(x,u))\ge 1-\epsilon$, which will give the desired result. The advantage of using $B$ is that it does not deal with the competition process, using only the directed percolation and the immigration processes. Thus, it is easier to estimate its probability.

\smallskip
\noindent
\underline{Step 1}. Let us show that $B(x,u)\subset A(\overline{0},x,u)$.\\
The existence of a convenient  $s$ for the condition of $A(\overline{0},x,u)$ is given by $B(x,u)$ for the oriented percolation with parameter $p$  embedded in the model. We have now to verify that our event ensures that the type~$2$ particles can not disturb the progress of  type~$1$ particles.

Note $A=x+[-I,I]^d$ and $B=\Zd \backslash [-J,J]^d$.
At time $0$, the smallest distance between points in $\eta^{A,B}_{1,0}$ and $\eta^{B}_{2,0}$ is at least $2L+2T-(L+I)> 2T$.
In the zone $[-J,J]^d$, there is no immigration between time  $0$ and time $T$, so $\eta^{A,B}_{1,t}$ and $\eta^{B}_{2,t}$ get closer at a speed that does not exceed $2$ per time unit; thus at time $T$, the type $2$ particles could not disturb the move of type $1$ particles yet.

It remains to see that $\eta_{2,T}$ can not reach $2Lu+[-J,J]^d$.
Remember that  there is no immigration between time  $0$ and time $T$ in the area  $[-(4L+2T),(4L+2T)]^d$. Moreover, type $2$ particles that are outside $[-(4L+2T),(4L+2T)]^d$ at time $0$ do not have enough time to reach $2Lu+[-J,J]^d$ at time  $T$, so only type $2$  particles that were already inside   $[-(4L+2T),(4L+2T)]^d$ at time  $0$  must be considered. But these ones are all dead at time  $T/2$. This completes the proof of the inclusion.

\smallskip
\noindent
\underline{Step 2}. Bounding the probability of $B(x,u)$ from below. \\
Remember that $\P=\P_{p,q,\alpha}$.
We first choose  an integer $I$ large enough to have 
\begin{equation}
\label{choixI} 
\forall x \in \Zd \quad \P(\tau_1^{x+[-I,I]^d}(\omega_1)=+\infty) \ge 1-\epsilon/12.
\end{equation}
By the FKG inequality,
$\P(\forall y\in [-I,I]^d,\; \tau_1^y(\omega_1)=+\infty)>0$. Translation invariance and ergodicity of $\P$  then give
$$\miniop{}{\lim}{L\to +\infty}
\P(\exists n\in [0,L]:  \; \forall y\in nu+[-I,I]^d, \; \tau_1^y(\omega_1)=+\infty)=1.$$
Then, let $L_1>I$ be such that for each $L \ge L_1$,
$$\P(\exists n\in [0,L]: \;  \forall y\in nu+[-I,I]^d, \; \tau_1^y(\omega_1)=+\infty)>1-\frac{\epsilon}{12}.$$
Let $L \ge L_1$. By a time reversal argument, we have for each $t>0$, 
\begin{equation}
 \label{choixL1}
\P\left( \begin{array}{c}
\exists n\in [0,L] :\\
  nu+[-I,I]^d\subset \xi^{\Zd}_t(\omega_1)
\end{array} \right) =
\P\left( \begin{array}{c}
          \exists n\in [0,L] : \\
\forall y\in nu+[-I,I]^d, \tau_1^y(\omega_1)\ge n
         \end{array}
\right) \ge  1-\frac{\epsilon}{12}.
\end{equation}
Now, Lemma~\ref{momtprime} gives the existence of some $L_2\ge L_1$ such that for each  $L\ge L_2$, we have simultaneously
\begin{eqnarray}
& &\P(\exists y\in [-2L,2L]^d:  \; \tau_1^y(\omega_1)=+\infty, \; Lu+[-2L,2L]^d\not\subset K^y_{6CL}(\omega_1)))\nonumber \\
&\le &(4L+1)^d\P(\tau_1^0(\omega_1)=+\infty, \; [-5L,5L]^d\not\subset K^0_{6CL}(\omega_1))\le \epsilon/12, \label{choixL2un} \\
\text{ and } && \P(\exists y\in [-2L,2L]^d: \; \tau_1^y(\omega_1)=+\infty, \; [-L,L]^d\not\subset H^y_{6CL}(\omega_1)) \nonumber\\
& \le & (4L+1)^d\P(\tau_1^0(\omega_1)=+\infty, \; [-3L,3L]^d\not\subset H_{6CL}^y(\omega_1))\le \epsilon/12. \label{choixL2deux}
\end{eqnarray}
With (\ref{choixL1}) and (\ref{choixL2un}), we get
\begin{eqnarray}
& &\P(\tau_1^{x+[-I,I]^d}(\omega_1)=+\infty,\;\forall n\in [0,L] \quad  Lu+nu+[-I,I]^d\not\subset\xi^{x+[-I,I]^d}_T(\omega_1))\nonumber\\
& \le & \P(\exists y\in x+[-I,I]^d :\; \tau_1^y(\omega_1)=+\infty, \; Lu+[-2L,2L]^d\not\subset K^y_T(\omega_1))\nonumber\\ 
& & +\P(\forall n\in [0,L] \; Lu+nu+[-I,I]^d\not\subset\xi^{\Zd}_T(\omega_1)) \nonumber\\
& \le &   \frac{\epsilon}6.\nonumber
\end{eqnarray}
With (\ref{choixI}) and~(\ref{choixL2deux}), we conclude that for each $x\in [-L,L]^d$
\begin{equation}
\label{proba1}
\P \left(
\begin{array}{c}
H^{x+[-I,I]}_T(\omega_1)\supset [-L,L]^d, \\
\exists n\in [0,L]\quad (L+n)u+[-I,I]^d\subset\xi^{x+[-I,I]}_T(\omega_1)
\end{array}
\right)\ge 1-\frac{\epsilon}3.
\end{equation}

Since $q<\pcdir(d+1)$, there exist positive constants $A,B$ such that for each  $L$,
\begin{equation*}
 \P(\exists y \in [-(4L+2T),(4L+2T)]^d: \; \tau_1^y(\omega_2)  > T/2) 
 \le  (8L+4T+1)^dA\exp(-BT/2).
\end{equation*}
One deduces that there exists some integer $L_3 \ge L_2$ such that for each $L\ge L_3$, 
\begin{equation}
 \label{choixL3}
\P(\exists y \in [-(4L+2T),(4L+2T)]^d: \; \tau_1^y (\omega_2) > T/2) \le \epsilon/3.
\end{equation}
Now fix $L \ge L_3$ and choose $\alpha>0$ small enough to have
\begin{equation}
 \label{choixalph}
\P(\exists (y,n) \in [-(4L+2T),(4L+2T)]^d\times\{1,\dots,T\} \quad \omega_{3,n}^y=1) \le \epsilon/3.
\end{equation}
We conclude by putting (\ref{proba1}), (\ref{choixL3}) and (\ref{choixalph}) together.
\end{proof}

\subsubsection{Block events percolation}$ $

Let $I< L$ be fixed integers. First, for each $x \in \Zd$, we will build a field $$({}^xW^n_{(z,u)})_{{n \ge 1, z \in \Zd,\|u\|_1\le 1}}$$ from the events defined above. The random variable $W^{n+1}_{(z,u)}$ will give the state of the oriented bond between the macroscopic
sites $(z,n)$ and $(z+u,n+1)$; those sites correspond to the coordinates of the boxes $(2Lz,nT)+[-L,L]^d\times[1,T]$ and 
$(2L(z+u),(n+1)T)+[-L,L]^d\times[1,T]$. The field $({}^xW^n_{(z,u)})_{n \ge 1, z \in \Zd, \|u\|_1\le 1}$ then defines a macroscopic dynamical dependent oriented percolation.

For  $x\in \Zd$, we denote by $[x]_{2L} \in \Zd$ the unique integer such that 
$$x \in 2L[x]_{2L}+[-L,L[^d \text{ and we set } \{x\}_{2L}=x-2L[x]_{2L}\in [-L,L[^d.$$

We set $d^x_0([x]_{2L})=\{x\}_{2L}$ and also $d^x_0(\overline{k})=+\infty$ for every $\overline{k}\in\Zd$ that is not equal to $[x]_{2L}$.
Then, for each $\overline{k}\in\Zd$,  each $u \in \Zd$ with $\|u\|_1\le 1$ and each $n\ge 1$, we recursively define:
\begin{itemize}
\item If $d^x_n(\overline{k})=+\infty$, ${}^xW^{n+1}_{(\overline{k},u)}=1$.
\item Otherwise, 
\begin{eqnarray*}
{}^xW^{n+1}_{(\overline{k},u)} & = & \1_{A(\overline{k},d^x_n(\overline{k}),u)} \circ \theta_{nT}, \\
d^x_{n+1}(\overline{k}) & = & \min\{ s(\overline{k}-u,d^x_n(\overline{k}-u),u)\circ \theta_{nT}: \; \|u\|_1\le 1, \; d^x_n(\overline{k}-u)\ne+\infty \}.
\end{eqnarray*}
\end{itemize}

Let $\mathcal{G}_n= \sigma(\omega_1^{e,k}, \omega_2^{e,k}, \omega_3^{x,k}, e \in \Edo, x\in\Zd, k\le nT)$.
Note that conditionally to $\mathcal{G}_n$, the random variables ${}^xW^{n+1}_{(\overline{k},u)}$ et ${}^xW^{n+1}_{(\overline{l},v)}$ are independent as soon as $\|\overline{k}-\overline{l}\|_1>4+18C$.   Then we take $M=5+18C$, and prove the following lemma:  
\begin{lemme}
\label{domistoc}
For each $p>\pcdir(d+1)$,  $q<\pcdir(d+1)$, and $q_0<1$, we can find some integers $I< L$ and a parameter $\alpha>0$ such that for each $x \in \Zd$,
$$\text{the law of }({}^xW^n_{e})_{n \ge 0, e \in\Edo} \text{ under $\P_{p,q,\alpha}$ belongs to }\mathcal{C}(M,q_0).$$
\end{lemme}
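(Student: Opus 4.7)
The strategy is to show that each macroscopic edge variable ${}^xW^{n+1}_{(\overline{k},u)}$ is measurable with respect to a well-localized portion of the microscopic randomness in the time slab $(nT,(n+1)T]$, and then to read off the required conditional lower bound directly from Lemma~\ref{bonev1}. Set $\epsilon=1-q_0$.

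The plan is first to verify that $({}^xW^n_e)$ is adapted to the filtration $(\mathcal{G}_n)$. By induction on $n$: the displacements $d^x_n(\overline{k})$ depend only on the $s(\cdot,\cdot,\cdot)\circ\theta_{jT}$ for $j<n$, which are $\mathcal{G}_n$-measurable; and the indicator $\1_{A(\overline{k},d^x_n(\overline{k}),u)}\circ\theta_{nT}$ is built from the $\omega_1,\omega_2,\omega_3$ variables indexed by times in $\{nT+1,\dots,(n+1)T\}$. Hence ${}^xW^{n+1}_{(\overline{k},u)}\in\mathcal{G}_{n+1}$ in both cases $d^x_n(\overline{k})=+\infty$ and $d^x_n(\overline{k})<+\infty$.

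Next I would prove the crucial conditional lower bound. Fix $e=(\overline{k},u)$ and $n\ge 0$, and condition on $\mathcal{G}_n$. On $\{d^x_n(\overline{k})=+\infty\}$ the variable ${}^xW^{n+1}_e$ equals $1$ identically, so there is nothing to show. On the event $\{d^x_n(\overline{k})=y\}$ for some $y\in[-L,L[^d$, we have ${}^xW^{n+1}_e=\1_{A(\overline{k},y,u)}\circ\theta_{nT}$, which is measurable with respect to the microscopic randomness in the time slab $(nT,(n+1)T]$, restricted to a spatial box around $2L\overline{k}$ of radius $O(L)$ (coming from $J=2(L+T)$, from the immigration box $[-(4L+2T),(4L+2T)]^d$ inherent to the stronger event $B(\cdot,\cdot)$, and from the spread of the $\omega_2$-cluster during time $T/2$). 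These are all variables independent of $\mathcal{G}_n$. For $M=5+18C$ chosen so that two macroscopic edges at distance $\ge M$ have disjoint such dependence boxes, the variables $\{W^{n+1}_f:d(e,f)\ge M\}$ are built from disjoint families of microscopic randomness and are therefore independent of ${}^xW^{n+1}_e$ conditionally on $\mathcal{G}_n$, just as in the independence assertion of Lemma~\ref{bonev1}.

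Combining these two observations gives
$$\P_{p,q,\alpha}\bigl[\,{}^xW^{n+1}_e=1\ \big|\ \mathcal{G}_n\vee\sigma(W^{n+1}_f:d(e,f)\ge M)\bigr]=\P_{p,q,\alpha}\bigl[\,{}^xW^{n+1}_e=1\ \big|\ \mathcal{G}_n\bigr],$$
and on $\{d^x_n(\overline{k})=y\}$ this equals $\P_{p,q,\alpha}(A(\overline{k},y,u))=\P_{p,q,\alpha}(A(\overline{0},y,u))$ by translation invariance. It remains to apply Lemma~\ref{bonev1} with the chosen $\epsilon=1-q_0$: for $I<L$ large enough and $\alpha>0$ small enough, this probability is uniformly $\ge q_0$ over $y\in[-L,L[^d$, $u$ with $\|u\|_1\le 1$, and $\overline{k}\in\Zd$. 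This establishes the two defining conditions of $\mathcal{C}_d(M,q_0)$. The main (purely bookkeeping) obstacle is the careful accounting of the spatial dependence boxes to justify the value $M=5+18C$; the probabilistic content is fully packaged into Lemma~\ref{bonev1}.
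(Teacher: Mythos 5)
Your proposal is correct and takes essentially the same approach as the paper: establish adaptedness of the field to $(\mathcal{G}_n)$, use the disjointness of the microscopic dependence boxes to drop the conditioning on the far-away edges at the same floor, then split on $\{d^x_n(\overline{k})=+\infty\}$ versus $\{d^x_n(\overline{k})<+\infty\}$ and invoke Lemma~\ref{bonev1} to get the uniform lower bound $q_0=1-\epsilon$. The paper is terser (it does not spell out the induction for adaptedness, nor the bookkeeping behind $M=5+18C$, which it states just before the lemma), but the argument is the same.
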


\begin{proof}  
Note that for every $x,\overline{k} \in \Zd$, for each $n \ge 1$, the variable $d^x_n(\overline{k})$ is $\mathcal{G}_n$-measurable, and so does ${}^xW^n_{(\overline{k},u)}$. 

Let us now consider $x,\overline{k} \in \Zd$, $n \ge 0$ and $u \in \Zd$ such that $\|u\|_1\le 1$: Lemma~\ref{bonev1} ensures that
\begin{eqnarray*}
&& \E_{p,q,\alpha}[{}^xW^{n+1}_{(\overline{k},u)}|\mathcal{G}_n\vee \sigma({}^xW^{n+1}_ {(\overline{l},v)}, \;  \|v\|_1\le 1, \; \|\overline{l}-\overline{k}\|_1 \ge M)] \\
& = & \E_{p,q,\alpha}[{}^xW^{n+1}_{(\overline{k},u)}|\mathcal{G}_n] \\
& = & \1_{\{d^x_n(\overline{k})=+\infty\}}+ \1_{\{d^x_n(\overline{k})<+\infty\}}\P_{p,q,\alpha}[{}^xW^{n+1}_{(\overline{k},u)}=1|d^x_n(\overline{k})<+\infty] \\
& = & \1_{\{d^x_n(\overline{k})=+\infty\}}+ \1_{\{d^x_n(\overline{k})<+\infty\}}\P_{p,q,\alpha}[A(\overline{k},d^x_n(\overline{k}), u)].
\end{eqnarray*}
With Lemma~\ref{bonev1}, one can find some integers $I <L$ and a parameter $\alpha>0$ in such a way that
$$\E_{p,q,\alpha}[{}^xW^{n+1}_{(\overline{k},u)}|\mathcal{G}_n\vee \sigma({}^xW^{n+1}_ {(\overline{l},v)}, \;  \|v\|_1\le 1, \; \|\overline{l}-\overline{k}\|_1 \ge M)] \ge q_0.$$
This completes the proof of the Lemma.
\end{proof}

\subsubsection{From macroscopic to microscopic scale}
\begin{proof}[Proof of Theorem~\ref{croissancedesuns} for small $\alpha$]
The inequality  $\alpha_c\le 1-\frac1{2d+1}$ easily follows from a counting argument.
Let $p>\pcdir(d+1)$, $q<\pcdir(d+1)$, and take  $M=5+18C$ as previously. By Lemma~\ref{petitmomentexpo}, we can find $q_0<1$ with $g_M(q_0)>\pcdir(d+1)$ and $\beta_0>0$ such that for each field $\chi\in\mathcal{C}_d(M,q_0)$:
\begin{equation}
\label{momoexp}
 \E_{\chi}[\1_{\{\tau_1^0<+\infty\}}\exp(\beta_0{\tau_1^0})]\le\frac12.
\end{equation}
We choose $I,L,\alpha$ as determined  by Lemma~\ref{domistoc}.
We will prove that for this  $\alpha$, the survival of the bacteria is possible, as well as the other announced estimates.

Let $x\in\{0,1,2\}^{\Zd}$ be some configuration; we denote by $E_1(x)$ the set of sites occupied by  type $1$ particles in configuration $x$.
If $E_1(x) \neq \varnothing$, we denote by $j(x)$ the smallest point in  $E_1(x)$ (in lexicographic order).
Note that there exists $c>0$ such that 
\begin{equation}
\forall x\in\{0,1,2\}^{\Zd}\quad \P_{p,q,\alpha}(\eta^{x}_{1,4dT}\supset j(x)+[-4T,4T]^d)\ge c. \label{cademarre}
\end{equation}
Indeed, it is sufficient to open for $\omega_1$ every bond in 
$$B=(j(x),0)+[-4dT-1,4dT+1]^d \times [0,4T],$$ to close for $\omega_2$ every bond in  $B$, and to forbid in~$\omega_3$ every birth of type~$2$ in $B$: all of this corresponds to fixing a finite number of coordinates in $\omega$, which can be done with a positive probability.

If the event in (\ref{cademarre}) happens, we have at time $4dT$ a large box $j(x)+[-4T,4T]^d$ occupied by type $1$ particles. From this box, we can start the macroscopic percolation by building the random field ${}^{x}W=({}^{j(x)}W^n_e\circ \theta_{4dT})_{e \in \Edo, n\ge 0}$. The choice we made for $I,L,\alpha$ and Lemma~\ref{domistoc} ensure that ${}^{x}W$ belongs to $\mathcal{C}_d(M,q_0)$. 
Since $g_M(q_0)>\pcdir(d+1)$, (\ref{cademarre}) gives
\begin{eqnarray*}
\P_{p,q,\alpha}(\tau_1^{x}=+\infty) & \ge & \P_{p,q,\alpha}(\eta^x_{1,4dT}\supset j(x)+[-4T,4T]^d)\P_{g_M(q_0)}(\tau_1^0=+\infty)>0,
\end{eqnarray*}
which proves~(\ref{survie}).

To show the exponential estimates, we will apply Lemma~\ref{restartabstrait}. If $E_1(x)=\varnothing$, we let $T^x=+\infty$; otherwise,
let
$$
T^x=\left\{ \begin{array}{ll}
4dT & \text{if the event in (\ref{cademarre}) does not occur,} \\
4dT+T\times\tau_1^{[j(x)]_{2L}} \circ \theta_{4dT} & \text{otherwise,}
            \end{array}\right.
$$
where
 $\tau_1^{[j(x)]_{2L}}$ represents the extinction time in the percolation ${}^{x}W$ starting from the macroscopic site $[j(x)]_{2L}$ containing $j(x)$. 

For each $x\in\{0,1,2\}^{\Zd}$ such that $E_1(x)\neq \varnothing$, we have 
$$\P_{p,q,\alpha}(T^x=+\infty)\ge c\P_{g_M(q_0)}(\tau_1^0=+\infty).$$
We take $G^x=T^x$; for $0<\beta_1<\beta_0$, Inequality~(\ref{momoexp}) gives:
\begin {eqnarray*}
\E_{p,q,\alpha}[e^{\beta_1T^x}\1_{\{T^x<+\infty\}}]&\le &e^{\beta_1 4dT}
\sup_{\chi\in\mathcal{C}(M,q_0)} \E_{\chi}[\1_{\{ {\tau_1^0}<+\infty\}}\exp(\beta_0{\tau_1^0})]\\
& \le &  e^{\beta_1 4dT}/2\le 2/3
\end{eqnarray*}
provided that $\beta_1$ is small enough. 
We take $F^{\varnothing}=0$  and for $x\ne\varnothing$, 
$$F^x=T\times S^{[j(x)]_{2L}}\circ \theta_{4dT}$$ 
where $S$ has been defined in Corollary~\ref{croitlin}. This corollary moreover gives the existence of exponential moments for $S$. Thus, the restart lemma ensures that the variable 
$$M^x= T^x_K+F^{\eta^x_{T^x_{K}}}\circ \theta_{T^x_{K}}$$
admits exponential moments.

Let us begin to work on the event $\{\tau_1^x=+\infty\}$. In that case, $\eta^x_{1,T^x_{K}}$ is non-empty and, at time $T^x_{K}+4dT$, the bacteria occupy a large box $j(\eta^x_{T^x_K+4dT})+[-4T,4T]^d$, from which the macroscopic percolation lives forever; moreover,
$$M^x= T^x_K+T\times S^{[j(\eta^x_{T^x_{K}+4dT})]_{2L}} \circ \theta_{T^x_{K}+4dT}.$$
By the definition of the macroscopic percolation, if the bond 
$${}^{j(\eta^x_{T^x_{K}+4dT})}W^n_{\overline{k},u} \circ \theta_{T^x_{K}+4dT}$$
 is open, then every point in the box $2L\overline{k}+[-L,L]^d$ is visited by the bacteria between time  $T^x_{K}+4dT+nT$ and time $T^x_{K}+4dT+(n+1)T$. In particular, using Corollary~\ref{croitlin}, it comes that
$$\forall n\in\N\quad 2L[j(\eta^x_{T^x_{K}+4dT})]_{2L}+[-2nD_1L,2nD_1L]^d\subset \miniop{}{\cup}{0\le m \le n+M^x+4dT}\eta^{x}_{1,m};$$
we then deduce Estimate~(\ref{tpsatteinte})  and the existence of exponential moments for $M^x$ and~$T^x_K$.

Finally, since $\{\tau_1^x<+\infty\}\subset\{\tau_1^x\le M^x\}$, Estimate~(\ref{grandtemps}) follows from the bound for the exponential moments of $M^{x}$ given by Lemma~\ref{restartabstrait}; this completes the proof of Theorem~\ref{croissancedesuns} for small $\alpha$.
\end{proof}

\subsection{The case $\alpha<\alpha_c(p,q)$: the Bezuidenhout--Grimmett way}
\label{tropdur}
We fix $p,q, \alpha$ such that 
$$\P_{p,q,\alpha}(\tau_1^{0,\Zd\backslash\{0\}}=+\infty)=\P(\tau_1^{0,\Zd\backslash\{0\}}=+\infty)>0,$$
or in other words such that $\alpha<\alpha_c(p,q)$.

The proof for the linear growth of the bacteria conditioned to survive is, as in the case of a small $\alpha$, based on a renormalization process leading to the construction of a $d$-dimensional supercritical oriented percolation. 

In the previous case, when building the local block event, we could choose $\alpha$ small enough for our model to behave nearly as independent oriented percolation.
This is no longer the case when $\alpha$ is close to $\alpha_c$. Instead, we adapt the strategy developed by Bezuidenhout--Grimmett~\cite{MR1071804} for the supercritical contact process on $\Zd$, which is also the one  followed by Steif--Warfheimer~\cite{MR2461788} in the case of a contact process where the death rate depends on a dynamical environment. We will closely follow the proofs exposed by Liggett in~\cite{MR1717346} p 45-54 and by Steif--Warfheimer in~\cite{MR2461788}. The key point is the following proposition (which corresponds to Proposition 2.22 in Liggett~\cite{MR1717346} or Lemma 4.10 in Steif--Warfheimer~\cite{MR2461788}). 
We denote by $V$ the set of $e \in \Zd$ with $\|e\|_1\le 1$.

\subsubsection{The block event}

\begin{propo}
\label{kpasBG}
Let $\epsilon>0$ and $k\ge 1$ be fixed.
There exists $n,a,b$ with $n<a$ such that for every $u \in V$, every $\bar{n_0} \in \Zd$, every $x_0\in [-a,a]^d$, and every $t_0\in [0,b]$, we can define  random variables $Y(\bar{n_0},u,x_0,t_0)\in\Zd$ and $S(\bar{n_0},u,x_0,t_0)\in\N\cup\{+\infty\}$ such that
\begin{itemize}
  \item $Y(\bar{n_0},u,x_0,t_0)\in [-a,a]^d$
  \item $S(\bar{n_0},u,x_0,t_0)\in [5kb,(5k+1)b]\cup\{+\infty\}$
  \item $y+2ka(\bar{n_0}+u)+[-n,n]^d\subset \eta_{1,s-t_0}^{x_0+2ka\bar{n_0}+[-n,n]^d,\Zd\backslash ( x_0+2ka\bar{n_0}+[-n,n]^d )} \circ \theta_{t_0}$ \\on the event $\{Y(\bar{n_0},u,x_0,t_0)=y,S(\bar{n_0},u,x_0,t_0)=s\}$
  \item $\P_{p,q,\alpha}(S(\bar{n_0},u,x_0,t_0)<+\infty)\ge 1-\epsilon$
  \item The event $\{Y(\bar{n_0},u,x_0,t_0)=y,S(\bar{n_0},u,x_0,t_0)=s\}$
  belongs to the $\sigma$-algebra generated by the background random variables related to the space-time area $$\left(\miniop{k-1}{\cup}{j=0} ([-5a,5a]^d\times [0,6b])+(2jau,5jb)\right)\cap (\Zd\times [t_0,s]).$$
\end{itemize}
\end{propo}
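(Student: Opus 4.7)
The strategy is the Bezuidenhout--Grimmett block construction, adapted to our dynamical hostile environment along the lines of Liggett's Proposition~2.22 in~\cite{MR1717346} and Steif--Warfheimer's Lemma~4.10 in~\cite{MR2461788}. The structural input is the combination of the assumption $\alpha<\alpha_c(p,q)$ with Theorem~\ref{sansnom}, which gives $\P_{p,q,\alpha}(\tau_1^{0,\Zd\backslash\{0\}}=+\infty)>0$: the bacteria issued from a single point survive with positive probability even in a fully hostile initial environment.

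The first step would be a finite-volume approximation of this survival event. For a prescribed $\delta>0$, I would produce integers $n<N$ and a time $T$ such that, starting from one bacterium in a fully hostile environment, with probability at least $\delta$ there exists a space-time point $(y,t)\in [-N,N]^d\times [0,T]$ with $y+[-n,n]^d$ entirely occupied by bacteria at time $t$, and such that the corresponding event is measurable with respect to the background variables in a bounded space-time box. Positivity comes from the survival hypothesis combined with a density lower bound for the surviving cluster in the spirit of Lemma~\ref{momtprime}; the local measurability is obtained by truncating the type~2 clusters that enter from outside the observation box, which is permissible because $q<\pcdir(d+1)$ is subcritical, so those clusters are exponentially controlled by a Peierls estimate.

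From this finite-volume event I would extract a one-step propagation block event: from a fully occupied $[-n,n]^d$ block in a hostile exterior, the bacteria produce another fully occupied block $y+[-n,n]^d$ shifted by $2au$ in space and by a time in a small window around $5b$, for any direction $u\in V$. By the symmetries of the model this happens with probability at least some $p_1>0$ that is uniform in $u$, and the event depends only on the background variables in a single box of shape $[-5a,5a]^d\times[0,6b]$. Given $\epsilon>0$ and $k\ge 1$, the random variables $Y(\bar{n_0},u,x_0,t_0)$ and $S(\bar{n_0},u,x_0,t_0)$ are then defined by chaining attempts at this one-step event along the staircase of translated boxes $([-5a,5a]^d\times[0,6b])+(2jau,5jb)$, $j=0,\dots,k-1$: $Y$ records the offset of the last successful block and $S$ the time at which it is reached. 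Choosing $n,a,b$ large enough so that many independent sub-attempts fit inside a single step, $p_1$ can be boosted arbitrarily close to $1$, and the entire $k$-step chain then fails with probability at most $\epsilon$.

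The main obstacle is the simultaneous reconciliation of three constraints: the initial/boundary condition must be pessimistic (type~2 everywhere outside the starting block) so that the event composes under space-time translation; the event must be measurable with respect to the prescribed union of local boxes so that the outputs of distinct applications of the proposition will later be independent modulo their starting blocks; and the environment is dynamical, so type~2 immigration can disrupt propagation indefinitely. The first two tensions are resolved by Theorem~\ref{sansnom} together with the subcritical truncation permitted by $q<\pcdir(d+1)$; the third is exactly what the hypothesis $\alpha<\alpha_c(p,q)$ is there to overcome. Once the one-step event is in place, the remaining work is essentially bookkeeping of $Y$ and $S$ along the concatenated attempts and verification of the claimed support on the staircase, following Liggett~\cite{MR1717346} almost verbatim.
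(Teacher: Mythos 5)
Your scaffolding matches the paper's: this is the Bezuidenhout--Grimmett dynamic renormalization as organized in Liggett~\cite{MR1717346} and adapted by Steif--Warfheimer~\cite{MR2461788}, with a one-step block event chained $k$ times along the staircase of translated boxes via stopping times and monotonicity, and the pessimistic boundary condition (type~$2$ everywhere outside the observation box) is indeed the device that localizes the event. But there is a real gap in the mechanism you propose for driving the one-step block probability up to $1-\delta$. You assert a uniform lower bound $p_1>0$ for the one-step event and then claim ``$p_1$ can be boosted arbitrarily close to $1$'' by fitting many independent sub-attempts inside a single step. This is not how the construction works and it is unclear it could: you have only one fully occupied source block, so you cannot launch several independent one-step attempts from it, and attempts launched from distinct sub-blocks of the source block share the same background inside the observation box, so they are not independent. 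The actual engine, following Liggett's Chapter~I.2, is: (i) choose $n$ so large that survival from $[-n,n]^d$ in a fully hostile exterior has probability $>1-\delta^2$ (Lemma~\ref{lem1}, using $\alpha<\alpha_c$ and ergodicity); (ii) show that on the survival event the truncated process must populate many widely spaced points on the top face \emph{and} on the lateral faces of a suitable space-time box (Lemmas~\ref{lem2}, \ref{lem3}); (iii) use the FKG square-root trick to force a prescribed orthant (Lemmas~\ref{FKG1}, \ref{FKG2}); only then do the ``independent sub-attempts'' enter, as local independent tries from those well-separated points to produce a fully occupied $[-n,n]^d$-block (Lemmas~\ref{undemipas}, \ref{unpas}, \ref{unpasBGbis}). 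Without steps~(i)--(iii) there is no supply of well-separated points to seed independent attempts, so your boosting step has nothing to boost.

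A smaller point: you attribute locality to a Peierls truncation of the intruding subcritical type-$2$ clusters, but the paper does not use this; it uses the monotone worst-case truncation $_L\eta$ (everything outside $(-L,L)^d$ declared to be in state~$2$), which is the same device you correctly invoke later when you say the boundary condition ``must be pessimistic so that the event composes.'' The Peierls route might be made to work, but it is a different and less clean argument than what the paper actually uses, and juxtaposing the two as you do is redundant.
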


\begin{rema}
Note that the event $\{Y(\bar{n_0},u,x_0,t_0)=y,S(\bar{n_0},u,x_0,t_0)=s\}$
  belongs to the $\sigma$-algebra generated by the background random variables related to the (moderately) simpler space-time area $$\left(\Zd\times [0,5kb]\right)\cup \left([-7a+2ka(\bar{n_0}+u),  7a+2ka(\bar{n_0}+u)]      \times [5kb,5kb+s]\right).$$
\end{rema}

The idea of this proposition is the following: starting from a fully occupied source square $(x_0+2a\bar{n_0}, t_0)+[-n,n]^d \times \{0\}$, bacteria can
with high probability colonize a (random) target square $(Y(\bar{n_0},u,x_0,t_0)+2a(\bar{n_0}+u), S(\bar{n_0},u,x_0,t_0))+([-n,n]^d \times \{0\})$, in a manner measurable with respect to the background random variables related to the space-time area $$\left(\miniop{k-1}{\cup}{j=0} ([-5a,5a]^d\times [0,6b])+(2jau,5jb)\right)\cap (\Zd\times [t_0,s]).$$
The occurence of this event will correspond to the opening of the macroscopic edge between the macroscopic sites $(\bar{n_0},0)$ and $(\bar{n_0}+u,1)$, corresponding respectively to microscopic coordinates $(2a\bar{n_0},0)$ and $(2a(\bar{n_0}+u),5b)$. Note that the source square and the target square are floating, in the sense that their respective centers $(x_0+2a\bar{n_0}, t_0)$ and $(Y(\bar{n_0},u,x_0,t_0)+2a(\bar{n_0}+u), S(\bar{n_0},u,x_0,t_0))$ are only known to be in the boxes $(2a\bar{n_0}, 0)+\left([-2a,2a]^d \times [0,b]\right)$ and $(2a(\bar{n_0}+u), 5b)+\left([-2a,2a]^d \times [0,b]\right)$.

The measurability properties of the concerned event will thus be crucial to control the dependence  of the $d$-dimentional percolation process.
The possibility for type 2 particles coming from outside a box to influence what happens inside the box leads us to state our proposition in a way that differs from Liggett and Steif--Warfheimer.

We split its proof in several lemmas. For each of these lemmas, we quote the corresponding results in Liggett~\cite{MR1717346} and Steif--Warfheimer~\cite{MR2461788}.

First, as the bacteria survive, we can take a source square large enough to ensure that with high probability, bacteria starting from this square will survive whatever the configuration outside the square is:

\begin{lemme}[Proposition 2.1 in Liggett~\cite{MR1717346}, Lemma 4.1 in~Steif--Warfheimer~\cite{MR2461788}]  
\label{lem1}
$$\lim_{n \to + \infty} \P \left (\forall t \in \N \quad \eta_{1,t}^{[-n,n]^d, \Zd \backslash [-n,n]^d} \neq \varnothing \right)=1.$$
\end{lemme}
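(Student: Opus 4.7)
My approach combines a monotonicity property of the dynamics with a spatial ergodicity argument to upgrade the positive survival probability provided by the hypothesis $\alpha<\alpha_c(p,q)$ into a probability tending to one as the source grows. Introduce the partial order $\preceq_B$ on $\{0,1,2\}$ defined by $2\preceq_B 0\preceq_B 1$ (the ``bacterial preference'' order), and extend it coordinatewise on $\{0,1,2\}^{\Zd}$. I first claim that the update map $f$ of the introduction is monotone for $\preceq_B$: if $x\preceq_B x'$ at every site, then $f(x,\omega')\preceq_B f(x',\omega')$ pointwise for every $\omega'\in\tilde\Omega$. The proof is a case analysis on the site $k$: if $\omega_3^k=1$ both updates equal $2$; otherwise, replacing 2's by 0's or 1's when passing from $x$ to $x'$ can only shrink the set of type-2 neighbors (weakening the middle term of $f$) and can only enlarge the set of type-1 neighbors (strengthening the third term), so the result of the outer $\max$ respects $\preceq_B$. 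By induction on time, driven by the same realization of $\omega$, $\eta_n^{E_1,E_2}\preceq_B \eta_n^{E_1',E_2'}$ pointwise whenever the inequality holds at time $0$.

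Applied to the pair $(\{x\},\Zd\setminus\{x\})\preceq_B ([-n,n]^d,\Zd\setminus[-n,n]^d)$ for any $x\in[-n,n]^d$, which differ only by replacing 2's with 1's on $[-n,n]^d\setminus\{x\}$, monotonicity yields the set inclusion $\eta_{1,t}^{\{x\},\Zd\setminus\{x\}}\subset \eta_{1,t}^{[-n,n]^d,\Zd\setminus[-n,n]^d}$ for every $t$ (1 being the top of $\preceq_B$, any site equal to $1$ in the smaller process must also equal $1$ in the larger). Consequently, on the event that the single-seed process survives from some $x\in[-n,n]^d$, the fully-occupied-box process also survives.

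It then suffices to show that with probability tending to one, some $x\in[-n,n]^d$ is such a surviving seed. Define
\[
\Omega^\star \;=\; \bigcup_{x\in\Zd}\{\tau_1^{x,\Zd\setminus\{x\}}=+\infty\},\qquad \Omega^\star_n\;=\;\bigcup_{x\in[-n,n]^d}\{\tau_1^{x,\Zd\setminus\{x\}}=+\infty\}.
\]
The event $\Omega^\star$ is invariant under the spatial shift action of $\Zd$ on $\Omega$: translation covariance of the dynamics implies that a shift by $y$ merely relabels the index of the union. Since $\P$ is a product Bernoulli measure indexed by $\Zd$ (with auxiliary time index), the spatial $\Zd$-action on $(\Omega,\P)$ is ergodic. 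By the hypothesis $\alpha<\alpha_c(p,q)$ together with Theorem~\ref{sansnom}, $\P(\tau_1^{0,\Zd\setminus\{0\}}=+\infty)>0$, so $\P(\Omega^\star)>0$ and ergodicity forces $\P(\Omega^\star)=1$. Since $\Omega^\star_n\uparrow\Omega^\star$, we have $\P(\Omega^\star_n)\to 1$, and by the previous paragraph $\Omega^\star_n\subset\{\tau_1^{[-n,n]^d,\Zd\setminus[-n,n]^d}=+\infty\}$, which concludes the proof.

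The main obstacle is the verification of the $\preceq_B$ monotonicity: because type-2 particles beat type-1 particles in conflicts, the naive componentwise order on $\{0,1,2\}$ (namely $0<1<2$) is \emph{not} preserved by $f$, and one has to work with the nonstandard total order $2\prec 0\prec 1$ and check each branch of the $\max$ in the update rule against it. Once this monotonicity and the translation covariance of the extinction event are in hand, the ergodicity step is routine, and no fine control of the dynamics is needed.
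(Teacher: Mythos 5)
Your proof is correct and follows essentially the same route as the paper: monotonicity reduces the box event to the existence of a surviving single seed in $[-n,n]^d$, and spatial ergodicity promotes the positive probability of the translation-invariant event $\{\exists x:\tau_1^{x,\Zd\setminus\{x\}}=+\infty\}$ to probability one. The only difference is that you spell out the monotonicity via the order $2\prec 0\prec 1$, which the paper invokes without proof.
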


\begin{proof}
Let $n$ be fixed. By monotonicity, 
$$\P \left (\forall t \in \N \quad \eta_{1,t}^{[-n,n]^d, \Zd \backslash [-n,n]^d} \neq \varnothing \right) \ge \P \left( \exists x \in [-n,n]^d \quad \forall t \in \N \quad \eta_{1,t}^{ \{x\}, \Zd \backslash \{x\}} \neq \varnothing \right);$$
this last terms converges, when $n$ goes to infinity,  to $$\P \left( \exists x \in \Zd \quad \forall t \in \N \quad \eta_{1,t}^{\{x\}, \Zd \backslash \{x\}} \neq \varnothing \right),$$ which is the probability of a translation invariant event. By ergodicity, this probability is either null or full, and as $\alpha<\alpha_c(p,q)$, it is positive and thus equal to $1$.
\end{proof}

Then, to control the spatial dependence, we define a truncated process: for every positive integer $L$, for every finite $A \subset (-L,L)^d \cap \Zd$, the process $(_L\eta_{1,t}^{A, \Zd \backslash A})_{t \in \N}$ evolves as $(\eta_{1,t}^{A, \Zd \backslash A})_{t \in \N}$, except that outside the space-time box $(-L,L)^d \times \N$, all sites are in state $2$ -- which is the worst case from the bacteria point of view. Thus ${}_L\eta_{1,t}^{A, \Zd \backslash A}$ only depends on the background random variables related to the space-time zone $[-L,L]^d \times [0,t]$. 

\begin{lemme}[Proposition 2.2 in Liggett~\cite{MR1717346}, Lemma 4.3 in~Steif--Warfheimer~\cite{MR2461788}]
\label{lem2}
For every finite $A \subset \Zd$, for every positive integer $N$,
$$\lim_{t \to + \infty} \lim_{L \to + \infty} \P \left (|_L\eta_{1,t}^{A, \Zd \backslash A}| \ge N\right)= \P \left (\forall t \in \N \quad \eta_{1,t}^{A, \Zd \backslash A} \neq \varnothing \right).$$
\end{lemme}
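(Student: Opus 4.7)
The plan is to split the double limit, using a monotone coupling for the inner $L \to +\infty$ passage and an instant-extinction estimate for the outer $t \to +\infty$ passage. Write $\tau = \tau_1^{A,\Zd \setminus A}$ for the extinction time. For the inner limit, introduce the partial order $\prec_1$ on $\{0,1,2\}^{\Zd}$ defined by $x \prec_1 y$ iff $E_1(x) \subset E_1(y)$ and $E_2(x) \supset E_2(y)$, so that $y$ is ``friendlier'' to the bacteria. A direct case analysis on the formula defining $f$ shows that $f(\cdot, \tilde\omega)$ is non-decreasing for $\prec_1$; moreover, the ``reset'' operator $R_L$ that forces sites outside $[-L,L]^d$ to state~$2$ satisfies $R_L(z) \prec_1 R_{L'}(z')$ whenever $z \prec_1 z'$ and $L \le L'$. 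Coupling ${}_L\eta$ and ${}_{L'}\eta$ via common underlying randomness and iterating $R \circ f$ yields ${}_L\eta_t^{A,\Zd \setminus A} \prec_1 {}_{L'}\eta_t^{A,\Zd \setminus A}$; since bacteria propagate at unit speed, the truncation is invisible as soon as $L$ exceeds the diameter of $A + [-t,t]^d$, so ${}_L\eta_{1,t}^{A,\Zd \setminus A} \uparrow \eta_{1,t}^{A,\Zd \setminus A}$ pointwise. Monotone convergence of events then gives $\lim_{L} \P(|{}_L\eta_{1,t}^{A,\Zd \setminus A}| \ge N) = \P(|\eta_{1,t}^{A,\Zd \setminus A}| \ge N)$.

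For the outer limit, decompose
$$\P\big(|\eta_{1,t}^{A,\Zd \setminus A}| \ge N\big) = \P\big(|\eta_{1,t}| \ge N,\, \tau < +\infty\big) + \P(\tau = +\infty) - \P\big(|\eta_{1,t}| < N,\, \tau = +\infty\big).$$
The first summand is dominated by $\P(t < \tau < +\infty)$, which tends to $0$. To control the third summand, observe that from any configuration with $1 \le |\eta_{1,s}| \le N-1$, forcing $\omega_3^{k,s+1}=1$ at each of the at most $(2d+1)(N-1)$ sites $k$ in the $\|\cdot\|_1$-unit neighborhood of the current type-$1$ particles wipes out every bacterium in one step; hence
$$\P\big(\eta_{1,s+1} = \varnothing \,\big|\, \mathcal{F}_s\big) \ge \delta_N := \alpha^{(2d+1)(N-1)} > 0 \qquad \text{on } \{1 \le |\eta_{1,s}| \le N-1\}.$$
Setting $\sigma_0 = 0$ and $\sigma_{k+1} = \inf\{s > \sigma_k :\, 1 \le |\eta_{1,s}| < N\}$, the strong Markov property yields $\P(\sigma_{k+1} < +\infty) \le (1-\delta_N)\,\P(\sigma_k < +\infty)$, hence $\P(\sigma_k < +\infty) \le (1-\delta_N)^k \to 0$. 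Consequently, $\P\bigl(\tau = +\infty,\, |\eta_{1,s}| < N \text{ for infinitely many } s\bigr) = 0$, and continuity from above gives $\P(|\eta_{1,t}| < N,\, \tau = +\infty) \to 0$, which completes the outer limit.

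The main obstacle is precisely this ``no escape through small configurations'' estimate; the rest is bookkeeping. It relies on $\alpha > 0$ to make immigration a viable killing mechanism, but this is not restrictive: the degenerate case $\alpha = 0$ reduces to supercritical oriented percolation, for which the statement is classical (cf.\ Lemma~\ref{momtprime}).
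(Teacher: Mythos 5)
Your proof is correct, and for the inner limit in $L$ it follows essentially the same monotone-exhaustion route as the paper: the truncation is invisible once $L \ge \|A\|_\infty + 2t + 1$, so ${}_L\eta_{1,t}^{A,\Zd\setminus A}$ increases to $\eta_{1,t}^{A,\Zd\setminus A}$. For the outer limit you diverge in an interesting way. The paper shows that on $\{\tau = \infty\}$ one has $|\eta_{1,s}| \to \infty$ via L\'evy's $0$--$1$ law applied to $\P(\tau < \infty \mid \mathcal{F}_s)$, with the instant-extinction bound $\P(\tau < \infty \mid \mathcal{F}_s) \ge (1-p)^{(2d+1)|\eta_{1,s}|}$ obtained by closing all type-$1$ edges emanating from the current bacteria. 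You instead run a restart argument on the successive return times $\sigma_k$ to configurations with $1 \le |\eta_{1,s}| < N$, using the alternative instant-extinction bound $\delta_N = \alpha^{(2d+1)(N-1)}$ obtained by forcing immigration at every potential landing site; geometric decay of $\P(\sigma_k < \infty)$ then rules out infinitely many visits to small configurations on $\{\tau=\infty\}$. Both mechanisms kill the process in one step with uniformly positive probability from a small configuration, and both formalizations are rigorous; yours is arguably more elementary (no appeal to martingale convergence) and thematically aligned with the restart machinery used elsewhere in the paper, while the paper's is the classical Harris argument that readers of~\cite{MR1717346} will recognize. Two minor remarks: the two killing mechanisms fail in complementary degenerate regimes ($p=1$ for the paper, $\alpha=0$ for you)---you rightly flag $\alpha=0$, though the edge-closing variant rather than Lemma~\ref{momtprime} is what actually covers it---and ``continuity from above'' in your final step should really be reverse Fatou (or dominated convergence on the indicators), since the events $\{|\eta_{1,t}|<N,\ \tau=\infty\}$ are not nested in $t$; the conclusion nonetheless holds because their $\limsup$ has probability zero.
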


\begin{proof} Let $A$ be a fixed finite subset of $\Zd$, and $N$ be a fixed positive integer.\\
Let us first note that 
$$\forall t \in \N \quad \eta_{1,t}^{A, \Zd \backslash A} = 
\bigcup_{L \in \N}   {}_L\eta_{1,t}^{A, \Zd \backslash A}.$$
Indeed, the inclusion $\supset$ follows from positivity, and if $L \ge \|A\|_\infty+2t+1$, then for every $s \le t$, 
$\eta_{1,s}^{A, \Zd \backslash A}=\;_L\eta_{1,s}^{A, \Zd \backslash A}$.
Thus
$$\left\{ |\eta_{1,t}^{A, \Zd \backslash A}| \ge N\right\} =
\bigcup_{L \in \N} \left\{ |_L\eta_{1,t}^{A, \Zd \backslash A}| \ge N\right\},$$
and thus $\displaystyle \lim_{L \to + \infty} \P \left(|_L\eta_{1,t}^{A, \Zd \backslash A}| \ge N\right)= \P \left(|\eta_{1,t}^{A, \Zd \backslash A}| \ge N\right)$.

Now, for $s \in \N$, denote by $\mathcal{F}_s$ the $\sigma$-algebra generated by all the Bernoulli random variables indexed by a time coordinate smaller than or equal to $s$. By blocking the edges that allow the expansion of the $1$'s, we see that
$$\P \left( \tau_1^{A, \Zd \backslash A}<+\infty \; | \; \mathcal{F}_s \right) \ge (1-p)^{(2d+1)|\eta_{1,s}^{A, \Zd \backslash A}|}.$$
By the martingale convergence theorem, $\displaystyle \lim_{s \to + \infty} \P \left( \tau_1^{A, \Zd \backslash A}<+\infty \; | \; \mathcal{F}_s \right)=\1_{\{\tau_1^{A, \Zd \backslash A}<+\infty\}}$. So on the event $\{\tau_1^{A, \Zd \backslash A}=+\infty\}$, $\displaystyle \lim_{s \to + \infty} |\eta_{1,s}^{A, \Zd \backslash A}|=+\infty$, 
which implies (by dominated convergence for instance) that
$\displaystyle \lim_{s \to + \infty} \P( \tau_1^{A, \Zd \backslash A}=+\infty, \; |\eta_{1,s}^{A, \Zd \backslash A}|\le N-1)=0$. Finally,
$$\lim_{t \to + \infty} \P \left(|\eta_{1,t}^{A, \Zd \backslash A}| \ge N\right) =\P( \tau_1^{A, \Zd \backslash A}=+\infty).$$
\end{proof}

Then, using the FKG inequality with a classical square root trick, we can ensure that the truncated process at time $t$ contains many points in a given orthant of $\Zd$:

\begin{lemme}[Proposition 2.6 in Liggett~\cite{MR1717346}, Proposition 4.5 in~Steif--Warfheimer~\cite{MR2461788}]
\label{FKG1}
For every positive integers $n,N,t$, for every integer $L \ge n$,
$$\P \left (|_L\eta_{1,t}^{[-n,n]^d, \Zd \backslash [-n,n]^d}\cap [0,L)^d| \le N\right)^{2^d} \le 
\P \left (|_L\eta_{1,t}^{[-n,n]^d, \Zd \backslash [-n,n]^d} | \le N2^d\right).$$
\end{lemme}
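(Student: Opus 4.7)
The plan is the classical \emph{square root trick}: combine the FKG inequality on the underlying product measure with the reflection symmetries of the truncated model. First I would index $2^d$ slightly overlapping orthants of $\Zd$ by $\sigma \in \{-1,+1\}^d$ via $O_\sigma = \prod_{i=1}^d \bar A_{\sigma_i}$, with $\bar A_+ = \{0,1,2,\dots\}$ and $\bar A_- = \{\dots,-2,-1,0\}$. These cover $\Zd$, and the coordinate reflections $x \mapsto (\sigma_1 x_1,\dots,\sigma_d x_d)$ permute them while preserving the initial condition $[-n,n]^d$, the truncation box $(-L,L)^d$, and the law $\P$. Hence $|{}_L\eta_{1,t}^{[-n,n]^d,\Zd \backslash [-n,n]^d} \cap O_\sigma|$ has the same distribution for every $\sigma$; for $\sigma = (+,\dots,+)$, since the truncated process remains in $(-L,L)^d$, this quantity is exactly the one appearing in the statement, namely $|{}_L\eta_{1,t}^{[-n,n]^d,\Zd \backslash [-n,n]^d} \cap [0,L)^d|$.

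Next I would equip $\Omega$ with the partial order $\omega \preceq \omega'$ iff $\omega_1 \le \omega_1'$, $\omega_2 \ge \omega_2'$, and $\omega_3 \ge \omega_3'$ coordinatewise. The truncated dynamics is monotone in the right sense: more ``bacteria edges'' help, while more ``immune edges'' and more immigrations hurt the $1$'s. Thus $|{}_L\eta \cap O_\sigma|$ is $\preceq$-nonincreasing and each event $E_\sigma = \{|{}_L\eta \cap O_\sigma| \le N\}$ is $\preceq$-increasing. Flipping the $\omega_2, \omega_3$ coordinates of $\P$ produces another Bernoulli product measure (with parameters $p, 1-q, 1-\alpha$) and transforms each $E_\sigma$ into an event that is increasing in the standard order. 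The FKG inequality for product measures then gives
\[
\P\Bigl(\bigcap_\sigma E_\sigma\Bigr) \;\ge\; \prod_\sigma \P(E_\sigma) \;=\; \P(E_{(+,\dots,+)})^{2^d}.
\]

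To conclude, I would use the cover property: $|{}_L\eta| \le \sum_\sigma |{}_L\eta \cap O_\sigma|$, so $\bigcap_\sigma E_\sigma \subset \{|{}_L\eta| \le N 2^d\}$, and combining this inclusion with the FKG bound yields the claim. The only genuine subtlety of the plan is the choice of orthants: one needs them simultaneously equidistributed under the symmetries of the model centered at $0$ (which forces them to overlap on the coordinate hyperplanes) and to cover $\Zd$ (so that the cardinality inclusion holds). The mixed monotonicity in the three families of Bernoulli variables is then absorbed without trouble by the partial order above, so FKG applies directly and no further work is needed.
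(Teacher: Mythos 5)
Your proof is correct and follows the same standard square-root-trick argument as the cited references (and as the paper intends): cover $\Zd$ by the $2^d$ overlapping orthants, use the reflection symmetries of the truncated model to equidistribute the orthant counts, re-encode the Bernoulli coordinates so that monotonicity is in a single direction, and apply FKG before the cover inequality. One cosmetic slip: with your partial order $\preceq$, the larger configuration favors the bacteria, so $|{}_L\eta\cap O_\sigma|$ is $\preceq$-\emph{nondecreasing} and $E_\sigma$ is $\preceq$-\emph{decreasing} (not increasing as written), but this is harmless since FKG yields positive correlation equally for a family of decreasing events.
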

Prescribing a given orthant will not be sufficient to ensure a strictly positive  
move of the bacteria between time $0$ and time $T$ (remember we want to build an open oriented macroscopic edge). We thus also work with the points on the lateral faces of the box $[-L,L]^d \times [0,T]$ colonized from a subset $A$ of $[-L,L]^d$. 

For every positive integers $L,T$ and every finite $A \subset [-L,L]^d$, we define $N^{A, \Zd \backslash A}(L,T)$ as the maximal number of points $(x,t)$ such that 
$t \in [0,T]$, $\|x\|_\infty=L$, $x \in {}_L\eta_{1,t}^{A, \Zd \backslash A}$ and satisfying the following extra property: if $(x,t)$ and $(y,s)$ are two distinct points in this set, then $|t-s| \ge 2n $ and $\|x-y\|_\infty \ge 2n$.

The next point is to ensure that when the bacteria survive, they must colonize many points and on the top face and on the lateral faces of a large box:

\begin{lemme}[Proposition 2.8 in Liggett~\cite{MR1717346}, Lemma 4.4 in~Steif--Warfheimer~\cite{MR2461788}]
\label{lem3}
For any positive integers $M,N$, for every finite $A \subset \Zd$, 
$$\miniop{}{\limsup}{\substack{L\to  +\infty,\\ T\to +\infty}} \P \left( 
\begin{array}{c}
N^{A, \Zd \backslash A}(L,T) \\
\le M 
\end{array}\right) \P \left( 
\begin{array}{c}
|_{L}\eta_{1,T}^{[-n,n]^d, \Zd \backslash [-n,n]^d}| \\
\le N
\end{array}
\right) \le \P \left( \exists t \quad \eta_{1,t}^{A, \Zd \backslash A}=\varnothing \right).$$
\end{lemme}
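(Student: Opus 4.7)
The plan is to follow the strategy of Liggett (Proposition~2.8 in~\cite{MR1717346}) and Steif--Warfheimer (Lemma~4.4 in~\cite{MR2461788}): I will combine a martingale convergence argument with a local killing estimate, and then decouple the two factor events via FKG. Denote by $\mathcal{F}_{L,T}$ the $\sigma$-algebra generated by the Bernoulli variables $\omega_1,\omega_2,\omega_3$ indexed by the space-time box $[-L,L]^d\times\{1,\dots,T\}$, set $G=\{\exists t,\ \eta_{1,t}^{A,\Zd\setminus A}=\varnothing\}$, and introduce the conjunction
$$H_{L,T}=\bigl\{N^{A,\Zd\setminus A}(L,T)\le M\bigr\}\cap\bigl\{|{}_L\eta_{1,T}^{[-n,n]^d,\Zd\setminus[-n,n]^d}|\le N\bigr\},$$
which is $\mathcal{F}_{L,T}$-measurable.

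The key step is a local killing bound: on $H_{L,T}$,
$$\P(G\mid\mathcal{F}_{L,T})\ \ge\ c_0^{\,M+N}$$
for some constant $c_0=c_0(p,d)>0$. The idea is that every type~1 descendant of $A$ that survives beyond the space-time box $[-L,L]^d\times[0,T]$ must be traceable either to one of the at most $M$ lateral escape points through its sides (these are the points counted by $N^{A,\Zd\setminus A}(L,T)$, which by construction are pairwise separated by at least $2n$ in both space and time) or to one of the at most $N$ type~1 particles alive at time $T$ inside the truncated process (monotonicity of type~1 propagation in the starting configuration, via a suitable inclusion, allows one to dominate the $A$-population by the $[-n,n]^d$-population). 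Each such source can be neutralized by closing a uniformly bounded number of $\omega_1$-edges in its immediate space-time neighbourhood, and the $2n$-separation ensures that these edges belong to pairwise disjoint families, so the corresponding blocking events are independent with joint cost at least $c_0^{\,M+N}$.

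Once this lower bound is established, L\'evy's $0$--$1$ law gives $\P(G\mid\mathcal{F}_{L,T})\to\1_G$ almost surely as $L,T\to\infty$, so on $\limsup_{L,T}H_{L,T}$ one has $\1_G\ge c_0^{\,M+N}>0$, hence $\limsup_{L,T}H_{L,T}\subset G$ up to a null set, and reverse Fatou yields $\limsup_{L,T}\P(H_{L,T})\le\P(G)$. Both events defining $H_{L,T}$ are decreasing in the $\omega_1$-coordinates and increasing in the $\omega_2,\omega_3$-coordinates, hence monotone in the same direction on the product Bernoulli space after flipping $\omega_1$ (as in the proof of Lemma~\ref{FKG1}), and FKG gives
$$\P(H_{L,T})\ \ge\ \P\bigl(N^{A,\Zd\setminus A}(L,T)\le M\bigr)\,\P\bigl(|{}_L\eta_{1,T}^{[-n,n]^d,\Zd\setminus[-n,n]^d}|\le N\bigr).$$
Combining this with the previous inequality concludes the proof.

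The main obstacle will be the local killing bound. The difficulty is not the uniform cost of closing a single escape, which is elementary, but the verification that the $2n$-spacing in the definition of $N^{A,\Zd\setminus A}(L,T)$ together with the time gap between the top-face particles and the lateral sources genuinely makes the blocking events independent in the presence of the type~2 dynamics; in particular, one has to check that the $\omega_2$- and $\omega_3$-variables used implicitly do not spoil this factorization. This is precisely what the elaborate spacing condition in the definition of $N^{A,\Zd\setminus A}(L,T)$ was engineered to ensure.
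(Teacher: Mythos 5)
Your proposal matches the paper's argument in both structure and substance: you use the same martingale-convergence (Lévy $0$--$1$) step together with the uniform local-killing bound to deduce $\limsup H \subset G$ up to a null set, then reverse Fatou, then FKG for the factorization. The only cosmetic difference is that the paper takes $H_j=\{N^{A,\Zd\setminus A}(L_j,T_j)+|{}_{L_j}\eta_{1,T_j}^{[-n,n]^d,\Zd\setminus[-n,n]^d}|\le M+N\}$ and gets the product bound via $\P(H_j)\ge\P(\text{both})\ge\text{product}$, while you work directly with the intersection; and the paper simply asserts the killing bound as $[(1-p)^{2d+1}]^{M+N}$ where you more cautiously write $c_0^{M+N}$ and flag the verification as the delicate point, which is a fair observation about a step the paper leaves terse.
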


\begin{proof}
For two integers $L,T$, let $\mathcal F_{L,T}$ be the $\sigma$-algebra generated by the restriction of the graphical representation $\omega$ to the box $[-L,L]^d \times [0,T]$. Let $A$ be a finite subset of $\Zd$, and $M,N$ be fixed integer. Set $k=M+N$. Let $(T_j)_j$ and $(L_j)_j$ be two increasing sequences of integers.
$$H_j=\{N^{A, \Zd \backslash A}(L_j,T_j)+|_{L_j}\eta_{1,T_j}^{[-n,n]^d, \Zd \backslash [-n,n]^d}| \le k\} \quad \text{ and } \quad G=\{\exists t \quad \eta_{1,t}^{A, \Zd \backslash A}=\varnothing\}.$$
Then, as $H_j \in \mathcal F_{L_j,T_j}$, 
$$\P(G|\mathcal F_{L_j,T_j}) \ge [(1-p)^{2d+1}]^k\1_{H_j}.$$
By the martingale convergence theorem, $\P(G|\mathcal F_{L_j,T_j})$ almost surely converges to $\1_{G}$, which implies that
$$\miniop{}{\limsup}{j\to +\infty} H_j \subset G, \; \text{ and thus } \miniop{}{\limsup}{j\to +\infty}\P(H_j) \le \P\left( \miniop{}{\limsup}{j\to +\infty} H_j\right) \le \P(G).$$
Using once again the FKG inequality, note that
\begin{eqnarray*}
\P(H_j) & = & \P(N^{A, \Zd \backslash A}(L_j,T_j)+|_{L_j}\eta_{1,T_j}^{[-n,n]^d, \Zd \backslash [-n,n]^d}|\le M+N) \\
& \ge & \P \left( N^{A, \Zd \backslash A}(L_j,T_j) \le M \right) \P \left( |_{L_j}\eta_{1,T_j}^{[-n,n]^d, \Zd \backslash [-n,n]^d}| \le N\right), 
\end{eqnarray*} 
which ends the proof.
\end{proof}

Exactly as in Lemma~\ref{FKG1}, the FKG inequality and the symmetries of the process allow to control the number of colonized points in a prescribed orthant of a lateral face of the box $[-L,L]^d\times [0,T]$.

For every positive integers $L,T$ and every finite $A \subset \Zd$, we define $N_+^{A, \Zd \backslash A}(L,T)$ as the maximal number of points $(x,t)$ such that 
$t \in [0,T]$, $x_1=L$, $x_i \ge 0$ for $2 \le i \le d$, $x \in _L\eta_{1,t}^{A, \Zd \backslash A}$ and satisfying the following extra property: if $(x,t)$ and $(y,s)$ are two distinct points in this set, then $|t-s| \ge 2n $ and $\|x-y\|_\infty \ge 2n$. Then

\begin{lemme}[Proposition 2.11 in Liggett~\cite{MR1717346}, Proposition 4.6 in~Steif--Warfheimer~\cite{MR2461788}]
\label{FKG2}
For every positive integers $n,N,t$, for every integer $L \ge n$,
$$\P \left ( N_+^{[-n,n]^d, \Zd \backslash [-n,n]^d}(L,T)\le M\right)^{d2^d} \le 
\P \left ( N^{[-n,n]^d, \Zd \backslash [-n,n]^d}(L,T)\le Md2^d\right).$$
\end{lemme}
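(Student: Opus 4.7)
The strategy mirrors that of Lemma~\ref{FKG1}: combine symmetry of the process with a correlation (FKG) argument. First I would decompose the lateral boundary $\{x\in[-L,L]^d:\|x\|_\infty=L\}\times[0,T]$ into $d\cdot 2^d$ congruent pieces, each obtained by fixing one coordinate to $+L$ or $-L$ (this yields $2d$ faces) and then prescribing a sign for each of the $d-1$ remaining coordinates (this yields $2^{d-1}$ orthants per face). The quantity $N_+^{[-n,n]^d,\Zd\backslash[-n,n]^d}(L,T)$ corresponds exactly to one such piece (the face $x_1=L$, orthant $x_i\ge 0$). Denote by $N_i$, $i=1,\dots,d\cdot 2^d$, the analogous count for each piece; $N_+$ is one of them.

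Next I would observe two facts. \emph{(Sub-additivity.)} If $\{(x_j,t_j)\}$ is any admissible family attaining $N$, each of its points lies in exactly one of the $d\cdot 2^d$ pieces, and its restriction to a single piece still satisfies the separation conditions $|t-s|\ge 2n$ and $\|x-y\|_\infty\ge 2n$. Hence the sum of the individual maxima dominates the global maximum, giving $N\le\sum_{i=1}^{d\cdot 2^d}N_i$, and therefore
$$\{N_i\le M \text{ for all }i\}\ \subset\ \{N\le M\,d\,2^d\}.$$
\emph{(Symmetry.)} The initial condition $[-n,n]^d$, the truncation box $(-L,L)^d$, the dynamics $f$, and the law of the background $(\omega_1,\omega_2,\omega_3)$ are all invariant under the subgroup of $\Zd$-isometries generated by permutations of coordinates and sign-flips. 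This group acts transitively on the $d\cdot 2^d$ pieces, so $N_i$ has the same law as $N_+$ for every~$i$.

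It remains the FKG step, which I expect to be the main conceptual point. Encode the background as in the (omitted) proof of Lemma~\ref{FKG1}: let $\omega=(\omega_1,\omega_2,\omega_3)$ where $\omega_1^{e,n}=\1_{\{e\text{ closed for type 1 at step }n\}}$ (Bernoulli $1-p$), $\omega_2^{e,n}=\1_{\{e\text{ open for type 2 at step }n\}}$ (Bernoulli $q$), and $\omega_3^{x,n}=\1_{\{\text{birth of type 2 at }(x,n)\}}$ (Bernoulli $\alpha$). Under this encoding the coordinates are independent, and increasing $\omega$ can only reduce the set of sites occupied by type 1 particles in the truncated process; consequently each event $\{N_i\le M\}$ is increasing in $\omega$. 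By the FKG inequality and the symmetry noted above,
$$\P\!\left(\bigcap_{i=1}^{d\cdot 2^d}\{N_i\le M\}\right)\ \ge\ \prod_{i=1}^{d\cdot 2^d}\P(N_i\le M)\ =\ \P(N_+\le M)^{d\cdot 2^d}.$$
Combining this with the inclusion obtained from sub-additivity yields the desired inequality. The only delicate point is verifying that $\{N_i\le M\}$ is truly monotone under the chosen encoding; this rests on the fact that, for the truncated process, adding type-1 closures, type-2 openings, or type-2 births can only shrink $_L\eta_1^{[-n,n]^d,\Zd\backslash[-n,n]^d}$, and hence can only decrease each $N_i$.
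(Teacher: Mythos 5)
Your proof is correct and follows exactly the route the paper has in mind: the paper explicitly prefaces Lemma~\ref{FKG2} with ``Exactly as in Lemma~\ref{FKG1}, the FKG inequality and the symmetries of the process allow to control the number of colonized points in a prescribed orthant of a lateral face,'' and your decomposition into $d\cdot 2^d$ congruent pieces, sub-additivity of the separated-point counts, transitivity of the reflection-permutation group, and the FKG step with the $(\omega_1,\omega_2,\omega_3)$ encoding (closures for type 1, openings/births for type 2) are precisely the ingredients used in the (omitted) proof of Lemma~\ref{FKG1}.
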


With the previous lemmas in hand, we can now  prove that starting from a fully occupied square, the bacteria colonize with high probability a similar square on the top face and a similar square on the lateral faces of a large box. The orthant can even be prescribed: 

\begin{lemme}[Theorem 2.12 in Liggett~\cite{MR1717346}, Theorem 4.7 in~Steif--Warfheimer~\cite{MR2461788}]
\label{undemipas}
For every $\varepsilon>0$, there exist positive integers $n,L,T$, with $n \le N$ such that
\begin{eqnarray}
\P \left( \exists x \in [0,L)^d \quad _{L+2n}\eta_{1,T}^{[-n,n]^d, \Zd \backslash [-n,n]^d} \supset x+[-n,n]^d \right) & \ge & 1-\varepsilon; \label{etdeun} \\
\P \left( 
\begin{array}{c}\exists x \in \{L+n\}\times [0,L)^{d-1}, \; \exists t \in [0,T) \\ _{L+2n}\eta_{1,t}^{[-n,n]^d, \Zd \backslash [-n,n]^d} \supset x+[-n,n]^d
\end{array} \right) & \ge & 1-\varepsilon. \label{etdedeux}
\end{eqnarray}
\end{lemme}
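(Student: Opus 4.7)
The plan is to follow the classical Bezuidenhout--Grimmett block construction as exposed in Liggett~\cite{MR1717346} and adapted to dynamical environments in Steif--Warfheimer~\cite{MR2461788}. The overall strategy is: choose $n$ large, then find $(L,T)$ so that, with high probability, many bacteria points appear both on the top face $[-L,L]^d\times\{T\}$ and on the prescribed lateral face of the space-time box; use the FKG inequalities of Lemmas~\ref{FKG1}--\ref{FKG2} to localize them into prescribed orthants; then upgrade individual lit points into fully occupied source blocks via an independent local ``filling'' event.

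Concretely, I first pick $n$ large enough by Lemma~\ref{lem1} so that the truncated-process extinction probability is at most some small $\delta=\delta(\varepsilon)$. Lemma~\ref{lem2} then delivers the top-face count: for any $N$ and all $L$ sufficiently large compared to $T$, one has $\P(|_L\eta^{[-n,n]^d,\Zd\backslash[-n,n]^d}_{1,T}|\ge N)\ge 1-2\delta$. For the lateral face, I invoke Lemma~\ref{lem3}: the product $\P(N^{[-n,n]^d,\Zd\backslash[-n,n]^d}(L,T)\le M)\cdot\P(|_L\eta^{...}_{1,T}|\le N)$ has limsup bounded by the extinction probability. In the complementary regime where $T$ is large compared to $L$, the truncated bacteria population inside the bounded box is eventually overrun by the Bernoulli$(\alpha)$ immigration of type-2 particles, so $\P(|_L\eta^{...}_{1,T}|\le N)$ is close to $1$, and the product bound forces $\P(N^{...}(L,T)\ge M)\ge 1-2\delta$. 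Lemmas~\ref{FKG1} and~\ref{FKG2}, together with the reflection symmetries of the model, then restrict these abundant point sets into the prescribed orthants $[0,L)^d$ and $\{L+n\}\times[0,L)^{d-1}$, at the cost of factors $2^d$ and $d\,2^d$ in the counts.

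For each well-separated candidate point $(x,t)$ in the orthant, I introduce the local event $E(x,t)$ requiring every bacteria bond inside $(x+[-n,n]^d)\times[t,t+2n]$ to be open, every type-2 bond there to be closed, and no immigration inside: this event depends on only finitely many Bernoulli trials in a bounded space-time region, has positive probability $c=c(n,p,q,\alpha)>0$, and when it occurs forces $x+[-n,n]^d\subset{}_{L+2n}\eta^{...}_{1,t+2n}$. Since the candidate points are spatially (and, on the lateral face, temporally) separated by at least $2n+1$, the filling events $E(x_i,t_i)$ depend on disjoint bundles of Bernoulli variables and are independent; a union bound upgrades ``many candidate points'' into ``at least one fully occupied block'' with probability $\ge 1-\varepsilon$ on each face, provided $M$ and $N$ are taken large enough. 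Absorbing the extra filling time $2n$ into $T$ and enlarging the truncation from $L$ to $L+2n$ then gives both (\ref{etdeun}) and (\ref{etdedeux}).

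The main technical obstacle is the simultaneous validity of both estimates for a \emph{common} triple $(n,L,T)$: the top-face estimate requires $L$ large relative to $T$ (so the bacteria do not spill out of the truncated box by time~$T$), while the lateral-face estimate requires $T$ large relative to $L$ (so the bacteria have time to reach the lateral boundary and be displaced there by type-2 immigration). These regimes are compatible in the linear scaling $L\asymp T$ with proportionality constants exceeding the bacterial spread speed, using the monotonicity of $_L\eta^{...}_{1,T}$ in $L$ to enlarge the truncation parameter to a common value. The bookkeeping must also preserve the crucial property that every block event is measurable with respect to a bounded space-time region of the graphical representation, since this is precisely what the construction in Proposition~\ref{kpasBG} and the stochastic domination of Theorem~\ref{notreLSS} will rely on.
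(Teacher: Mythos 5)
Your overall outline (use Lemma~\ref{lem1} to choose $n$, Lemmas~\ref{lem2} and~\ref{lem3} to get many points on the top and lateral faces, Lemmas~\ref{FKG1} and~\ref{FKG2} to confine them to orthants, and independent local fill-in events to upgrade well-separated points to occupied $n$-blocks) is the right skeleton and matches the paper. But the crucial step — obtaining \emph{both} the top-face count and the lateral-face count at a common pair $(L,T)$ — is where your argument has a genuine gap, and it is precisely the subtle part of the Bezuidenhout--Grimmett construction.

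You correctly identify the tension: Lemma~\ref{lem2} produces a good top-face count when $L$ is large relative to $T$ (so the truncation is irrelevant), whereas forcing the truncated process to decline requires $T$ large relative to $L$ (which makes the top-face count collapse). You then assert that ``these regimes are compatible in the linear scaling $L\asymp T$ with proportionality constants exceeding the bacterial spread speed.'' This cannot work: if $L/T$ exceeds the spread speed, the bacteria have not reached the lateral boundary by time $T$, so $N^{[-n,n]^d,\ldots}(L,T)=0$ with high probability and the lateral-face estimate fails outright; if $L/T$ is below the spread speed, the truncated process has already started to die inside the box and the top-face count drops. There is no fixed linear scaling that does both, and the product inequality of Lemma~\ref{lem3} on its own only tells you that \emph{at least one} of the two counts is large, not both.

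What the paper actually does is a crossing-time argument. Having fixed $L_k$ so that the top-face count exceeds $2^d N$ at time $T_k$ with probability $\ge 1-2\delta$, one uses the fact that $\P\bigl(|_{L_k}\eta_{1,t}|>2^dN\bigr)\to 0$ as $t\to\infty$ to increase $T_k$ until the very next time step breaks the threshold, i.e.\ $\P\bigl(|_{L_k}\eta_{1,T_k+1}|\le 2^dN\bigr)>2\delta$. Plugging this lower bound into the product inequality of Lemma~\ref{lem3} (whose right-hand side is $\le 2\delta^2$ for $k$ large) divides out to $\P\bigl(N^{\ldots}(L_K,T_K+1)\le d2^dM\bigr)\le\delta$. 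Thus at time $T_K$ the top face is rich and at time $T_K+1$ the lateral face is rich, \emph{simultaneously}, with a single $(L_K,T_K)$. Your proof is missing exactly this device, and without it the claim that one can choose a common $(n,L,T)$ for both~\eqref{etdeun} and~\eqref{etdedeux} is unsubstantiated. The rest of your argument (Ramsey-type extraction of well-separated points, independence of the disjoint fill-in boxes, enlarging the truncation from $L$ to $L+2n$) is consistent with the paper and fine.
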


\begin{proof}
Let $\varepsilon>0$ and $\delta>0$ to be chosen later. \\
With Lemma~\ref{lem1}, we choose a positive integer $n$ such that 
\begin{equation}
\label{eqsurvie}
\P \left (\forall t \in \N \quad \eta_{1,t}^{[-n,n]^d, \Zd \backslash [-n,n]^d} \neq \varnothing \right) > 1 -\delta^2.
\end{equation} 
Choose an integer $N'$ such that
$$\left(1- \P\left( _n\eta_{1,2n+1}^{\{0\}, \Zd \backslash \{0\}} \supset [-n,n]^d\right) \right)^{N'} \le \delta,$$ and then $N$ such that every finite subset $A$ of $\Zd$ contains a subset $A'$ of $N'$ points such that
$$\forall x, y \in A' \quad \|x-y\|_\infty \ge 2n+1.$$
Choose an integer $M'$ such that
$$\left(1- \P\left( _{2n}\eta_{1,2n}^{\{0\}, \Zd \backslash \{0\}} \supset [0,2n]\times[-n,n]^{d-1}\right) \right)^{M'} \le \delta,$$ and then $M$ such that every finite subset $A$ of $\Zd$ contains a subset $A'$ of $M'$ points such that
$$\forall x, y \in A' \quad \|x-y\|_\infty \ge 2n+1.$$
As $1-2\delta<1-2 \delta^2< \mathbb P \left (\forall t \in \N \quad \eta_{1,t}^{[-n,n]^d, \Zd \backslash [-n,n]^d} \neq \varnothing \right)$, there exist with Lemma~\ref{lem2} two increasing sequences $(T_k)$ and $(L_k)$ such that
\begin{equation}
\label{choixT1L1}
\forall k \quad \P \left( |_{L_k}\eta_{1,T_k}^{[-n,n]^d, \Zd \backslash [-n,n]^d}| > 2^dN\right) \ge 1 -2 \delta.
\end{equation}
Since $\displaystyle \lim_{t \to +\infty} \P \left( |_{L_k}\eta_{1,t}^{[-n,n]^d, \Zd \backslash [-n,n]^d}| > 2^dN\right)=0$, by increasing $T_k$ if necessary, we can assume that the following extra inequality is fullfilled:
$$
\forall k \quad \P \left( |_{L_k}\eta_{1,T_k+1}^{[-n,n]^d, \Zd \backslash [-n,n]^d}| > 2^dN\right) < 1 -2 \delta.$$
With Lemma~\ref{lem3} and \eqref{eqsurvie}, there exist $K$ such that 
$$
\P  \left( \begin{array}{c}
N^{[-n,n]^d, \Zd \backslash [-n,n]^d}(L_K,T_K+1) \\
\le d2^dM 
\end{array}
\right) \P \left( \begin{array}{c}
|_{L}\eta_{1,T_K+1}^{[-n,n]^d, \Zd \backslash [-n,n]^d}| \\ \le 2^dN \end{array} \right)  
\le  2\delta^2,
$$
which implies
$$
\P  \left( \begin{array}{c}
N^{[-n,n]^d, \Zd \backslash [-n,n]^d}(L_K,T_K+1) \\
> d2^dM 
\end{array} \right)  
 \ge   1 - \frac{2\delta^2}{1-\P \left( 
\begin{array}{c}
|_{L_K}\eta_{1,T_K+1}^{[-n,n]^d, \Zd \backslash [-n,n]^d}|  \\
> 2^dN
\end{array} \right) }
\ge 1-\delta.
$$
With lemmas~\ref{FKG1} and \ref{FKG2}, we obtain
\begin{eqnarray*}
\P \left( |_{L_K}\eta_{1,T_K}^{[-n,n]^d, \Zd \backslash [-n,n]^d}\cap [0,L)^d| > 2^dN\right) & \ge &  1 -(2 \delta)^{2^{-d}}, \\
\P  \left( N_+^{[-n,n]^d, \Zd \backslash [-n,n]^d}(L_K,T_K+1) > d2^dM \right)  &\ge & 1-\delta^{2^{-d}/d}.
\end{eqnarray*}
Using the fact that edges and sites in disjoint areas are independent, this leads to
 \begin{eqnarray*}
 \P \left( \begin{array}{c}
\exists x \in [0,L)^d \\ _{L_K+2n}\eta_{1,T_K+1+2n}^{[-n,n]^d, \Zd \backslash [-n,n]^d} \supset x+[-n,n]^d 
\end{array}
\right) & \ge & (1 -(2 \delta)^{2^{-d}})(1-\delta), \\
\P \left( \begin{array}{c}
\exists x \in \{L_K+n\}\times [0,L)^{d-1}, t \in [0,T_K+1+2n) \\
 _{L_K+2n}\eta_{1,t}^{[-n,n]^d, \Zd \backslash [-n,n]^d} \supset x+[-n,n]^d 
\end{array} \right) & \ge & (1-\delta^{2^{-d}/d})(1-\delta),
\end{eqnarray*}
which ends the proof.
\end{proof}

Using successively \eqref{etdeun} and \eqref{etdedeux} with the help of an appropriate stopping time, we then get:  

\begin{lemme}[Proposition 2.20 in Liggett~\cite{MR1717346}, Lemma 4.8 in~Steif--Warfheimer~\cite{MR2461788}]
\label{unpas}
For every $\varepsilon>0$, there exist positive integers $n,L,T$, with $n \le N$ such that
$$\P \left( 
\begin{array}{c}
\exists x \in [L+n,2L+n]\times [0,2L)^{d-1}, t \in [T,2T) \\
 _{L+3n}\eta_{1,t}^{[-n,n]^d, \Zd \backslash [-n,n]^d} \supset x+[-n,n]^d 
 \end{array}\right)\ge 1-\varepsilon.$$
\end{lemme}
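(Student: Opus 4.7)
The plan is to chain two applications of Lemma~\ref{undemipas}. Fix $\varepsilon>0$ and set $\varepsilon'=1-\sqrt{1-\varepsilon}$, so that $(1-\varepsilon')^2\ge 1-\varepsilon$. Apply Lemma~\ref{undemipas} with parameter $\varepsilon'$ to obtain integers $n<L$ and $T$ for which both \eqref{etdeun} and \eqref{etdedeux} hold with probability at least $1-\varepsilon'$. The strategy is: use \eqref{etdeun} to reach at time $T$ some fully occupied block $X+[-n,n]^d$ with $X\in[0,L)^d$; then, starting from that block, apply a space-time translate of \eqref{etdedeux} to reach at some time $t\in[T,2T)$ a fully occupied block $Y+[-n,n]^d$ with $Y-X\in\{L+n\}\times[0,L)^{d-1}$. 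Since $X\in[0,L)^d$, the vector $Y$ then lies in $[L+n,2L+n]\times[0,2L)^{d-1}$, which is the target.

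More precisely, let $\mathcal{G}_T$ be the $\sigma$-algebra generated by the Bernoulli variables with time coordinate in $\{1,\dots,T\}$ and define the $\mathcal{G}_T$-measurable random variable
\[
X=\min\bigl\{x\in[0,L)^d:\;x+[-n,n]^d\subset{}_{L+3n}\eta_{1,T}^{[-n,n]^d,\Zd\backslash[-n,n]^d}\bigr\}
\]
(lexicographic minimum, with $X=\infty$ if no such $x$ exists). By \eqref{etdeun} and the obvious monotonicity $L'\mapsto{}_{L'}\eta_{1,t}^{A,\Zd\backslash A}$ (enlarging the truncation can only enlarge the bacteria population), we have $\P(X\ne\infty)\ge 1-\varepsilon'$. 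On the event $\{X\ne\infty\}$, consider the subprocess $(\tilde\eta_t)_{t\ge T}$ starting at time $T$ from $\tilde\eta_T=X+[-n,n]^d$ and truncated to the box $X+(-L-2n,L+2n)^d$. This subprocess uses only the Bernoulli variables with time coordinate in $\{T+1,\dots,2T\}$, hence is conditionally independent of $\mathcal{G}_T$ given $X$. Translation invariance and \eqref{etdedeux} applied to a space-time shift of this subprocess then yield, with conditional probability at least $1-\varepsilon'$, the existence of $Y\in X+\{L+n\}\times[0,L)^{d-1}$ and $t\in[T,2T)$ with $\tilde\eta_t\supset Y+[-n,n]^d$.

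It remains to transfer the containment from $\tilde\eta$ to the ambient truncated process ${}_{L+3n}\eta_{1,\cdot}^{[-n,n]^d,\Zd\backslash[-n,n]^d}$. Using the common graphical construction, at time $T$ the subprocess is contained in the ambient process by the very definition of $X$; for later times, the truncation defining $\tilde\eta$ is at least as restrictive as the ambient one on the portion of space where $\tilde\eta$ evolves (every site outside the subprocess's box is frozen to state $2$, which is worse for type~1 than whatever happens in the ambient process), so the inclusion propagates by the standard monotonicity comparison between competition processes. Combining the two steps with their conditional independence gives the bound $(1-\varepsilon')^2\ge 1-\varepsilon$.

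The main technical obstacle is precisely the monotonicity/coupling step above: because $X$ can be anywhere in $[0,L)^d$, the subprocess's truncation box $X+(-L-2n,L+2n)^d$ may protrude beyond $(-L-3n,L+3n)^d$, so one must either enlarge the ambient truncation constant in the lemma's statement (replacing $L+3n$ by something like $2L+3n$) or verify that only the values of $\tilde\eta$ inside the intersection of the two truncation boxes are relevant for the event we want. Once this careful bookkeeping of truncation boxes is settled, everything else is the classical Bezuidenhout--Grimmett restart argument, and the only remaining arithmetic is checking that $X+\{L+n\}\times[0,L)^{d-1}\subset[L+n,2L+n]\times[0,2L)^{d-1}$ and that the time window $[T,2T)$ matches the one provided by \eqref{etdedeux} after a shift by~$T$.
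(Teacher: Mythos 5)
Your approach --- use \eqref{etdeun} to reach, at time $T$, a fully occupied $n$-block at a $\mathcal G_T$-measurable random position $X\in[0,L)^d$, then restart from $X+[-n,n]^d$ and apply a space-time translate of \eqref{etdedeux} to push the block sideways --- is exactly what the paper means by ``using successively \eqref{etdeun} and \eqref{etdedeux} with the help of an appropriate stopping time.'' The arithmetic ($Y-X\in\{L+n\}\times[0,L)^{d-1}$ with $X\in[0,L)^d$ gives $Y\in[L+n,2L+n)\times[0,2L)^{d-1}$, and $t\in T+[0,T)=[T,2T)$), the conditional independence between the Bernoulli slabs $\{1,\dots,T\}$ and $\{T+1,\dots,2T\}$, and the two monotonicities you invoke (in the truncation box and in the initial configuration) are all correctly used.

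You are also right to be suspicious of the stated truncation constant $L+3n$, but of your two proposed workarounds only the first is viable. The second (``only the intersection of the truncation boxes matters'') cannot save the statement as written: when $L>2n$, which is the regime produced by Lemma~\ref{undemipas} (where $n$ is fixed before $L$ is taken large), a large part of the target region $[L+n,2L+n]\times[0,2L)^{d-1}$ lies \emph{outside} the ambient box $(-(L+3n),L+3n)^d$, where $_{L+3n}\eta$ is frozen to state $2$, so those target blocks are simply unreachable by the ambient truncated process. The constant therefore must be enlarged: anything at least $2L+2n$ (so that both $[-(L+2n),L+2n]^d$ and its translate by any $X\in[0,L)^d$ sit inside the ambient box) suffices, and your $2L+3n$ works. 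This change is harmless for the downstream Lemmas~\ref{unpasBGbis} and~\ref{kpasBG}, which only need the truncation to be finite and eventually absorbed into the $[-5a,5a]^d$ box. In short: the lemma as printed has a (likely typographical) error in the truncation constant, and your proof, with the corrected constant, is the intended one.
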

Next, 
\begin{lemme}[Proposition 2.20 in Liggett~\cite{MR1717346}, Lemma 4.9 in~Steif--Warfheimer~\cite{MR2461788}] 
\label{unpasBGbis}
Let $\varepsilon>0$. There exist $n, a,b \in \N$ with $n<a$ such that for every $(x,t) \in [-a,a]^d \times [0,b]$, each $u \in V$, 
we can define random variables $Y(u,x,t) \in \Zd$ and $S(u,x,t) \in \N$ such that
\begin{itemize}
  \item $Y(u,x,t)\in 2au+[-a,a]^d$;
  \item $S(u,x,t)\in [5b,6b]\cup\{+\infty\}$;
  \item $y+2au+[-n,n]^d\subset \eta_{1,s-t}^{x+[-n,n]^d,\Zd\backslash ( x+[-n,n]^d )} \circ \theta_{t}$ on the event $$\{Y(u,x,t)=y,S(u,x,t)=s\};$$
  \item $\P_{p,q,\alpha}(S(u,x,t)<+\infty)\ge 1-\epsilon$;
  \item The event $\{Y(u,x,t)=y,S(u,x,t)=s\}$
  belongs to the $\sigma$-algebra generated by the background random variables related to the space-time area $[-5a,5a]^d\times [t,s]$.
\end{itemize}
\end{lemme}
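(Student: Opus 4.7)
The plan is to derive Lemma~\ref{unpasBGbis} from Lemma~\ref{unpas} by combining translation invariance in space and time, the symmetries of the model under reflections and permutations of the coordinate axes, and the locality already built into the truncated process ${}_{L+3n}\eta$ of Lemma~\ref{unpas}. The three additional features to produce are the floating source $(x,t)\in[-a,a]^d\times[0,b]$, the direction $u\in V$, and the precise measurability window $[-5a,5a]^d\times[t,s]$.

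For $u\neq 0$, by the reflection and permutation symmetries of the dynamics it suffices to treat $u=e_1$. I would apply Lemma~\ref{unpas} with some small accuracy $\varepsilon'$ to obtain integers $n,L,T$ and an event of probability at least $1-\varepsilon'$ on which some box $z+[-n,n]^d$ with $z\in[L+n,2L+n]\times[0,2L)^{d-1}$ is contained in ${}_{L+3n}\eta_{1,t'}^{[-n,n]^d,\Zd\setminus[-n,n]^d}$ for some $t'\in[T,2T)$. Translating in space by $x$ and in time by $t$ yields an analogous event producing a full block at $x+z+[-n,n]^d$ at absolute time $t+t'$, depending only on background variables inside $(x+[-L-3n,L+3n]^d)\times[t,t+2T+2n]$. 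I would then choose $a$ large enough compared with $L+3n$ that $x+[-L-3n,L+3n]^d\subset[-5a,5a]^d$ and $x+z\in 2ae_1+[-a,a]^d$ for every $x\in[-a,a]^d$, and choose $b$ of the correct order that $t+t'$ can be forced into $[5b,6b]$ after composing, if necessary, with an extra waiting step built from iterates of Lemma~\ref{undemipas}~(\ref{etdeun}) to keep a full block alive during the padding period. Setting $Y(u,x,t)$ and $S(u,x,t)$ equal to the position and time of the first (say lexicographically smallest) colonised target on the success event, and $S=+\infty$ otherwise, the total failure probability can be kept below $\varepsilon$ provided $\varepsilon'$ was chosen small enough.

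The case $u=0$ is handled by the same scheme, replacing the Lemma~\ref{unpas} jump by iterates of Lemma~\ref{undemipas}~(\ref{etdeun}) to maintain a block $y+[-n,n]^d$ with $y\in[-a,a]^d$ across the required time window. The main obstacle is the scale bookkeeping: the choice of $a,b$ must simultaneously guarantee $Y(u,x,t)\in 2au+[-a,a]^d$, $S(u,x,t)\in[5b,6b]$, and measurability of the event inside $[-5a,5a]^d\times[t,s]$, uniformly over the source $(x,t)$ and the direction $u$. Once these compatibility constraints are translated into explicit inequalities between $n,L,T$ and $a,b$, they are easily met by taking $a,b$ sufficiently large, which is essentially the Bezuidenhout--Grimmett scheme as presented in Liggett~\cite{MR1717346} and in Steif--Warfheimer~\cite{MR2461788} adapted to the present model with its additional type~$2$ dynamics.
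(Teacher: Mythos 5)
Your reduction to essentially a single application of Lemma~\ref{unpas} (plus time-padding via Lemma~\ref{undemipas}~(\ref{etdeun})) cannot produce the conclusion. Consider $u=e_1$. A single move of Lemma~\ref{unpas} shifts the block centre by $z\in[L+n,2L+n]\times[0,2L)^{d-1}$, where $L,T,n$ are fixed by the accuracy $\varepsilon'$; the starting point $x$ is an arbitrary element of $[-a,a]^d$. For $x+z$ to lie in $2ae_1+[-a,a]^d=[a,3a]\times[-a,a]^{d-1}$ \emph{for every} $x$, you would need $-a+(L+n)\ge a$ and $a+(2L+n)\le 3a$ simultaneously, i.e.\ $L+n\ge 2a$ and $2L+n\le 2a$, which forces $L\le 0$. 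The lateral coordinates have the same defect: $x_i\in[-a,a]$, $z_i\in[0,2L)$, so $x_i+z_i$ spills outside $[-a,a]$ as soon as $L\ge 1$. Making $a$ much larger than $L$ (as you suggest) does not help, because then one move reaches only a negligible fraction of the required displacement $\approx 2a$. The same objection applies to $u=0$: iterating Lemma~\ref{undemipas}~(\ref{etdeun}) drifts the block into $[0,L)^d$ at each step and you have no mechanism to pull it back.

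What is actually needed — and what the paper does, invoking the Bezuidenhout--Grimmett scheme as in Liggett, Prop.~2.20, and Steif--Warfheimer, Lemma~4.9 — is to chain \emph{several} applications of Lemma~\ref{unpas} (roughly 4 to 10, depending on $(x,t)$ and $u$), choosing at each step which reflection/permutation of the one-step move to use \emph{adaptively}, as a function of the random landing point of the previous move, so that the accumulated primary and lateral drifts are steered into $2au+[-a,a]^d$. This requires the strong Markov property at the random entrance time of each move, and monotonicity to absorb the spatial overlaps between the truncation windows of consecutive moves; the parameters $a,b$ are then chosen of order $L$ and $T$ respectively so that the sum of 4--10 moves lands in $[5b,6b]\times(2au+[-a,a]^d)$, and the union of the shifted truncation windows stays inside $[-5a,5a]^d$. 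Your sketch nods to ``the Bezuidenhout--Grimmett scheme'' at the end, but the construction you actually describe omits the adaptive chaining that is the whole content of that scheme, and as written it is not repairable by bookkeeping alone.
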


\begin{proof}
The idea is to use the previous lemma (or a reflected version of it) between 4 and 10 times. Note that we use the strong Markov property (to use independence of background random variables associated to disjoint time intervals) and the 
monotonicity to rule out the spatial dependencies.
\end{proof}

\begin{proof}[Proof of Proposition~\ref{kpasBG}]
Using appropriate stopping times and monotonicity in the background process, we can use the previous lemma $k$-times repeatedly.
\end{proof}



\subsubsection{Dependent macroscopic Percolation }$ $
\label{dmp}
Note $T=5b$. For $\bar{n_0} \in \Zd$, $x_0\in [-a,a[^d$, $t_0\in [0,b]$ and $u \in \Zd$ such that $\|u\|_1\le1$, we define $A(\bar{n_0},u,x_0,t_0)=\{S(\bar{n_0},u,x_0,t_0)<+\infty\}$ and 
$\Psi(\bar{n_0},u,x_0,t_0)=(S(\bar{n_0},u,x_0,t_0),Y(\bar{n_0},u,x_0,t_0))\in \N \times \Zd$.

We will first, from the events defined in the preceding  subsection, build a field $({}^{\bar{n_0}}W^n_{(\bar{k},u)})_{n \ge 0, \bar{k} \in \Zd, \|u\|_1\le1}$.
The idea is to construct an oriented percolation on the bonds of $\Edo\times\N^*$, looking for the realizations, floor by floor, of translates of good events of type $A(.)$. We start at time $t_0=0$ from an area centered at $0$ in the box with coordinate $\bar{n_0}$; for each $u$ such that $\|u\|_1\le 1$, say that the bond between $(\bar{n_0},0)$ and $(\bar{n_0}+u, 1)$ is open if $A(\bar{n_0},u,0,0)$ holds and obtain an exit area centered at $Y(\bar{n_0},\bar{n_0}+u,0,0)$; all bonds in this floor that are issued from another point than $\bar{n_0}$ are open. Then we move to the upper floor: for a box $(\bar{y},1)$, look if it contains exit points of bonds that were open at the preceding step. If it is the case, we choose one of these, denoted by $d_1(\bar{y})$ and open the bond between $(\bar{y},1)$ and $(\bar{y}+u, 2)$ if $A(\bar{y},u,d_1(\bar{y}),0)\circ \theta_{T}$ happens, and close it otherwise; in the other case we open all bonds issued from that box, and so on for every floor.

Precisely, we let $d_0(\bar{y})=0$ for each $\bar{y}\in\Zd$, $t_0(\bar{n_0})=0$,    and also $t_0(\bar{y})=+\infty$ for every $\bar{y} \in\Zd$ that differs from $0$.
Then, for each $\bar{y}\in\Zd$, each $u \in \Zd$ such that $\|u\|_1\le1$ and for each  $n\ge 0$, we recursively define:
\begin{itemize}
\item If $t_n(\bar{y})=+\infty$, ${}^{\bar{n_0}}W^{n+1}_{(\bar{y},u)}=1$.
\item Otherwise,  ${}^{\bar{n_0}}W^{n+1}_{(\bar{y},u)}  = \1_{\{S(\bar{y},u,d_n(\bar{y}),t_n(\bar{y}))<+\infty\}} \circ \theta_{nT}$,
\end{itemize}
then
$$(t_{n+1}(\bar{y}),d_{n+1}(\bar{y}))=\min\left\{
\begin{array}{c}
\Psi(\bar{y}+u,-u,d_n(\bar{y}+u),t_n(\bar{y}+u))\circ \theta_{nT}: \\
\|u\|_1\le 1, \; t_n(\bar{y}+u)\ne+\infty
\end{array}
\right\}.$$
 To specify what ``min'' means, choose the smallest $t$ in the natural order, and then the smallest $s$ in the lexical order. If the set is empty, we consider that the min is $(+\infty,0)$
Then, $(t_{n+1}(\bar{y}),d_{n+1}(\bar{y}))$ represents the relative position of the entrance area for the ${}^{\bar{n_0}}W^{n+1}_{(\bar{y},u)}$'s, with $\|u\|_1\le1$. 

Note that $nT+t_{n+1}(\bar{y})$ is a $(\mathcal{F}_k)_{k\ge 0}$-stopping time.


\vspace{0.2cm}
It is know time to put the pieces together: now take $M=2$ and choose $q_0<1$ such that $g_M(q_0)>\pcdir$ and $q_0$ satisfies the conclusion of corollary~\ref{lineairegamma} with $M=2$.

Using Proposition~\ref{kpasBG} with $1-\epsilon=q_0$ and $k>7$, one can 
build an oriented percolation process $({}^{\bar{n_0}}W^n_{(\bar{k},u)})_{n \ge 0, \bar{k} \in \Zd, \|u\|_1\le1}$ . Among open bonds, only those  corresponding to the realization of  good events are relevant for  the propagation of type $1$ particles. Let us note however that the percolation cluster starting at  $\bar{n_0}$ only contains bonds that are effectively used by the process.

Let us denote by $\chi^{\bar{n_0}}$ the law of the field $({}^{\bar{n_0}}W^n_{(\bar{k},u)})_{n \ge 0, \bar{k} \in \Zd, \|u\|_1 \le 1}$ under $\P_{p,q,\alpha}$.

\begin{lemme}
\label{champsCDMQ} We can choose the construction parameters $a,b,n$ of Proposition~\ref{kpasBG} such that 
$\chi^{\bar{n_0}}$ belongs to $\mathcal{C}_d(M,q_0)$.
\end{lemme}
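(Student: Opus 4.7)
I would verify the two clauses defining $\mathcal{C}_d(M,q_0)$ for $M=2$: the measurability $W^n_e \in \mathcal{G}_n$, and the conditional lower bound
$$\P\bigl[W^{n+1}_e = 1 \bigm| \mathcal{G}_n \vee \sigma(W^{n+1}_f : d(e,f)\ge 2)\bigr] \ge q_0.$$
The natural choice of filtration is $\mathcal{G}_n = \sigma(W^m_e,\; t_m(\bar{y}),\; d_m(\bar{y}) : m\le n,\; \bar{y}\in\Zd,\; e\in\Edo)$, suitably enlarged so that the strong Markov property can be applied at the $\mathcal{F}$-stopping times $nT + t_n(\bar{y})$ highlighted in Subsection~\ref{dmp}. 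Measurability of $W^n_e$ is then immediate from the recursive definition of the construction.

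For the conditional lower bound I would treat two cases. On the $\mathcal{G}_n$-measurable event $\{t_n(\bar{y}) = +\infty\}$, the bond $W^{n+1}_{(\bar{y},u)}$ equals $1$ deterministically. On the complementary event $\{t_n(\bar{y}) < +\infty\}$ I would condition on the $\mathcal{G}_n$-measurable pair $(t_n(\bar{y}), d_n(\bar{y}))$, apply the strong Markov property at $nT + t_n(\bar{y})$ to reduce to a fresh, independent slice of background, and appeal to Proposition~\ref{kpasBG} (calibrated with $1-\epsilon = q_0$) to conclude $\P[W^{n+1}_{(\bar{y},u)} = 1 \mid \mathcal{G}_n] \ge q_0$.

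It remains to explain how to absorb the extra conditioning by the far-away bonds $W^{n+1}_f$ with $d(e,f) \geq 2$. Proposition~\ref{kpasBG} tells us that the event $\{S < +\infty\}$ is measurable with respect to the background variables in a spatial tube of radius $5a$ around the microscopic segment $[2ka\bar{y},\,2ka(\bar{y}+u)]$. For two macroscopic edges $e = (\bar{y},u)$ and $f = (\bar{y}',u')$ at distance $d(e,f)\ge 2$, the midpoints $2ka(\bar{y}+u/2)$ and $2ka(\bar{y}'+u'/2)$ of the associated microscopic segments are at $\ell^1$-distance at least $4ka$; parametrising an arbitrary point on either segment as a perturbation of the midpoint and applying the triangle inequality yields a lower bound of order $ka$ on the microscopic distance between the two segments, which exceeds the combined tube widths once $k$ is chosen large enough (the construction's choice $k>7$, re-adjusted as needed according to the dimension, is sufficient). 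Independence of background Bernoulli variables over disjoint spatial regions then gives the conditional independence of $W^{n+1}_e$ from $\{W^{n+1}_f : d(e,f) \ge 2\}$ given $\mathcal{G}_n$, and together with the preceding step this yields the desired conditional lower bound.

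The main technical subtlety I foresee is the temporal bookkeeping: the event underlying $W^{n+1}$ occupies a time interval of order $kT$, longer than the nominal macroscopic time-step $T$, so the information regions of different macroscopic levels overlap in time. The argument therefore has to rely on the spatial disjointness (for edges at the same macroscopic level) combined with the strong Markov property at the random stopping times $nT + t_n(\bar{y})$ (for comparison with the ``past'' $\mathcal{G}_n$), and the enlarged filtration $\mathcal{G}_n$ has to be defined so that these two mechanisms compose properly.
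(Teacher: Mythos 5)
Your overall strategy matches the paper's: verify measurability and the conditional lower bound, appeal to the strong Markov property at the random entry times $nT+t_n(\bar y)$, invoke Proposition~\ref{kpasBG} calibrated with $1-\epsilon=q_0$, and use spatial disjointness of the information tubes to absorb the extra conditioning by far bonds. Your disjointness estimate is correct and quantitatively aligned with the construction: midpoints at microscopic $\ell^1$-distance $\ge 4ka$, end-point deviations $\le ka$ on each segment, so a gap of order $2ka$ against tube radius $O(a)$, which is indeed why the construction takes $k>7$.

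The gap is in the step you label ``suitably enlarged.'' The entire technical content of the lemma lies in producing an explicit filtration $(\mathcal G_n)$ for which both clauses of $\mathcal C_d(M,q_0)$ can actually be checked, and your plan does not produce one. The paper does not take the $\sigma$-field generated by the macroscopic variables $(W^m_e,t_m,d_m)_{m\le n}$; it defines, for each macroscopic site $\bar y$, a $\sigma$-field $\mathcal F'^{\bar y}_t$ of background variables over the slab $\Zd\times[0,nT]$ together with the narrow column $(2ka\bar y+[-7a,7a]^d)\times[nT,t]$, and sets $\mathcal G_n=\bigvee_{\bar y}\mathcal F'^{\bar y}_{nT+t_{n+1}(\bar y)}$, i.e.\ a join of stopping-time $\sigma$-fields each cut at its own $\bar y$-dependent random time. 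This specific shape is what reconciles measurability of level-$n$ data with freshness of the level-$(n+1)$ background, precisely because the entry/exit times are random (unlike the small-$\alpha$ case where a deterministic slab $\mathcal F_{nT}$ suffices) and because the columns are spatially disjoint so the join does not leak information across sites. Your final paragraph correctly diagnoses this tension but leaves it open, so the plan as written does not close the proof. A secondary imprecision: the information region of $W^{n+1}_e$ has temporal length $\approx 5kb$, which is the macroscopic time step itself (the $T$ used in $\theta_{nT}$ is the full $k$-block duration), not $k$ times the step; the issue is not massive temporal overlap but that the interfaces between consecutive levels occur at random, $\bar y$-dependent times, which is exactly what forces the stopping-time construction of $\mathcal G_n$.
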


\begin{proof}
Let $\bar{y}\in\Zd$. For $t=nT+r$ with $0\le r<T$, define
$\mathcal{F}'^{\bar y}_t$ as the   $\sigma$-field
generated by the background variables related to the space-times area
$$(\Zd\times [0,nT])\cup    \left([-7a+2ka\bar{y},  7a+2ka\bar{y}]      \times [nT,nT+r]\right)$$
Note that $nT+t_{n+1}(\bar{y})$ is a $(\mathcal{F}'^{\bar{y}}_k)_{k\ge 0}$-stopping time.
We define  $$\mathcal{G}_n=\miniop{}{\vee}{\bar{y}\in\Zd} \mathcal{F}'^{\bar{y}}_{nT+t_{n+1}(\bar{y})}.$$
It is not difficult to see that ${}^{\bar{n_0}}W^{n+1}_{(\bar{y},u)}$ is $\mathcal{G}_{n+1}$-measurable. It is more subtle to see that the identity
$$\P[{}^{\bar{n_0}}W^{n+1}_e=1|\mathcal{G}_n\vee \sigma({}^{\bar{n_0}}W^{n+1}_f, \; d(e,f)\ge 2)]=\P[{}^{\bar{n_0}}W^{n+1}_e=1|\mathcal{G}_n]\ge q_0$$
holds for each $n\ge 0$ and each $e \in \Edo$.
As in the proof of Theorem~\ref{croissancedesuns} for small $\alpha$, space-time areas do not overlap too much, only generating local dependence. This is a classical argument. But space and time do not play the same role. While local spatial dependence is not a big deal, time dependence is strictly forbidden.
This condition was for free in the proof of Theorem~\ref{croissancedesuns} for small $\alpha$, because the temporal height of boxes was deterministic.
This is no longer the case, then we can not use straight boxes and must have a finer control of the travel map of the infection, apply Lemma~\ref{unpas} several times, not juste once. Then, we ensure that the variables that define the state of a bond $e$ at time $n$ do not have information about what will happen to another bond at time $n+1$.  
\end{proof}

\subsubsection{From macroscopic to microscopic scale}$ $

The proof of Theorem~\ref{croissancedesuns} falls into two parts
\begin{itemize}
\item First prove that if the epidemy survives, then points not far from $x$ will often be occupied at a reasonable time.
\item Then deduce that $x$ itself will be hit  at a reasonable time.
\end{itemize}

The first part can be formalized as follows.

\begin{lemme}
\label{souvent-pres}Let $E\subset\Zd\backslash\{0\}$.
There exists $a\in\N$ and positive constants $C_1,C_2,A,B$ such that if one defines $R^a_n(x)$ with $n\in\N$ and $x\in\Zd$ by 
$R^a_0(x)=0$ and
$$R^a_i(x)=\inf\{t\ge R^a_{i-1}; \exists y\in x+[-a,a]^d; y\in\eta_{1,t}^{0,E}\},$$
then we have
\begin{equation}
\label{pourlagrosse}
\forall x\in\Zd\quad\forall n\ge 0\quad\P_{p,q,\alpha}(\tau_1^{0,E}=+\infty,R^a_n(x)\ge C_1\|x\|+C_2n)\le Ae^{-Bn}.
\end{equation} 
\end{lemme}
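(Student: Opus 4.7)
The plan is to combine the macroscopic renormalization from Subsection~\ref{dmp} with the coupling of Theorem~\ref{notreLSS} and the density estimate of Corollary~\ref{lineairegamma}, then transfer the conclusion back to the microscopic scale. I fix the parameters $a_*, b, n_0$ of Proposition~\ref{kpasBG} used to construct the macroscopic field $({}^{\bar{n}_0}W)$ and set $T = 5b$; by Lemma~\ref{champsCDMQ}, the law of this field belongs to $\mathcal{C}_d(M,q_0)$ with $g_M(q_0) > \pcdir(d+1)$.

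First I would run, from the configuration $(\{0\}, E)$, the restart scheme already used for Theorem~\ref{croissancedesuns} in the small-$\alpha$ regime. Combined with Lemma~\ref{restartabstrait}, this produces on $\{\tau_1^{0,E} = +\infty\}$ a microscopic random time $\tau_*$ and a macroscopic cell $\bar{z}_*$ with $\|\bar{z}_*\|_1 \le \tau_*/T$, such that at time $\tau_*$ the bacteria fully occupy a source block in $\bar{z}_*$ and the associated macroscopic percolation survives forever, with $\tau_*$ having exponential moments. Applying Theorem~\ref{notreLSS} to the shifted field produces a further shift $(T', \bar{D})$ with $\|\bar{D}\|_1 \le T'$, $T'$ of exponential moments, and an independent supercritical oriented percolation $\xi$ of parameter $g_M(q_0)$, coupled so that open paths in $\xi$ from $\bar{z}_* + \bar{D}$ are realized as occupied macroscopic cells of $W$ after time $\tau_* + TT'$.

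Next I set $a = 5 a_*$, so that whenever a macroscopic cell $\bar{y}$ is reached by $W$, the entrance block of microscopic radius $n_0$ (with its floating offset inside $[-a_*, a_*]^d$) is contained in $2a_*\bar{y} + [-a,a]^d$. For $x \in \Zd$, I choose $\bar{y}(x)$ with $\|2a_*\bar{y}(x) - x\|_1 \le a_*$ so that $2a_*\bar{y}(x) + [-a_*,a_*]^d \subset x + [-a,a]^d$. Corollary~\ref{lineairegamma} applied to $\xi$ provides constants $\theta,\beta_0,A_0,B_0 > 0$ such that
\begin{equation*}
\P\bigl( +\infty > \gamma(\theta,\bar{z}_*+\bar{D},\bar{y}(x)) > \beta_0\|\bar{y}(x) - \bar{z}_* - \bar{D}\|_1 + m \bigr) \le A_0 e^{-B_0 m}.
\end{equation*}
By the very definition of $\gamma$, on the complement of this event, for every $k \ge \gamma + n/\theta$ there are at least $n$ distinct macroscopic times $j \in \{0,\dots,k\}$ with $(\bar{z}_*+\bar{D},0) \to (\bar{y}(x),j) \to \infty$ in $\xi$; by the coupling, each such $j$ produces a macroscopic visit of the cell $\bar{y}(x)$ by $W$, hence a microscopic visit of bacteria to $x+[-a,a]^d$ at some time of the form $\tau_* + jT + O(b)$, and distinct $j$'s produce distinct visit times.

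Putting these together gives, on the high-probability event just described,
\begin{equation*}
R^a_n(x) \le \tau_* + T\bigl(\beta_0\|\bar{y}(x) - \bar{z}_* - \bar{D}\|_1 + m + n/\theta\bigr) + O(b).
\end{equation*}
Since $\|\bar{y}(x)\|_1 \le \|x\|_1/(2a_*) + 1$ and $\|\bar{z}_* + \bar{D}\|_1 \le \tau_*/T + T'$, choosing $C_1 = T\beta_0/(2a_*) + \epsilon$ and $C_2 = T/\theta + \epsilon$ forces $\{\tau_1^{0,E}=+\infty, R^a_n(x) \ge C_1\|x\|_1 + C_2 n\}$ to entail that at least one of $\tau_*$, $T'$, $m$ is of order $n$, all of which have exponentially decaying tails. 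The main technical obstacle is bookkeeping the floating spatial and temporal offsets $(t_n,d_n)$ built into the macroscopic construction of Subsection~\ref{dmp}: one must enlarge $a$ to a constant multiple of $a_*$ so that the floating entrance block always fits in $x+[-a,a]^d$, and verify that the asynchronous micro-times $\tau_* + jT + O(b)$ obtained from distinct macroscopic times $j$ remain strictly ordered — both of which are guaranteed by the measurability and geometric constraints in Proposition~\ref{kpasBG}.
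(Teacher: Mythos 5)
Your proposal follows essentially the same route as the paper's proof: run a restart to find a space–time point from which a macroscopic percolation (built via Proposition~\ref{kpasBG}, Lemma~\ref{champsCDMQ} and Subsection~\ref{dmp}) survives forever, use the density-of-bi-infinite-points estimate for that percolation, and transfer back to the microscopic scale. The only cosmetic difference is that you unpack Theorem~\ref{notreLSS} explicitly to obtain an independent coupled percolation $\xi$ and then apply the density bound to $\xi$, whereas the paper applies Corollary~\ref{lineairegamma} directly to the (conditioned) macroscopic field $\chi^{\bar Y}(\cdot\,|\,\bar\tau_{\bar Y}=+\infty)$, letting the $\{+\infty>\gamma\}$ clause in Corollary~\ref{lineairegamma} absorb the survival conditioning; both routes are equivalent since Corollary~\ref{lineairegamma} is itself proved through Theorem~\ref{notreLSS}.
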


Thanks to our tools for dependent oriented percolation, Lemma~\ref{souvent-pres} will appear as a consequence of Proposition~\ref{kpasBG}. 
But first show how  Lemma~\ref{souvent-pres} implies the Theorem:
\begin{proof}[Proof of Theorem~\ref{croissancedesuns}]
Let $E\subset \Zd\backslash\{0\}$, fix $x \in \Zd$ and define
$T'_0=0$ and for $i\ge 1$
$$T_i=T_i(x)=\inf\{t\ge T'_{i-1}; \exists y\in x+[-a,a]^d; y\in\eta_{1,t}^{0,E}\}\text{ and }T'_{i}=T_i+a+1.$$
Consider the event $B=\{\forall y\in x+[-a,a]^d; \exists t\le a; x\in \eta_{1,t}^{y,\Zd\backslash\{y\}}\}$
and also, for $n\ge 1$
$$A_n=\miniop{n}{\cap}{i=0}\{T_i<+\infty, \theta^{-T_i}(B^c)\}$$
Note that by construction, $\theta^{-T_i}(B)$ is $\mathcal{F}_{T_{i+1}}$-measurable, so
$$\P_{p,q,\alpha}(A_n|\mathcal{F}_{T_n})=\1_{A_{n-1}\cap \{T_n<+\infty\}}\P_{p,q,\alpha}(B^c).$$
It is easy to see that $\P(B)\ge c$ for some $c$ that does not depend on $x$.
It follows that $\P_{p,q,\alpha}(A_N)\le (1-c)^n$ for each $n\ge 1$.
Note that the sequence $(T_k(x))_{k\ge 1}$ does not consider all infections around $x$, but
it is not difficult to see that $T_{n}(x)\le R^a_{(a+1)n}(x)$.
So, Lemma~\ref{souvent-pres} gives
$$\P_{p,q,\alpha}(\tau_1^{0,E}=+\infty, t(x)\ge C_1\|x\|+C_2(a+1)n+a+1)\le Ae^{-B(a+1)n}+(1-c)^{n},$$
which is  Theorem~\ref{croissancedesuns}.
\end{proof}

It remains to prove Lemma~\ref{souvent-pres}.

\begin{proof}[Proof of Lemma~\ref{souvent-pres}]
Note that the events in Equation~\eqref{pourlagrosse} and  in Corollary~\ref{lineairegamma} control the density of times where a point (or a neighborood of a point) is occupied.

Using the events that are described in Proposition~\ref{kpasBG}, we are going to exhibit (after a restart procedure), a macroscopic percolation that satisfies the assumptions of  Corollary~\ref{lineairegamma}. This will prove Equation~\eqref{pourlagrosse}, hence the lemma.

Assume that $\tau^{0,E}_{1}=+\infty$. Take $M=2$ and choose $q_0<1, \theta, \beta$ such that $g_M(q_0)>\pcdir$ and $q_0$,$\theta$,$\beta$ satisfy the conclusion of corollary~\ref{lineairegamma} with $M=2$. By Lemma~\ref{champsCDMQ}, we  can choose the parameters $a,b,n$ in Proposition~\ref{kpasBG} to ensure that the distribution of the macroscopic oriented percolation is in $\mathcal C_d(M,q_0)$.

Then, using the events of Proposition~\ref{kpasBG}, the construction of subsection~\ref{dmp} and Theorem~\ref{notreLSS}, a restart argument gives the existence of some $(Y,T)\in\Zd\times\N$ such that 
\begin{itemize}
\item $Y+[-2a,2a]^d\subset \eta^{0,E}_{1,T}$;
\item for every $k \ge 1$, $\|Y\|\le T\le k$ with probability at least $1-Ae^{-Bk}$;
\item  a macroscopic oriented percolation $({}^{\bar{Y}}W^n_{(\bar{x},u)})_{n \ge 0, \bar{x} \in \Zd, \|u\|_1\le1}\circ \theta_T$  which almost surely survives starts from $Y+[-2a,2a]^d$ at time $T$. More precisely, the distribution of the field
$({}^{\bar{Y}}W^n_{(\bar{x},u)})_{n \ge 0, \bar{x} \in \Zd, \|u\|_1\le1} \circ \theta_T$
is $\chi^{\overline{Y}}(.|\overline{\tau}_{\overline{Y}}=+\infty)$. Remember that $\chi^{\bar{n_0}}$ is the law of the field $({}^{\bar{n_0}}W^n_{(\bar{k},u)})_{n \ge 0, \bar{k} \in \Zd, \|u\|_1\le1}$ under $\P_{p,q,\alpha}$, which has been defined in Subsection~\ref{dmp}.
\end{itemize}
Then,  Lemma~\ref{lineairegamma} says that $\gamma(\theta,\overline{Y},\overline{x})\le \beta \|\overline{x}-\overline{Y}\|+k$ with probability at least $1-Ae^{-Bk}$, where $\overline{x}$ and $\overline{Y}$ respectively stand for the coordinates of macroscopic blocks containing $x$ and $Y$.

By the very definition of $\gamma()$, we have
$$R_k^a(x) \le T+\frac{6b}\theta \max\left( \gamma(\theta,\overline{Y},\overline{x})\circ \theta_T,k \right).$$
This leads to
$$\P_{p,q,\alpha}(R_k^a(x) \le k +\frac{6b}\theta (\beta (\|\overline{x}\|+k)+k)) \ge 1-2Ae^{-Bk},$$
which concludes the proof.
\end{proof}



\def\refname{References}
\bibliographystyle{plain}

\begin{thebibliography}{}

\end{thebibliography}


\begin{thebibliography}{10}

\bibitem{MR1071804}
Carol Bezuidenhout and Geoffrey Grimmett.
\newblock The critical contact process dies out.
\newblock {\em Ann. Probab.}, 18(4):1462--1482, 1990.

\bibitem{MR968822}
Maury Bramson and Rick Durrett.
\newblock A simple proof of the stability criterion of {G}ray and {G}riffeath.
\newblock {\em Probab. Theory Related Fields}, 80(2):293--298, 1988.

\bibitem{MR2353388}
Erik~I. Broman.
\newblock Stochastic domination for a hidden {M}arkov chain with applications
  to the contact process in a randomly evolving environment.
\newblock {\em Ann. Probab.}, 35(6):2263--2293, 2007.

\bibitem{MR1144730}
R.~Durrett.
\newblock Multicolor particle systems with large threshold and range.
\newblock {\em J. Theoret. Probab.}, 5(1):127--152, 1992.

\bibitem{MR1094080}
R.~Durrett and A.~M. M{\o}ller.
\newblock Complete convergence theorem for a competition model.
\newblock {\em Probab. Theory Related Fields}, 88(1):121--136, 1991.

\bibitem{MR757768}
Richard Durrett.
\newblock Oriented percolation in two dimensions.
\newblock {\em Ann. Probab.}, 12(4):999--1040, 1984.

\bibitem{MR1241034}
Richard Durrett and Rinaldo Schinazi.
\newblock Asymptotic critical value for a competition model.
\newblock {\em Ann. Appl. Probab.}, 3(4):1047--1066, 1993.

\bibitem{MR1117232}
Rick Durrett.
\newblock The contact process, 1974--1989.
\newblock In {\em Mathematics of random media (Blacksburg, VA, 1989)},
  volume~27 of {\em Lectures in Appl. Math.}, pages 1--18. Amer. Math. Soc.,
  Providence, RI, 1991.

\bibitem{MR1383122}
Rick Durrett.
\newblock Ten lectures on particle systems.
\newblock In {\em Lectures on probability theory ({S}aint-{F}lour, 1993)},
  volume 1608 of {\em Lecture Notes in Math.}, pages 97--201. Springer, Berlin,
  1995.

\bibitem{MR1091691}
Rick Durrett and Glen Swindle.
\newblock Are there bushes in a forest?
\newblock {\em Stochastic Process. Appl.}, 37(1):19--31, 1991.

\bibitem{GM-contact}
Olivier Garet and R{\'e}gine Marchand.
\newblock Asymptotic shape for the contact process in random environment.
\newblock {\em Ann. Appl. Probab.}, 22(4):1362--1410, 2012.

\bibitem{GM-contact-gd}
Olivier Garet and R{\'e}gine Marchand.
\newblock Large deviations for the asymptotic shape of the contact process in
  $\mathbb{Z}^d$.
\newblock {\em The Annals of Probability}, to appear.

\bibitem{MR1068308}
G.~R. Grimmett and J.~M. Marstrand.
\newblock The supercritical phase of percolation is well behaved.
\newblock {\em Proc. Roy. Soc. London Ser. A}, 430(1879):439--457, 1990.

\bibitem{MR1717346}
Thomas~M. Liggett.
\newblock {\em Stochastic interacting systems: contact, voter and exclusion
  processes}, volume 324 of {\em Grundlehren der Mathematischen Wissenschaften
  [Fundamental Principles of Mathematical Sciences]}.
\newblock Springer-Verlag, Berlin, 1999.

\bibitem{LSS}
T.M. Liggett, R.H. Schonmann, and A.M. Stacey.
\newblock Domination by product measures.
\newblock {\em Ann. Probab.}, 25:71--95, 1997.

\bibitem{MR1711599}
Torgny Lindvall.
\newblock On {S}trassen's theorem on stochastic domination.
\newblock {\em Electron. Comm. Probab.}, 4:51--59 (electronic), 1999.

\bibitem{MR1176502}
Xiaolong Luo.
\newblock The {R}ichardson model in a random environment.
\newblock {\em Stochastic Process. Appl.}, 42(2):283--289, 1992.

\bibitem{MR2474541}
Daniel Remenik.
\newblock The contact process in a dynamic random environment.
\newblock {\em Ann. Appl. Probab.}, 18(6):2392--2420, 2008.

\bibitem{MR2461788}
Jeffrey~E. Steif and Marcus Warfheimer.
\newblock The critical contact process in a randomly evolving environment dies
  out.
\newblock {\em ALEA Lat. Am. J. Probab. Math. Stat.}, 4:337--357, 2008.

\bibitem{MR0177430}
V.~Strassen.
\newblock The existence of probability measures with given marginals.
\newblock {\em Ann. Math. Statist.}, 36:423--439, 1965.

\bibitem{MR1267569}
Daniel~W. Stroock.
\newblock {\em Probability theory, an analytic view}.
\newblock Cambridge University Press, Cambridge, 1993.

\end{thebibliography}


\end{document}